\definecolor{RED}{RGB}{255,0,0}
\newcommand{\satt}{3{\sf SAT}_3\xspace}
\newcommand{\NP}{\mathcal{NP}}
\newcommand{\coNP}{\mathcal{\sf coNP}}
\newcommand{\poly}{\mathcal{\sf poly}}
\newcommand{\bigO}{\mathcal{O}}
\newcommand{\bicp}{\mathcal{\sf BIC}}
\newcommand{\bimp}{\mathcal{\sf BIM}}
\newcommand{\bmp}{\mathcal{\sf BM}}
\newcommand{\cmp}{\mathcal{\sf CM}}
\newcommand{\tmp}{\mathcal{\sf TM}}
\newcommand{\rmp}{\mathcal{\sf RM}}
\newcommand{\ocmp}{\mathcal{\sf OCM}}
\newcommand{\ecd}{\mathcal{\sf ECD}}
\newcommand{\mdecd}{\mathcal{\sf MDECD}}
\newcommand{\acd}{\mathcal{\sf ACD}}
\newlength{\RoundedBoxWidth}
\newsavebox{\GrayRoundedBox}
\newenvironment{GrayBox}[1]%
   {\setlength{\RoundedBoxWidth}{.93\textwidth}
    \def\boxheading{#1}
    \begin{lrbox}{\GrayRoundedBox}
       \begin{minipage}{\RoundedBoxWidth}}%
   {   \end{minipage}
    \end{lrbox}
    \begin{center}
    \begin{tikzpicture}%
       \node(Text)[draw=black!20,fill=white,rounded corners,%
             inner sep=2ex,text width=\RoundedBoxWidth]%
             {\usebox{\GrayRoundedBox}};
        \coordinate(x) at (current bounding box.north west);
        \node [draw=white,rectangle,inner sep=3pt,anchor=north west,fill=white]
        at ($(x)+(6pt,.75em)$) {\boxheading};
    \end{tikzpicture}
    \end{center}}
\newenvironment{defproblemx}[2][]{\noindent\ignorespaces%
                                \FrameSep=6pt%
                                \parindent=0pt%
                \vspace*{-1.5em}
                \ifthenelse{\isempty{#1}}{%
                  \begin{GrayBox}{\textsc{#2}}%
                }{%
                  \begin{GrayBox}{\textsc{#2} parameterized by~{#1}}%
                }
                \newcommand\Input{Input:}%
                \begin{tabular*}{\textwidth}{@{\hspace{.1em}} >{\itshape} p{1.8cm} p{0.8\textwidth} @{}}%
            }{
                \end{tabular*}%
                \end{GrayBox}%
                \ignorespacesafterend
            }
\newcommand{\defproblemaOPT}[3]{%
  \begin{defproblemx}{#1}
    {\bf Instance:}  & #2 \\
    {\bf Goal:} & #3
  \end{defproblemx}
}%
\newtheorem{construction}{Construction}
  \definecolor{mid-green}{rgb}{0.15,0.65,0.15}
 \definecolor{dark-green}{rgb}{0.15,0.25,0.15}
 \definecolor{dark-red}{rgb}{0.7,0.15,0.15}
 \definecolor{dark-blue}{rgb}{0.15,0.15,0.9}
 \definecolor{medium-blue}{rgb}{0,0,0.5}
 \definecolor{gray}{rgb}{0.5,0.5,0.5}
 \definecolor{color-Ig}{rgb}{0.15,0.7,0.15}
 \definecolor{darkmagenta}{rgb}{0.30, 0.0, 0.30}
 \definecolor{blue}{rgb}{0.15,0.15,0.9}
\renewcommand{\NP}{{\sf NP}\xspace}
\newcommand{\FPT}{{\sf FPT}\xspace}
\newcommand{\XP}{{\sf XP}\xspace}
\newcommand{\adj}{{\sf Adj}\xspace}
\begin{document}

\mainmatter  

\title{On the Complexity of the Median and Closest Permutation Problems
}

\titlerunning{\textit{On the Complexity of the Median and Closest Permutation Problems}}

\author{ 
Lu\'is Cunha\inst{1}
\and 
Ignasi Sau\inst{2}
\and
U\'everton Souza\inst{1}
}
\authorrunning{Cunha, Sau, and Souza}
\institute{
$^1$ Instituto de Computa\c{c}\~ao, Universidade Federal Fluminense, Brasil \\ 
$^2$ LIRMM, Universit\'e de Montpellier, CNRS, France\\
\email{\{lfignacio, ueverton\}@ic.uff.br}, 
\email{\{ignasi.sau\}@lirmm.fr}
}

\toctitle{Lecture Notes in Computer Science}

\maketitle

\begin{abstract}
Genome rearrangements are events where large blocks of DNA exchange places during evolution. The analysis of these events is a promising tool for understanding evolutionary genomics, providing data for phylogenetic reconstruction based on genome rearrangement measures. 
Many pairwise rearrangement distances have been proposed, based on finding the minimum number of rearrangement events to transform one genome into the other, using some predefined operation. 
When more than two genomes are considered, we have the more challenging problem of rearrangement-based phylogeny reconstruction. 
Given a set of genomes and a distance notion, there are at least two natural ways to define the ``target'' genome. On the one hand, finding a genome that minimizes the sum of the distances from this to any other, called the \emph{median genome}. On the other hand, finding a genome that minimizes the maximum distance to any other, called the \emph{closest genome}. 
Considering genomes as permutations of distinct integers, some distance metrics have been extensively studied. 
We investigate the median and closest problems on permutations over the following metrics: 
\emph{breakpoint} distance, \emph{swap} distance, \emph{block-interchange} distance, \emph{short-block-move} distance, and \emph{transposition} distance.
In biological applications some values are usually very small, such as the solution value $d$ or the number $k$ of input permutations. 
For each of these metrics and  parameters $d$ or $k$, we analyze the closest and the median problems from the viewpoint of parameterized complexity. 
We obtain the following results:
\NP-hardness for finding the median/closest permutation regarding some metrics of distance, even for only $k = 3$ permutations;
Polynomial kernels for the problems of finding the median permutation of all studied metrics, considering the target distance $d$ as parameter; 
\NP-hardness result for finding the closest permutation by short-block-moves; 
\FPT algorithms and infeasibility of polynomial kernels for finding the closest permutation for some metrics when parameterized by the target distance $d$.
\end{abstract}

\keywords{median problem, closest problem, genome rearrangements, parameterized complexity, polynomial kernel.}

\section{Introduction}
\label{sec:intro}

Ancestral reconstruction is a classic task in comparative genomics. This problem is based on consensus word analysis, with a vast applicability~\cite{cunha2013advancing,fertin2009combinatorics,pevzner2000computational}.
Genome rearrangement problems study large-scale mutations on a set of DNAs in living organisms, and have been studied extensively in computational biology and computer science fields for decades. 
From a mathematical point of view, a genome is represented by a permutation. 
Based on that, as proposed by Watterson et al.~\cite{watterson1982chromosome}, a genome rearrangement problem is interpreted as transforming one permutation into another by a minimum number of operations depending on the possible allowed rearrangements, i.e., the chosen metric. 
This model considers the following assumptions: the order of genes in each genome is known; all genomes we compare share the same gene set; all genomes contain a single copy of each gene; and all genomes consist of a single chromosome. 
So, genomes can be modeled as permutations, once each gene is encoded as an integer. 
 
Finding the minimum number of operations is equivalent to sorting the permutation with a given rearrangement. 
Many metrics received attention in recent years, and among the studied distances or sorting problems the following are the most natural ones. 
The \emph{breakpoint} distance is the number of consecutive elements in one permutation that are not consecutive in another
one. Note that on the breakpoint distance we do not apply any operation to transform a permutation into another one. 
The \emph{block-interchange} operation transforms one permutation into another one by exchanging two blocks, and generalizes the \emph{transposition} operation, where the blocks are restricted to be consecutive. 
A \emph{swap} operation is a block-interchange whose blocks are unitary. 
A \emph{short-block-move} operation is a transposition whose blocks have at most three elements. 
Concerning the computational complexity with respect to these metrics, the corresponding problems for breakpoint distance, sorting by block-interchanges, and sorting by swap can be solved in polynomial time~\cite{christie1998genome,fertin2009combinatorics}, sorting by transpositions is \NP-complete~\cite{bulteau2012sorting}, and the complexity of sorting by short-block-moves is still an open problem. 
Some restrictions or generalizations of the presented metrics have been considered and algorithmic aspects of sorting problems have been developed~\cite{labarre2020,radcliffe2005reversals}. 



If an input has more than two genomes, there are many approaches in regard to finding ancestral genomes~\cite{bader2011transposition,caprara2003reversal,cunha2020computational,cunha2019genome,haghighi2012medians,pe1998median}. 
A relevant one is the 
{\sc Median} problem, 
where, for a metric $M$, the goal is to find a solution genome that minimizes the sum of the distances between the solution and all the input genomes.

\defproblemaOPT{Metric $M$ Median}
{A set $S$ of genomes.}
{A genome that minimizes the sum of the distances, according to metric~$M$, between the solution genome and all other genomes of $S$.}

The \textsc{Breakpoint Median} problem is \NP-hard~\cite{pe1998median} for a general input. 
The \textsc{Reversal Median} and  \textsc{Transposition Median} problems are \NP-hard even when the input consists of three permutations~\cite{bader2011transposition,caprara2003reversal}. 
Prior to this work, the complexity of \textsc{Block-interchange Median} was not known.
The same applies to the \textsc{Median}
problem regarding swap or short-block-move operations.

Haghighi and Sankoff~\cite{haghighi2012medians} observed that, with respect to the breakpoint metric, a tendency for medians is to fall on or to be close to one of the input genomes, which contain no useful information for the phylogeny reconstruction. They also conjectured the same behavior concerning other metrics.
Hence, an alternative approach is to consider the {\sc Closest} problem for a fixed metric $M$,
which aims to find a genome that minimizes the maximum distance to any genome in the input, which can be seen as finding a genome
in the center of all others, i.e., a genome corresponding to the radius of the input set. 

\defproblemaOPT{Metric $M$ Closest}
{A set $S$ of genomes.} 
{A genome that minimizes the maximum distance, according to metric~$M$, between the solution genome and any other genome in $S$.}

Lanctot et al.~\cite{lanctot2003distinguishing} studied the \textsc{Closest} problem considering strings regarding the Hamming distance, and settled that this problem is \NP-hard even for binary strings. 
Popov~\cite{popov2007multiple} studied the {\sc Closest} problem considering permutations regarding the swap operation, and showed that it is \NP-hard. 
Cunha et al.~\cite{cunha2020computational} showed that the \textsc{Closest} problem is \NP-hard for several well-known genome rearrangement
distances, such as the breakpoint and the block-interchange ones. 

The parameterized complexity (see Appendix~\ref{sec:preliminariesParameterized} for the basic definitions, which can also be found in~\cite{cygan2015parameterized}) of the \textsc{Median} and  \textsc{Closest} problems has been studied mostly regarding strings on an alphabet $\Sigma$. 
These problems, and some variations, have been considered with respect to parameters that are combinations of $k$, $d$, $|\Sigma|$, and $n$, where $n$ is the length of the strings. 
Gramm et al.~\cite{gramm2003fixed,gramm2001exact} investigated the \textsc{Closest String} problem on binary strings considering some parameters, and 
showed how to solve it in linear time for fixed $d$ (the exponential growth in $d$ is bounded by $O(d^d)$), and  
when $k$ is fixed and $d$ is arbitrary. 
Fu et al.~\cite{fu2007msoar} developed a polynomial kernelization parameterized by the distance $d$ and the number of strings $k$, of size $O(k^2d\log k)$.
Basavaraju et al.~\cite{basavaraju2018kernelization} presented a comprehensive study of \textsc{Closest String} and some related problems from the kernelization complexity perspective, and showed that \textsc{Closest String} parameterized by the distance $d$ and the length of the strings $n$ does not admit a polynomial kernel under a standard complexity assumption. 

Considering the input genomes as permutations, some few results are known, such as the fact that the {\sc Swap Closest} problem is \FPT when parameterized by the number of input permutations and the solution radius~\cite{popov2007multiple}. 
On the other hand,  the {\sc Transposition Median} problem parameterized by the number of input permutations is para-\NP-hard, since Bader proved that it is \NP-hard even if the input consists of three permutations~\cite{bader2011transposition}. 
To the best of our knowledge, a multivariate investigation of the parameterized complexity of computing the median/closest genome by the considered metrics on permutations has not been thoroughly studied in the literature. 
Therefore, our goal is to map sources of computational tractability for both consensus problems ({\sc Median} and {\sc Closest}) defined above, and consequently identify features that make it tractable through the lenses of metrics over permutations and the parameterized complexity. 


\paragraph{Our contribution.} In this article we obtain the following results:

\begin{itemize}
    \item[$\bullet$] In \autoref{sec:median}, we develop polynomial kernels for finding median permutations considering swap, breakpoint, block-interchange, transposition, and short-block-move operations, all of them parameterized by the target distance $d$. This result is in sharp contrast with the fact that, as we have also managed to prove, for most of the above metrics the problem of finding the closest permutation does not admit a polynomial kernel parameterized by $d$.

    \item[$\bullet$] In \autoref{sec:median}, we prove \NP-hardness of \textsc{Block-interchange Median}, even for only $k = 3$ permutations. 
    Based on that, in \autoref{sec:closest} we are able to reduce \textsc{Block-interchange Median} to \textsc{block-interchange Closest}, as well to {\sc transposition closest}, even for only $k = 3$ permutations. 

    \item[$\bullet$] In \autoref{sec:closest}, we prove \NP-hardness of \textsc{Short-block-moves Closest}. Since it is still an open question to decide whether the sorting problem by short-block-moves can be solved in polynomial time, it is natural to consider the ``closest'' version of the problem that, somehow surprisingly, had not been considered in related previous work~\cite{cunha2020computational}. 
    
    \item[$\bullet$] In \autoref{sec:closest}, we also provide \FPT algorithms for the 
    \textsc{Closest} problem parameterized by the target distance $d$, for some of the above metrics. Our approach is inspired from \FPT algorithms for \textsc{Closest String} (see~\cite{cygan2015parameterized}).

\end{itemize}

The above results provide an accurate picture of the parameterized complexity of the considered problems with respect to the parameters $k$ and $d$. \autoref{tab:overview} summarizes the results obtained in this paper, considering only $d$ as the parameter, and the remaining open questions.

\paragraph{Organization.} Due to space limitations, some of the contents of the paper have been moved to the appendices. In \autoref{sec-prelim} we provide some preliminaries on genome rearrangements and introduce the notation used in the subsequent sections. In Appendix~\ref{sec:Operacoes} we provide a  detailed explanation on rearrangement operations, associated graphs, and bounds on the distances we deal with. In \autoref{sec:median} (resp. \autoref{sec:closest}) we present our results for the {\sc Closest} (resp. {\sc Median}) problems, and the deferred proofs from these sections can be found in Appendix~\ref{ap:median-problem} (resp. Appendix~\ref{ap:closest-problems}). For the sake of readability, the results whose proofs are in the appendices are marked with `$(\star)$'.



\begin{table}[htbp]
\centering
\begin{footnotesize}
\begin{tabular}{|c||p{2cm}|p{2cm}|p{2cm}|p{2cm}|p{2cm}|p{2cm}|}
\hline
 & Swap & Block-\-inter\-chan\-ge & Short-Block-Moves &  Transposition & Breakpoint \\
\hline


{\sc Median} &  &  &  &  &  \\

Par. d & poly kernel & poly kernel & poly kernel & poly kernel & poly kernel \\


\hline 

{\sc Closest} & \FPT & \FPT & \FPT & ??  & ?? \\

Par. d & no poly kernel & no poly kernel & no poly kernel & no poly kernel & no poly kernel \\



\hline
\end{tabular}
\end{footnotesize}
\medskip
\caption{Overview on the results obtained in this paper comparing \textsc{Median} and \textsc{Closest} problems parameterized by $d$. Open questions are represented by ??.\label{tab:overview}}
\end{table}



\section{Preliminaries on genome rearrangements}
\label{sec-prelim}

\emph{Genome rearrangements} are events where large blocks of DNA exchange places during evolution. For some genome rearrangement models, we may consider genomes as strings or permutations. 
An \emph{alphabet} $\Sigma$ is a nonempty set of letters, and a \emph{string} over $\Sigma$ is a finite sequence of letters of $\Sigma$. 
The \emph{Hamming distance of two strings} $s$ and $s'$ of the same length,  denoted by $d_H(s,s')$, is the number of mismatched positions between $s$ and~$s'$. 
The \emph{Hamming distance of a string} $s$ of length~$n$, denoted by $d_H(s)$, is the Hamming distance of $s$ and $\iota = 0^n$. 

A \emph{permutation} of length $n$ is a particular string with a unique occurrence of each letter. In other words, it is a bijection from the set $\{1, 2, \ldots, n\}$ onto itself $\pi=[\bm{\pi(0)} \,\pi(1) \, \pi(2) \, \cdots \, \pi(n) \allowbreak \bm{\pi(n+1)}]$, such that $\bm{\pi(0)}=0$ and $\bm{\pi(n+1)}=n+1$. 
The operations we consider will never act on $\bm{\pi(0)}$ nor $\bm{\pi(n+1)}$ but are important to define graphs used to determine bounds on the distances, as discussed later.
Similarly to the above, given a metric $M$ and two permutations $\pi$ and $\sigma$ of the same length, we define $d_M(\pi,\sigma)$ as their distance with respect to metric $M$, and the \emph{distance of a permutation} $\pi$ of length $m$, denoted by $d_M(\pi)$, is the distance between $\pi$ and the \emph{identity permutation} $\iota = [\bm{0}\ 1\ 2\ \cdots\ n\ \bm{n~+~1}]$.

\subsection{Sorting by rearrangement operations}
\label{sec:sorting-rearrangement}

Note that the distance between two permutations can also be called as the sorting of a permutation, once when permutations $\pi$ and $\sigma$ are given, one can relabel $\sigma$ to be equal to $\sigma\sigma^{-1} = \iota$ and then the distance between $\pi$ and $\sigma$ is the same as sorting $\pi\sigma^{-1}$, i.e., the distance between $\pi\sigma^{-1}$ and $\iota$. 
Hence, along this paper we may use interchangeably distance or sorting problem. 
Block-interchange
generalizes a transposition and generalizes also a swap operation. 
Nevertheless, with respect to the distance problem, general operations do not imply the same computational complexity of more particular operations. 
For instance, concerning the block-interchange distance, it can be computed in polynomial time~\cite{christie1998genome}, whereas computing the transposition distance is \NP-hard~\cite{bulteau2012sorting}, and computing the swap distance is a polynomial problem, as discussed later. 
On the other hand, if a distance problem is \NP-hard, then the \textsc{Median}/\textsc{Closest} problems for the same operation are also \NP-hard. 
Indeed, this follows by considering an input set of permutations consisting of two permutations $\pi, \iota$ such that $\pi\neq \iota$, and asking for a permutation with distance at most $d$ for each, for a metric $M$. 
Then, we can see that distance as the \textsc{Closest} problem with particular instances. 

\subsection{Relationship between sorting and median/closest problems}

Given a set of $k$ permutations, each of length $n$, we can store these permutations as a $k\times n$ matrix. 
 The columns of this matrix are called the \emph{columns} of the set of permutations, which are the elements in a same position over the $k$ permutations. 
For convenience, we denote by $S$ the input matrix and by $s\in S$ a permutation of the instance.


\paragraph*{Median problems.}
Caprara~\cite{caprara2003reversal} proved that the {\sc Reversal Median} problem ($\rmp$)
is \NP-complete by the following strategy. 
It begins with the {\sc Eulerian Cycle Decomposition} problem ($\ecd$), which consists in, given an Eulerian graph, find a partition of its edges into the maximum number of cycles. 
The $\ecd$ problem was proved to be \NP-complete by Holyer~\cite{holyer1981np}.
First, Caprara reduced $\ecd$ to the {\sc Alternating cycle decomposition} problem ($\acd$), which is the problem of finding a maximum cycle decomposition of a \emph{reality and desire diagram diagram} (defined in Appendix~\ref{sec:Operacoes}). 
Then, he reduced $\acd$ to the {\sc Cycle Median} problem ($\cmp$), 
which is the problem of finding a permutation that maximizes the sum of the number of cycles in the reality and desire diagram of a given set of three permutations.
Finally, Caprara reduced $\cmp$ to the {\sc Reversal Median} problem. 
In summary, he proved $\ecd \leq_p \cmp \leq_p \rmp$. 


In 2011, just before it was proved that {\sc Sorting by Transpositions} is \NP-hard~\cite{bulteau2012sorting}, Bader~\cite{bader2011transposition} proved the \NP-completeness of the {\sc Transposition Median} problem ($\tmp$ for short). 
It was based on the following definition: 
given three input permutations $\pi^1, \pi^2, \pi^3$, find a permutation $\sigma$ such that $\sum_{i=1}^{3}d_{\sf T}(\sigma, \pi^i)$ is minimized, where $d_{\sf T}(\sigma, \pi^i)$ is the transposition distance between $\sigma$ and $\pi^i$.


Bader~\cite{bader2011transposition} proved the hardness as an adaptation of Caprara's reductions considering reversals. 
This adaptation was done by reducing $\satt \leq_p \mdecd \leq_p \ocmp \leq_p \tmp$, 
where $\mdecd$ is the {\sc Marked Directed Eulerian Cycle Decomposition} 
problem, proved \NP-hard by Bader~\cite{bader2011transposition} and defined as follows. 
Let $k$ be an integer, let $G=(V,E)$ be a directed graph, and let $E_k \subseteq E$ be a subset of its edges with $|E_k|=k$. 
The edges in $E_k$ are called the marked edges of $G$. 
$(G,E_k) \in \mdecd$ if and only if $E(G)$ can be partitioned into edge-disjoint cycles such that each marked edge is in a different cycle. 
$\ocmp$ denotes the {\sc Odd Cycle Median} problem 
defined as follows. 
Let $\pi^1, \pi^2, \pi^3$ be permutations of $\{1, \cdots ,n\}$ and let $k$ be an integer. 
Then, $(\pi^1, \pi^2, \pi^3, k) \in \ocmp$ if and only if there is a permutation $\sigma$ with $\sum_{i=1}^{3} c_{\sf odd}(G(\sigma,\pi^i)) \geq k$ (because it is known that $d_{\sf T}(\pi) \geq \frac{(n+1)-c_{\sf odd}(G(\pi))}{2}$, where $c_{\sf odd}(G(\pi))$ is the number of odd cycles in the reality and desire diagram of $\pi$,  see~\autoref{Tdistance} in Appendix~\ref{sec:Operacoes}). 
Solving an $\ocmp$ instance is equivalent to finding a permutation matching $M(\sigma)$ such that $\sum_{i=1}^{3} c_{\sf odd}(G(\sigma,\pi^i))$ is maximized. 
This sum is also called the solution value of $M(\sigma)$. 
$\tmp$ was proved to be \NP-hard by a transformation from any instance $\sigma'$ that maximizes $\sum_{i=1}^{3} c_{\sf odd}(G(\sigma',\pi^i))$ to an instance that minimizes $\sum_{i=1}^{3}d_{\sf T}(\sigma', \pi^i)$. 
This could be done by ensuring that the distance between $\sigma'$ and each $\pi^i$ achieves the lower bound of~\autoref{Tdistance}. 

In order to examine $\tmp$, 
the \emph{multiple reality and desire diagram}\footnote{Also called \emph{multiple breakpoint graph} in~\cite{bader2011transposition}, but not called in this way here so as not to confuse with breakpoint distances we also deal with.} 
was used in~\cite{bader2011transposition,caprara2003reversal}. 
Given the permutations $\pi^1, \cdots, \pi^q$ each one with length~$n$, the \emph{multiple reality and desire diagram} $MG(\pi^1, \cdots, \pi^q) = (V,E)$ is a multigraph with $V = \{0, -1, +1, \cdots, -n, +n, -(n+1)\}$ and $E = M(\pi^1) \cup \cdots \cup M(\pi^q)$, i.e., the edge set is formed by the union of all permutation matchings of the permutations. 

$\mdecd$ is \NP-hard even when the degree of all nodes is bounded by four. 
Furthermore, this result still holds for graphs $G=(V,E)$ that $|V| + |E| - k$ is odd, where $k$ is the number of marked edges. 
Based on that, Bader described a polynomial transformation from $G$ being an instance of $\mdecd$ to an $MG$. 
Hence, it is necessary to guarantee conditions on graphs to be a multiple reality and desire diagram $MG$. 
To this end, we have the following properties.

\begin{lemma}[Caprara~\cite{caprara2003reversal}]\label{lm:base}
Let $V^t$ and $V^h$ be two disjoint node sets, and let $G’ = (V^t \cup V^h, M^1 \cup \cdots \cup M^q)$ be a graph, where each $M^i$ is a perfect matching, each edge in $M^i$ has color $i$, and each edge connects a node in $V^t$ with a node in $V^h$. 
Furthermore, let $H$ be a perfect matching such that each edge in $H$ connects a node in $V^t$ with a node in $V^h$, and $H \cup M^i$ defines a Hamiltonian cycle of $G'$ 
for $1 \leq i \leq q$. 
Then, there exist permutations $\pi^1, \cdots, \pi^q$ such that $G’$ is isomorphic to the $MG(\pi^1, \cdots, \pi^q)$.
\end{lemma}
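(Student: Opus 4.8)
The plan is to prove \autoref{lm:base} constructively: given the abstract graph $G'=(V^t \cup V^h, M^1 \cup \cdots \cup M^q)$ together with the auxiliary perfect matching $H$, I would exhibit permutations $\pi^1, \ldots, \pi^q$ of length $n := |V^t| = |V^h| - 1$ whose multiple reality and desire diagram is isomorphic to $G'$. The guiding idea is that in $MG(\pi^1, \ldots, \pi^q)$ the role of the ``desire'' structure (the matching of $\iota$, namely $\{(+i, -(i+1)) : i = 0, \ldots, n\}$) is played uniformly by a single matching shared by all permutations, while each $\pi^i$ contributes its own ``reality'' matching $M(\pi^i)$; the hypothesis that $H \cup M^i$ is a Hamiltonian cycle for every $i$ is exactly the combinatorial shadow of the fact that the desire edges of $\iota$ together with the reality edges of any single permutation form one cycle on $2(n+1)$ vertices. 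So $H$ will be mapped onto the common desire matching and each $M^i$ onto the reality matching $M(\pi^i)$.

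First I would set up the isomorphism on vertices. Since $H \cup M^1$ is a Hamiltonian cycle of $G'$, fix an orientation of this cycle and read off its vertices in order; because $H$ and $M^1$ alternate along it and every edge goes between $V^t$ and $V^h$, the cycle has even length $2(n+1)$ and alternates $V^h, V^t, V^h, V^t, \ldots$ up to the choice of starting point. I would choose the starting vertex to be the (as yet unlabeled) vertex that will become $0$, and then label the vertices around the cycle as $0, -1, +1, -2, +2, \ldots, -n, +n, -(n+1)$, so that the $H$-edges become precisely $\{(+i, -(i+1)) : i = 0, \ldots, n\}$, i.e. the desire matching of the identity; this is possible exactly because $H \cup M^1$ is a single Hamiltonian cycle. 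Under this labeling, $V^h$ becomes the set of ``$-$'' vertices $\{0, -1, \ldots, -(n+1)\}$ and $V^t$ the set of ``$+$'' vertices $\{+1, \ldots, +n\}$ (with $0$ and $-(n+1)$ playing the boundary roles). Then for each $i$, the matching $M^i$ becomes a perfect matching between $+$ vertices and $-$ vertices, and I would define $\pi^i$ by declaring its reality edges to be exactly the edges of $M^i$: concretely, $M^i$ determines, for each $+a$, the unique $-b$ it is matched to, and one reads off the consecutive pairs $(+\pi^i(j), -\pi^i(j+1))$ from the single cycle $H \cup M^i$, recovering a genuine permutation $\pi^i$ with $\pi^i(0) = 0$, $\pi^i(n+1) = n+1$.

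The main steps after the vertex labeling are then bookkeeping: (i) check that the sequence of $+$/$-$ vertices read off from the Hamiltonian cycle $H \cup M^i$, starting at $0$, yields a well-defined bijection $\pi^i$ of $\{1, \ldots, n\}$ — this is where one uses that $H \cup M^i$ is Hamiltonian (a single cycle through all $2(n+1)$ vertices), so that walking along it visits every value exactly once; (ii) verify that with this $\pi^i$ the reality matching $M(\pi^i)$ defined in the paper equals $M^i$ under the chosen vertex identification, and the desire matching $M(\iota)$ equals $H$; (iii) conclude that $MG(\pi^1, \ldots, \pi^q) = (V, M(\pi^1) \cup \cdots \cup M(\pi^q))$ is, edge for edge and color for color, the image of $G'$ under the vertex bijection, hence $G'$ is isomorphic to it. I expect the only genuinely delicate point to be step (i)/(ii): one must be careful that the ``reality edges'' of a permutation as defined via $R = \{(+\pi(j), -\pi(j+1))\}$ really do recover an honest permutation from an arbitrary matching $M^i$ meeting the hypotheses, and that the boundary conventions $\pi(0) = 0$, $\pi(n+1) = n+1$ are consistent with having labeled the cycle so that $H$ is the identity's desire matching; everything else is a routine translation between the abstract matching language of the lemma and the concrete reality-and-desire-diagram definitions. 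I would also remark that the choice of $M^1$ to anchor the labeling is arbitrary — any $M^i$ would do, since $H \cup M^i$ is Hamiltonian for all $i$ — and that different choices produce permutations differing only by an overall relabeling, which does not affect the isomorphism type of $MG$.
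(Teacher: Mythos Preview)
The paper does not prove \autoref{lm:base} itself; it is quoted from Caprara's work as a black box, so there is no in-paper proof to compare against. I therefore evaluate your construction on its own merits.

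Your overall plan is the right one --- use the base matching $H$ to fix a labeling of $V^t\cup V^h$ by the canonical vertex set $\{0,-1,+1,\ldots,-n,+n,-(n+1)\}$, and then read each $\pi^i$ off from $M^i$ --- but you make the wrong identification for $H$. You label along the Hamiltonian cycle $H\cup M^1$ so that the $H$-edges become the \emph{desire} matching $M(\iota)=\{(+i,-(i+1)):i=0,\ldots,n\}$. Under that choice the $M^1$-edges become
\[
\{(-1,+1),(-2,+2),\ldots,(-n,+n),\;(-(n+1),0)\},
\]
and this is \emph{not} $M(\pi)$ for any permutation $\pi$: the reality edge incident to $0$ in any $M(\pi)$ is $(0,-\pi(1))$ with $\pi(1)\in\{1,\ldots,n\}$, never $(0,-(n+1))$. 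More generally, ``$M(\iota)\cup M$ Hamiltonian'' does \emph{not} characterize permutation matchings (indeed $M(\iota)\cup M(\iota)$ is a union of $2$-cycles, not Hamiltonian, yet $M(\iota)$ is a permutation matching), so with $H=M(\iota)$ you cannot conclude that the $M^i$ are permutation matchings, and step~(ii) of your outline fails.

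The correct target for $H$ is the matching
\[
H_0\ :=\ \{(+a,-a):a=1,\ldots,n\}\ \cup\ \{(0,-(n+1))\}.
\]
For every permutation $\pi$ the cycle $H_0\cup M(\pi)$ visits $0,-\pi(1),+\pi(1),-\pi(2),+\pi(2),\ldots,+\pi(n),-(n+1),0$ and is therefore Hamiltonian; conversely, if $H_0\cup M$ is Hamiltonian, then traversing it from $0$ produces a sequence $0,-a_1,+a_1,-a_2,\ldots,-(n+1)$ with $a_1,\ldots,a_n$ a permutation of $\{1,\ldots,n\}$, whence $M=M(\pi)$ for $\pi=[a_1\,\cdots\,a_n]$. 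So label the vertices by picking one $H$-edge and calling its $V^t$- and $V^h$-endpoints $0$ and $-(n+1)$, and labeling the remaining $H$-edges as $(+a,-a)$ for $a=1,\ldots,n$ arbitrarily. Then $H=H_0$, the hypothesis ``$H\cup M^i$ Hamiltonian'' becomes exactly the criterion above, and each $M^i$ equals $M(\pi^i)$ for the permutation read off from that Hamiltonian cycle. With this single change your steps (i)--(iii) go through verbatim. (A minor aside: $|V^t|=|V^h|=n+1$, so your line ``$n:=|V^t|=|V^h|-1$'' should read $n:=|V^t|-1=|V^h|-1$.)
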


Such a matching $H$ described in \autoref{lm:base} is called a \emph{base matching} of the graph. 
An important operation over $MG$ was introduced in~\cite{caprara2003reversal}.
Given a perfect matching $M$ on a node set $V$ and an edge $e = (u,v)$, the operation $M / e$ is defined as follows. 
If $e \in M$, then $M / e = M \backslash \{e\}$. 
If $e \notin M$, and $(a, u), (b, v)$ are the two edges in $M$ incident to $u$ and $v$, then $M / e = M \backslash \{(a,u), (b,v)\} \cup \{(a,b)\}$. 

\begin{lemma}[Caprara~\cite{caprara2003reversal}]
\label{lm:operataions}
Given two perfect matchings $M, L$ of a given graph $G$
and an edge $e = (u, v) \in M$  with $e \notin L$, then $M\cup L$ defines a Hamiltonian cycle of $G$ if and only if $(M / e) \cup (L / e)$ defines a Hamiltonian cycle of $G - \{u, v\}$.
\end{lemma}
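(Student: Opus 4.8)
The plan is to analyze how the operations $M\mapsto M/e$ and $L\mapsto L/e$ act on the alternating cycles of the union $M\cup L$, and to show that they change the number of such cycles by zero. First I would fix notation. Since $e=(u,v)\in M$, the set $M/e=M\setminus\{e\}$ is a perfect matching of $G-\{u,v\}$. Since $e\notin L$, let $(a,u)$ and $(b,v)$ be the two edges of $L$ incident to $u$ and to $v$; because $L$ is a matching and $e\notin L$, the four vertices $u,v,a,b$ are pairwise distinct, and $L/e=\big(L\setminus\{(a,u),(b,v)\}\big)\cup\{(a,b)\}$ is a perfect matching of $G-\{u,v\}$. Both $M\cup L$ and $(M/e)\cup(L/e)$ are therefore $2$-regular, on $V$ and on $V\setminus\{u,v\}$ respectively, i.e.\ disjoint unions of cycles that alternate between the two matchings; ``defines a Hamiltonian cycle'' means precisely that this disjoint union consists of a single cycle.

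The key step is a local surgery argument on these $2$-regular (multi)graphs. In $M\cup L$, the two edges incident to $u$ are $e\in M$ and $(a,u)\in L$, and the two edges incident to $v$ are $e\in M$ and $(b,v)\in L$; hence the unique cycle $C$ of $M\cup L$ containing $e$ traverses the sub-path $a \,-\, u \,-\, v \,-\, b$ consecutively. Now I would compare the edge sets directly: $(M/e)\cup(L/e)$ is obtained from $M\cup L$ by deleting exactly the three edges $e,(a,u),(b,v)$ and inserting the single edge $(a,b)$, and every one of these modifications lies on $C$. Consequently each cycle of $M\cup L$ other than $C$ survives verbatim in $(M/e)\cup(L/e)$, while $C$ is transformed into the closed walk $C'$ obtained by replacing its sub-path $a\,-\,u\,-\,v\,-\,b$ with the edge $(a,b)$; since $C'$ is connected and $2$-regular on its vertex set it is a cycle, and it is the unique cycle of $(M/e)\cup(L/e)$ containing $(a,b)$. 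This yields a bijection between the cycles of $M\cup L$ and those of $(M/e)\cup(L/e)$ that preserves their number, so $M\cup L$ is a single cycle if and only if $(M/e)\cup(L/e)$ is a single cycle, which is exactly the claimed equivalence. Running the surgery backwards --- replacing the edge $(a,b)$ inside a cycle of $(M/e)\cup(L/e)$ by the path $a-u-v-b$ with edges $(a,u)\in L$, $e\in M$, $(b,v)\in L$ --- gives the converse direction explicitly.

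The only delicate point, and the step I would be most careful about, is the degenerate configuration in which $(a,b)$ is already an edge of $M$: then $C$ is the $4$-cycle $a-u-v-b-a$, and $C'$ degenerates to a ``$2$-cycle'' on $\{a,b\}$ formed by the edge $(a,b)$ occurring once in $M/e$ and once in $L/e$. This is still a single cycle, so the cycle-count bijection is unaffected; one just has to phrase the surgery so that it tolerates such doubled edges, which is anyway natural in the multigraph setting of the reality-and-desire diagram. Beyond checking this case and the distinctness of $u,v,a,b$, the argument is pure combinatorial bookkeeping, so I do not expect any serious obstacle.
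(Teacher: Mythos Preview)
Your argument is correct. The paper does not actually prove this lemma; it is quoted from Caprara~\cite{caprara2003reversal} as a black box, so there is no in-paper proof to compare against. Your surgery argument---locating the path $a\,$--$\,u\,$--$\,v\,$--$\,b$ inside the unique alternating cycle of $M\cup L$ through $e$, replacing it by the edge $(a,b)$, and observing that all other cycles are untouched---is exactly the standard way to establish this kind of contraction lemma, and your handling of the degenerate case $(a,b)\in M$ (yielding a length-$2$ multicycle after contraction) is the right caveat to record. The distinctness check for $u,v,a,b$ is also complete: the only nontrivial case is $a\neq b$, which you correctly derive from $L$ being a matching.
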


Given an $MG$ graph $G = (V, M(\pi^1) \cup \cdots \cup M(\pi^q))$, the \emph{contraction} of an edge $e = (u, v)$ yields the graph $G / e = (V \backslash \{u, v\}, M(\pi^1) / e \cup \cdots \cup M(\pi^q) / e)$. 
By induction on the node size and contracting merging cycles, as a consequence of \autoref{lm:base} we have the following result.

\begin{lemma}[Bader~\cite{bader2011transposition}]\label{lm:contraction}
Let $V^t$ and $V^h$ be two disjoint sets, and let $G = (V^t \cup V^h, M^1 \cup M^2)$ be a graph where $M^1$ and $M^2$ are disjoint perfect matchings where each edge connects a node in $V^t$ with a node in $V^h$. 
If $M^1 \cup M^2$ defines an even number of even cycles on $V$, then $G$ has a base matching $H$.
\end{lemma}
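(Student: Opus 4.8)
The plan is to follow the route indicated just before the statement: induction on the number $n = |V^t| + |V^h|$ of vertices, using the edge-contraction operation $G \mapsto G/e$ together with \autoref{lm:operataions} and \autoref{lm:base}. Recall that, $G$ being bipartite with sides $V^t$ and $V^h$, every cycle of $M^1 \cup M^2$ has even length, so the hypothesis merely asks that their number be even; and that, by \autoref{lm:base}, a base matching $H$ is a perfect matching between $V^t$ and $V^h$ for which each of $H \cup M^1$ and $H \cup M^2$ is a single Hamiltonian cycle on $V^t \cup V^h$.

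For the inductive step I would pick a cycle $C$ of $M^1 \cup M^2$ long enough to be contracted safely, choose an edge $e = (u,v) \in M^1 \cap C$, and pass to $G/e$. Contraction deletes $u$ and $v$ and affects only $C$, which simply loses those two vertices, so $M^1/e \cup M^2/e$ again has an even number of cycles and $G/e$ still splits into ``tails'' and ``heads''; by induction $G/e$ has a base matching $H'$. It then remains to lift $H'$ to a base matching $H$ of $G$ by re-inserting $u$ and $v$, which is a local rerouting of $H'$ around $C$. Writing $M^1 = (M^1/e) \cup \{e\}$ and $M^2 = \bigl((M^2/e)\setminus\{(a,b)\}\bigr) \cup \{(a,u),(b,v)\}$, where $a = M^2(u)$, $b = M^2(v)$, and $(a,b)$ is the $M^2/e$-edge that replaced them, one wants $H$ such that a first application of \autoref{lm:operataions} (with $M := M^1$, $L := H$, and the edge $e \in M^1$) transfers the Hamiltonicity of $(M^1/e)\cup H'$ in $G/e$ to that of $M^1 \cup H$ in $G$, while a second application (with $M := M^2$, $L := H$, using an $M^2$-edge incident with $u$ or $v$) does the analogous transfer for $M^2$. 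Obtaining an $H$ for which both transfers succeed is the substance of the lift; once $H$ is in hand, \autoref{lm:base} yields the conclusion.

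For the base case, $n = 0$ is trivial, and otherwise the inductive step reduces matters to a graph all of whose cycles are short; there, the cycles can be paired two by two --- their number being even --- and each pair joined by an explicit ``crossing'' perfect matching, producing a base matching. This is the only place where the parity hypothesis is used, and it is genuinely needed: with an odd number of such short cycles no base matching exists, since $H \cup M^1$ would then be forced to decompose into an even number of cycles.

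The step I expect to be the main obstacle is this lift. Re-inserting $u$ and $v$ by splitting an arbitrary edge of $H'$ keeps $H \cup M^1$ Hamiltonian but generically breaks $H \cup M^2$ into two cycles --- roughly, the portion threading through $u$ and the portion threading through $v$ come apart --- so one must exploit the cyclic structure of $C$, and ultimately the even parity of the cycle count, to reroute $H'$ in exactly the right way; showing that such a rerouting always exists when the number of cycles is even is the crux. A secondary, more routine, point is the bookkeeping for degenerate contractions (short cycles collapsing into parallel edges, or coinciding endpoints), which must be handled separately but leaves the overall scheme unchanged.
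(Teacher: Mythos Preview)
The paper does not prove this lemma; it is quoted from Bader with only the one-line hint ``by induction on the node size and contracting merging cycles, as a consequence of \autoref{lm:base}''. So there is no detailed argument in the paper to compare against. That said, your outline has a gap already at the level of the hypothesis.

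You read ``even cycle'' as ``cycle of even length'', observe that bipartiteness makes every cycle of $M^1\cup M^2$ even, and conclude that the hypothesis just asks the \emph{total} number of cycles to be even. This is not the intended reading, and with it the lemma is actually false: take $M^1\cup M^2$ to be the disjoint union of one $4$-cycle and one $6$-cycle (two cycles in total, so your version of the hypothesis holds). In permutation language a base matching would express a permutation of cycle type $(2,3)$ in $S_5$ as a product of two $5$-cycles; that permutation is odd, so this is impossible, and no base matching exists. The intended meaning---consistent with how the paper later uses the lemma, where ``length'' counts edges of one colour---is that a cycle is ``even'' when it contains an even number of $M^1$-edges, i.e.\ has length $\equiv 0\pmod 4$. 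The hypothesis is then exactly the (necessary) parity condition for a base matching.

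With the correct hypothesis your inductive step also breaks. Contracting an $M^1$-edge inside a cycle lowers that cycle's $M^1$-edge count by one, flipping its parity, so the number of even cycles changes by $\pm 1$ and the hypothesis is \emph{not} inherited by $G/e$. (Contracting an edge joining two different cycles---``merging'' them, as the paper's hint says---does preserve the parity, but then $e\notin M^1$ and \autoref{lm:operataions} no longer applies directly to $M^1$.) Independently, the lift you describe cannot work: once the bipartition fixes which endpoint of $(c,d)\in H'$ attaches to $u$ and which to $v$, one checks that for \emph{every} choice of $(c,d)$ the surgery on the Hamiltonian cycle of $(M^2/e)\cup H'$---delete $(a,b)$ and $(c,d)$, insert the paths $a\text{--}u\text{--}c$ and $b\text{--}v\text{--}d$---closes each of the two arcs onto itself, so $M^2\cup H$ always splits into two cycles. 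Your proposed second application of \autoref{lm:operataions} with an $M^2$-edge incident to $u$ lands on $V\setminus\{a,u\}$, not $V\setminus\{u,v\}$, where the inductive hypothesis says nothing. The base case has the same defect: pairing length-$4$ cycles two by two yields an $H$ for which $H\cup M^i$ is Hamiltonian \emph{on each pair}, hence globally a union of several cycles, not one. A correct argument needs either a stronger inductive statement or a more global rewiring; for the details you will have to go to Bader's paper.
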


\paragraph*{Closest problems.} We start with the following result. 

\begin{theorem}[Basavaraju et al.~\cite{basavaraju2018kernelization}]\label{thm:NoFPT}
{\sc Closest String} parameterized by $d$ and $n$ does not admit a polynomial kernel unless $\NP \subseteq \coNP /\poly$.
\end{theorem}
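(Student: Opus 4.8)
This is a result of Basavaraju et al.~\cite{basavaraju2018kernelization}; to prove it from scratch, the plan is to use the standard cross-composition framework (see, e.g.,~\cite{cygan2015parameterized}): if an \NP-hard language OR-cross-composes into a parameterized problem $Q$, then $Q$ admits no polynomial kernel unless $\NP \subseteq \coNP/\poly$. Concretely, I would design an OR-cross-composition into \textsc{Closest String} in which the output parameter is governed by $d+n$; since $d\le n$ always holds, this amounts to controlling the length $n$ of the produced strings.

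First I would fix a polynomial-time computable equivalence relation that puts into a common class all \textsc{Closest String} instances sharing the same alphabet, the same number $k$ of strings, the same length $n$, and the same target $d$; any instance can be brought into this shape by harmless padding (appending a common symbol to all strings of an instance, or duplicating strings). Given $t$ instances $(S_1,d),\dots,(S_t,d)$ of one class, the aim is to build a single \textsc{Closest String} instance that is a yes-instance if and only if some $S_j$ is, whose string length is polynomial in $n+\log t$. The construction would consist of a block of $n$ ``work'' positions holding a candidate center for the selected instance, together with an ``identifier'' block of only $O(\log t)$ positions meant to encode the index $j$ of that instance; using an enlarged alphabet and a weighted blow-up of the identifier positions, one would force any feasible center to be coherent with exactly one identifier $j$, after which the work block must solve $S_j$ while the constraints coming from the instances $j'\neq j$ become slack. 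The converse direction, turning a solution of some $S_j$ into a center feasible for the whole construction, should follow by setting the work block to that solution and the identifier block to (the blow-up of) the binary encoding of $j$.

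The main obstacle is exactly the tightness of this length budget. The naive disjunction, concatenating $t$ disjoint copies of the instances, yields string length $\Omega(nt)$, which is not polynomial in $n+\log t$ and hence violates the cross-composition requirement; the disjunction therefore has to be folded into only $O(\log t)$ extra positions. One then has to reconcile two conflicting demands: the penalty incurred by a center that disagrees with the identifier of a string drawn from a ``wrong'' instance must be large enough to exclude incoherent centers, yet the global target distance must stay small (of order $d+O(\log t)$) so that the constraints of the selected instance are not trivially satisfied. Calibrating the weights so that both requirements hold simultaneously, and then verifying both implications of the equivalence, is where the real work lies. An alternative that keeps the same high-level shape but relocates the technical effort into the design of Hamming-distance gadgets is to give instead a polynomial parameter transformation from a problem already known to be incompressible under an appropriate parameterization, such as a covering problem parameterized by the size of its ground set, encoding each set by a string of length polynomial in the universe size and each ``element is covered'' condition by a distance constraint. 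Either way, for the remainder of the paper the statement above is used as a black box.
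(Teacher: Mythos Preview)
The paper contains no proof of this statement: it is quoted as a result of Basavaraju et al.~\cite{basavaraju2018kernelization} and used purely as a black box, exactly as you yourself conclude in your final sentence. There is therefore nothing to compare your sketch against within this paper.

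As for the sketch itself, the high-level plan is the right one---OR-cross-composition is the standard route to such lower bounds, and the original argument in~\cite{basavaraju2018kernelization} is of this flavor---but what you have written is a plan rather than a proof. You correctly isolate the crux (compressing the instance selector into $O(\log t)$ positions while keeping the distance target tight enough that the selected sub-instance remains nontrivial) and then explicitly leave it unresolved. The ``enlarged alphabet and weighted blow-up'' idea is gestured at but not specified, and the two directions of the equivalence are asserted rather than checked; the alternative PPT route is likewise only named. None of this is a problem for the present paper, since here the theorem is only cited; but if you intend this passage to stand as an independent proof, the calibration you flag as ``where the real work lies'' would actually have to be carried out.
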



Considering {\sc Closest} problems, 
Popov~\cite{popov2007multiple} proved \NP-completeness for the {\sc Swap Closest} problem, and
Cunha et al.~\cite{cunha2020computational} developed an \NP-completeness framework of closest permutation regarding some rearrangements, such as breakpoint and block-interchange. 
The proposed reduction was the following: by considering any set of $k$ strings of length $n$, obtain a particular set of $k$ permutations of length $f(n)$, which is $f(n)=4n$ for the breakpoint case, while $f(n)=2n$ for the block-interchange case. 
Based on that transformation, 
Cunha et al.~\cite{cunha2020computational} showed a polynomial transformation where a solution for {\sc Closest string} yields a solution for {\sc Closest permutation}, as follows.

\begin{lemma}[Cunha et al.~\cite{cunha2020computational}]\label{lm:Hb}
Given a set of $k$ permutations obtained by the transformed set of binary strings, there is a breakpoint closest permutation with maximum distance equal to $2d$ if, and only if, there is a Hamming closest string with maximum distance equal to $d$.
\end{lemma}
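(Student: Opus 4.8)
The plan is to spell out the reduction behind \autoref{lm:Hb}, reduce the claim to the equality ``optimal breakpoint radius $=2\cdot$ optimal Hamming radius'', and prove the two inequalities, the second of which is where the work lies. From binary strings $s^1,\dots,s^k$ of length $n$ I would build permutations $\pi^1,\dots,\pi^k$ of length $4n$: replace the $i$-th letter of $s^j$ by a block on the four consecutive values $B_i=\{4i-3,4i-2,4i-1,4i\}$, always beginning with $4i-3$ and ending with $4i$, with internal arrangement $(4i-3,\,4i-2,\,4i-1,\,4i)$ for letter $0$ and $(4i-3,\,4i-1,\,4i-2,\,4i)$ for letter $1$. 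Then every $\pi^j$ contains the $n+1$ ``border'' pairs $\{4\ell,4\ell+1\}$, $\ell=0,\dots,n$ (reading $4\cdot 0=0$ and $4n+1$), and inside block $i$ it contains exactly the three pairs of $P^{s^j_i}$, where $P^0=\{\{4i-3,4i-2\},\{4i-2,4i-1\},\{4i-1,4i\}\}$ and $P^1=\{\{4i-3,4i-1\},\{4i-2,4i-1\},\{4i-2,4i\}\}$ are $3$-edge sets sharing only $\{4i-2,4i-1\}$. Since border and internal pairs are disjoint and a permutation of length $4n$ has $4n+1$ consecutive pairs, a direct count gives, for all binary strings $s,t$ of length $n$,
\[
d_{\sf BP}(\pi_s,\pi_t)=(4n+1)-\bigl((n+1)+3\,|\{i:s_i=t_i\}|+|\{i:s_i\neq t_i\}|\bigr)=2\,d_H(s,t).
\]
In particular, if $w$ is a closest string with $\max_j d_H(w,s^j)\le d$, then $\pi_w$ satisfies $\max_j d_{\sf BP}(\pi_w,\pi^j)=2\max_j d_H(w,s^j)\le 2d$; so the optimal breakpoint radius is at most twice the optimal Hamming radius.

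For the reverse inequality I would take an \emph{arbitrary} permutation $\tau$ of length $4n$ and read off a string $w=w(\tau)$ with $d_{\sf BP}(\tau,\pi^j)\ge 2\,d_H(w,s^j)$ for every $j$; this gives $\max_j d_{\sf BP}(\tau,\pi^j)\ge 2\max_j d_H(w,s^j)\ge 2\cdot(\text{optimal Hamming radius})$, so the optimal breakpoint radius is also at least twice the optimal Hamming radius. Let $I_i(\tau)$ be the set of pairs of $\adj(\tau)$ with both endpoints in $B_i$, and $\beta(\tau)$ the number of border pairs in $\adj(\tau)$; splitting $\adj(\pi^j)$ into border and internal pairs yields
\[
d_{\sf BP}(\tau,\pi^j)=\bigl(n+1-\beta(\tau)\bigr)+\sum_{i=1}^{n}\bigl(3-|I_i(\tau)\cap P^{s^j_i}|\bigr),
\]
with all terms nonnegative. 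Since $\tau$ (padded) is a Hamiltonian path, $I_i(\tau)$ is an acyclic, maximum-degree-$2$ edge set on the four vertices of $B_i$, and a short case analysis shows that for every such set one of $|I_i(\tau)\cap P^0|,\,|I_i(\tau)\cap P^1|$ is at most $1$: this is immediate with at most two edges, and with three edges $I_i(\tau)$ is a Hamiltonian path of $B_i$, while the only $3$-edge sets sharing two edges with both $P^0$ and $P^1$ violate acyclicity or the degree bound and so are not realizable. I would then set $w_i$ to the bit $b$ with $|I_i(\tau)\cap P^{1-b}|\le 1$; for each $j$, every position $i$ with $w_i\neq s^j_i$ contributes $3-|I_i(\tau)\cap P^{1-w_i}|\ge 2$ to the sum, so $d_{\sf BP}(\tau,\pi^j)\ge 2\,|\{i:w_i\neq s^j_i\}|=2\,d_H(w,s^j)$, as required.

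The crux is this second step: $\tau$ may scatter the values of $B_i$ and need not keep $4i-3,4i$ at a block's ends, so blindly ``rounding'' each block to a canonical pattern could increase the distance to some $\pi^j$; the remedy is to argue with the border/internal decomposition of $d_{\sf BP}$ and with the fact that no local configuration realizable by a Hamiltonian path inside $B_i$ is simultaneously two-edges-close to both $P^0$ and $P^1$, which keeps the two canonical patterns per-block optimal and pins the global ratio to exactly $2$. Combining the two inequalities gives that the optimal breakpoint radius equals twice the optimal Hamming radius, and the ``if'' and ``only if'' parts of \autoref{lm:Hb} follow by instantiating this at the optimal solutions.
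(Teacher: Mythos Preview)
The paper does not prove \autoref{lm:Hb}; it is quoted verbatim from~\cite{cunha2020computational} as a known result, with only the remark that the transformation sends strings of length $n$ to permutations of length $4n$. There is therefore no ``paper's own proof'' to compare against here.

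That said, your reconstruction is correct and consistent with the $4n$ length the paper mentions. The forward direction is a clean computation. For the reverse direction, your decomposition $d_{\sf BP}(\tau,\pi^j)=(n+1-\beta(\tau))+\sum_i(3-|I_i(\tau)\cap P^{s^j_i}|)$ is valid because $|\adj(\tau)|=|\adj(\pi^j)|$, so the asymmetric definition of $d_{\sf BP}$ in the paper is symmetric on equal-length permutations and can be read from either side. The combinatorial core---that no acyclic, maximum-degree-$2$ edge set on $B_i$ meets both $P^0$ and $P^1$ in two edges---checks out: with $|I_i|\le 2$ one has $|I_i\cap P^0|+|I_i\cap P^1|\le |I_i|+|I_i\cap P^0\cap P^1|\le 3$, and with $|I_i|=3$ the set is a Hamiltonian path on four vertices, and the exhaustive check you sketch (every way of taking two edges from each of $P^0,P^1$ forces a triangle, a $4$-cycle, or a degree-$3$ vertex) goes through. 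The rounding $w_i\mapsto b$ with $|I_i\cap P^{1-b}|\le 1$ then gives $d_{\sf BP}(\tau,\pi^j)\ge 2\,d_H(w,s^j)$ for every $j$, which is exactly what is needed. Combining the two inequalities yields that the optimal breakpoint radius equals twice the optimal Hamming radius, and the biconditional follows.
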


\begin{lemma}[Cunha et al.~\cite{cunha2020computational}]\label{lm:Hbi}
Given a set of $k$ permutations obtained by the transformed set of binary strings, there is a block-interchange closest permutation with maximum distance at most $d$ if, and only if, there is a Hamming closest string with maximum distance equal to $d$.
\end{lemma}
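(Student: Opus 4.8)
The plan is to write out the polynomial transformation $\phi$ on which the reduction rests, prove a distance identity for it, and then argue the two implications, the reverse one through a rounding argument that confines an optimal permutation to the image of $\phi$.

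Recall the map $\phi$ sending a binary string $s=s_1\cdots s_n$ to a permutation $\phi(s)$ of length $2n$: its $i$-th bit is encoded by the block on positions $2i-1,2i$, which reads $[\,2i-1\ \ 2i\,]$ when $s_i=0$ and $[\,2i\ \ 2i-1\,]$ when $s_i=1$; in particular $\phi(0^n)=\iota$, and $\phi$ is clearly polynomial and injective. First I would prove the identity $d_{\sf BI}(\phi(s),\phi(s'))=d_H(s,s')$ for all binary strings $s,s'$ of length $n$. After relabeling, $\phi(s)\,\phi(s')^{-1}$ is a product of $d_H(s,s')$ pairwise disjoint adjacent transpositions, one on each mismatched doubled block; analyzing the reality and desire diagram $G(\phi(s),\phi(s'))$ and grouping the mismatched bits into maximal runs, a run of $r$ consecutive mismatched bits collapses $2r+1$ unit cycles into a single cycle, which gives $c(G(\phi(s),\phi(s')))=(2n+1)-2\,d_H(s,s')$. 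Substituting into Christie's block-interchange distance formula $d_{\sf BI}(\pi,\sigma)=\tfrac{(m+1)-c(G(\pi,\sigma))}{2}$ for permutations of length $m$~\cite{christie1998genome} (here $m=2n$) gives the identity; this parallels \autoref{lm:Hb} for the breakpoint metric, where the multiplicative factor is $2$ rather than $1$.

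For the direction ($\Leftarrow$), if $s^{*}$ is a Hamming closest string for $\{s^1,\ldots,s^k\}$ with $\max_i d_H(s^{*},s^i)\le d$, then the permutation $\phi(s^{*})$ of the transformed instance satisfies $\max_i d_{\sf BI}(\phi(s^{*}),\phi(s^i))=\max_i d_H(s^{*},s^i)\le d$ by the identity, so it is a block-interchange closest permutation of radius at most $d$. For the direction ($\Rightarrow$), let $\sigma$ be any permutation of length $2n$ with $\max_i d_{\sf BI}(\sigma,\phi(s^i))\le d$; I would extract a binary string $s^{*}$ as follows. The crucial claim is that one may assume $\sigma$ to lie in the image of $\phi$: if the restriction of $\sigma$ to some block of positions $\{2i-1,2i\}$ is not one of the two canonical gadget patterns --- equivalently, if $\sigma$ does not keep the element set $\{2i-1,2i\}$ on positions $\{2i-1,2i\}$ --- then ``repairing'' block $i$, i.e.\ bringing $2i-1$ and $2i$ back onto positions $2i-1,2i$ in whichever of the two orders is better, does not increase $d_{\sf BI}(\sigma,\phi(s^i))$ for \emph{any} $i$ at the same time. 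The reason is that in every input $\phi(s^i)$ the pair $\{2i-1,2i\}$ occupies exactly positions $2i-1,2i$, so the desire edges of $G(\sigma,\phi(s^i))$ around that block coincide for all $i$, and a clean block can only merge cycles at least as favorably; quantitatively this is the standard analysis of how a single block-interchange changes the cycle count, applied to the move that performs the repair. Repairing the blocks one at a time produces $\phi(s^{*})$ with $\max_i d_{\sf BI}(\phi(s^{*}),\phi(s^i))\le d$, hence $\max_i d_H(s^{*},s^i)\le d$ by the identity, which is the desired conclusion (reading the statement's ``equal to $d$'' on the string side and ``at most $d$'' on the permutation side as the matching decision conditions, which the identity reconciles).

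The step I expect to be the main obstacle is exactly this repair claim: proving that cleaning up one doubled block of $\sigma$ cannot increase the block-interchange distance to \emph{any} of the $k$ inputs simultaneously. This forces one to track the effect of a single block-interchange on the cycle decomposition of all $k$ reality and desire diagrams at once, and it is here that the rigidity of the gadgets is essential, namely that the elements $2i-1,2i$ stay ``frozen'' on positions $2i-1,2i$ across all inputs. A cleaner route I would try first is to establish the per-input inequality $d_{\sf BI}(\sigma,\phi(s'))\ge d_{\sf BI}(\phi(\rho(\sigma)),\phi(s'))$ for every binary string $s'$, where $\rho(\sigma)_i$ is read off from where $\{2i-1,2i\}$ sits in $\sigma$; such a single inequality, if true, gives ($\Rightarrow$) at once without any iteration.
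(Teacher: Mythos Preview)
This lemma is not proved in the present paper; it is quoted verbatim from Cunha et al.~\cite{cunha2020computational}, so there is no in-paper argument to compare against. The paper only records that the underlying transformation produces permutations of length $2n$, which is consistent with your map $\phi$.

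On the substance of your attempt: the distance identity $d_{\sf BI}(\phi(s),\phi(s'))=d_H(s,s')$ and its derivation via the cycle count of the reality and desire diagram are correct, and the ($\Leftarrow$) direction follows at once. The genuine gap is the repair claim in the ($\Rightarrow$) direction, which you yourself flag as the main obstacle. The sentence ``a clean block can only merge cycles at least as favorably; quantitatively this is the standard analysis of how a single block-interchange changes the cycle count'' does not do the required work: a block-interchange can be a $-2$-, $0$-, or $+2$-move on the cycle count, so a repair move could in principle raise $d_{\sf BI}(\cdot,\phi(s^j))$ by one for some $j$, and showing that it never does is exactly the content of the claim rather than a consequence of the general $\{-2,0,+2\}$ analysis. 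Moreover, when the elements $2i-1$ and $2i$ are not adjacent in $\sigma$, ``repairing block $i$'' is not even a single well-defined block-interchange, so the per-input cycle bookkeeping you allude to is not set up. Your fallback route via the per-input inequality $d_{\sf BI}(\sigma,\phi(s'))\ge d_{\sf BI}(\phi(\rho(\sigma)),\phi(s'))$ is also underspecified: you define $\rho(\sigma)_i$ as ``read off from where $\{2i-1,2i\}$ sits in $\sigma$'', but when $\{2i-1,2i\}$ does \emph{not} occupy positions $\{2i-1,2i\}$ there is no canonical bit to read, so $\rho$ is undefined precisely on the permutations that need rounding.

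If you want a concrete template for how such a rounding can be carried out, the paper does prove the analogous statement for short-block-moves (\autoref{lm:iffBI}) by a left-to-right scan: at the first offending position it pulls the correct element forward and argues, using the structure of the gadgets, that this cannot increase the distance to \emph{any} input. Adapting that scan to the block-interchange metric via \autoref{thm:BIdistance} is what is missing from your sketch.
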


Hence, the developed technique is a polynomial parameter transformation ({\sf PPT}) (see Appendix~\ref{sec:preliminariesParameterized})
from {\sc Closest String} to {\sc Breakpoint Closest Permutation} and to {\sc Block-interchange Closest Permutation}.

\section{Results for the {\sc Median} problems}\label{sec:median}


First, we show that {\sc Block-interchange Median} ($\bimp$ for short) 
is \NP-complete even if the input consists of only three permutations. The proof follows a similar approach considered by Caprara~\cite{caprara2003reversal} for the reversal rearrangement.

\begin{theorem}[$\star$]\label{thm:bimpNPC}
The {\sc Block-interchange Median} problem is \NP-complete even when the input consists of three permutations.
\end{theorem}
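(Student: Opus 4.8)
The plan is to adapt Caprara's chain of reductions for the reversal median to the block-interchange setting, exploiting the fact (\autoref{thm:BIdistance}) that, unlike the transposition distance, the block-interchange distance of a permutation is \emph{exactly} $\frac{(n+1)-c(G(\pi))}{2}$ with no hurdle obstruction. Concretely, I would start from the \textsc{Cycle Median} problem ($\cmp$): given three permutations $\pi^1,\pi^2,\pi^3$ of $\{1,\dots,n\}$ and an integer $k$, decide whether there is a permutation $\sigma$ with $\sum_{i=1}^3 c\bigl(G(\sigma,\pi^i)\bigr)\ge k$. Caprara already showed $\ecd\le_p\cmp$ via the \textsc{Alternating Cycle Decomposition} problem, and $\ecd$ is \NP-complete by Holyer, so $\cmp$ is \NP-hard; this part can be quoted. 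The target is therefore the single reduction $\cmp \le_p \bimp$, which should be cleaner than Bader's $\ocmp\le_p\tmp$ because we do not need the extra machinery that forces distances to meet a lower bound — for block-interchange the bound is always tight.

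The heart of the reduction is the following translation. Given a $\cmp$-instance $(\pi^1,\pi^2,\pi^3,k)$, I would keep the same three permutations (possibly after a polynomial padding/gadget step, described below, to make the parity and degree conditions of Lemmas~\ref{lm:base}--\ref{lm:contraction} hold) and output the $\bimp$-instance $\{\pi^1,\pi^2,\pi^3\}$ with target value $d = \sum_{i=1}^3 d_{\sf BI}(\pi^i\text{-to-}\pi^i\text{ pairwise})$ — more precisely $d^\star = \tfrac{3(n+1)-k}{2}$. Then for any candidate permutation $\sigma$ we have, by \autoref{thm:BIdistance} applied to each pair $(\sigma,\pi^i)$ after the usual relabelling $\sigma\mapsto\sigma(\pi^i)^{-1}$,
\[
\sum_{i=1}^3 d_{\sf BI}(\sigma,\pi^i) \;=\; \sum_{i=1}^3 \frac{(n+1)-c\bigl(G(\sigma,\pi^i)\bigr)}{2} \;=\; \frac{3(n+1) - \sum_{i=1}^3 c\bigl(G(\sigma,\pi^i)\bigr)}{2}.
\]
Hence minimizing the median objective is \emph{equivalent} to maximizing $\sum_i c(G(\sigma,\pi^i))$, so $\sigma$ witnesses a ``yes'' for $\bimp$ with value $\le d^\star$ iff it witnesses a ``yes'' for $\cmp$ with value $\ge k$. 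Membership in \NP\ is immediate: a certificate is the permutation $\sigma$ together with the (polynomially checkable) value $d_{\sf BI}(\sigma,\pi^i)$ for each $i$, which is computable in polynomial time by Christie's algorithm.

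The step I expect to be the main obstacle is the same one that is delicate in Caprara's and Bader's arguments: ensuring that the graph one wants to realize as the optimal $\sigma$'s combined diagram is actually realizable, i.e.\ that $\cmp$ remains hard on instances whose \emph{multiple reality and desire diagram} $MG(\pi^1,\pi^2,\pi^3)$ satisfies the structural hypotheses (bounded degree, the parity condition that $|V|+|E|$ minus the number of marked edges is odd, existence of a base matching) needed to invoke \autoref{lm:base}, \autoref{lm:operataions}, and \autoref{lm:contraction}. I would handle this by importing Caprara's padding construction: attach constant-size gadgets (fixed $3$-cycles or $2$-cycles shared by all three permutations) that do not change the optimum of the cycle-maximization problem except by a known additive constant, while forcing the degree bound and the parity/base-matching conditions. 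One then argues, via the contraction operation $M/e$ and \autoref{lm:operataions}, that an optimal $MG$-decomposition corresponds to a genuine permutation matching $M(\sigma)$, so that the combinatorial optimum is attained by an actual permutation and not merely by an abstract matching. The remaining verifications — that the reduction is polynomial, that the gadgets behave as claimed, and that the two ``yes/no'' answers coincide — are then routine bookkeeping built on the displayed identity above.
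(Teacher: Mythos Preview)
Your core argument is exactly the paper's: quote Caprara's $\ecd\le_p\acd\le_p\cmp$ chain to get $\cmp$ \NP-hard, then reduce $\cmp\le_p\bimp$ via the displayed identity $\sum_i d_{\sf BI}(\sigma,\pi^i)=\tfrac{3(n+1)-\sum_i c(G(\sigma,\pi^i))}{2}$, which follows immediately from \autoref{thm:BIdistance}. The paper's entire proof is two sentences doing precisely this; membership in \NP\ is as you say.

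Where you go astray is the ``main obstacle'' paragraph. The base-matching machinery of Lemmas~\ref{lm:base}--\ref{lm:contraction} and the contraction operation live entirely \emph{inside} the reduction to $\cmp$ (the step you are quoting), not in the reduction $\cmp\le_p\bimp$. Once $\cmp$ is known hard, its instances already consist of three honest permutations and an integer $k$, and the question already ranges over genuine permutations $\sigma$; there is no realizability issue, no degree or parity condition to enforce, and no gadget or padding needed. You simply hand the same $(\pi^1,\pi^2,\pi^3)$ to $\bimp$ with target $d^\star=\tfrac{3(n+1)-k}{2}$ (the parity is automatically right because $(n+1)-c(G(\sigma,\pi^i))$ is always even). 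So drop that paragraph: the reduction is the identity map on the three permutations plus an arithmetic rewriting of the threshold, nothing more.
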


From \autoref{thm:bimpNPC}, one can conclude that $\bimp$ is para-\NP-hard when parameterized by the number of input permutations. $\tmp$ and $\bmp$ are also known to be para-\NP-hard when parameterized by the number of input permutations~\cite{bader2011transposition,pe1998median}. 

\autoref{lm:reducedBI-t}, \autoref{lm:sbmReduction}, and \autoref{lm:breakpointAdjsolution} present useful conditions to reduce the size of the input permutations in order to obtain polynomial kernels (\autoref{thm:kernel}). 


\begin{lemma}[$\star$]\label{lm:reducedBI-t}
If an adjacency occurs in all of the input permutations, then it occurs also in solutions of the {\sc Block-interchange Median} problem and the {\sc Transposition Median} problem. 
\end{lemma}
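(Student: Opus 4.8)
The plan is to prove the stronger statement that \emph{gluing} a common adjacency of all input permutations neither changes the optimum of the \textsc{Median} problem nor destroys all optimal solutions that contain that adjacency; iterating over all common adjacencies then yields the lemma. Fix $M$ to be the block-interchange or the transposition metric, write $\mathcal{I}=(\pi^1,\dots,\pi^k)$, and recall (see \autoref{sec-prelim}) that $d_M(\sigma,\pi)$ is the minimum number of $M$-operations sorting $\sigma$ into $\pi$. Suppose $a$ appears immediately before $b$ in every $\pi^i$. Let $\hat\pi^i$ be obtained from $\pi^i$ by \emph{gluing}: delete $a$ and rename $b$ to a fresh symbol $\star$, so $\hat\pi^i$ has length $n-1$, and set $\hat{\mathcal{I}}=(\hat\pi^1,\dots,\hat\pi^k)$. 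Conversely, for a permutation $\mu$ over the ground set of $\hat{\mathcal{I}}$, let $\mu^{+}$ \emph{expand} $\star$ back into the length-two block $a\,b$; note $\mu^{+}$ always has $a$ immediately before $b$, and $\widehat\tau^{\,+}=\tau$ for every $\tau$ that has $a$ immediately before $b$.

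I would establish two transfer inequalities, both by pushing a shortest sorting sequence through the surgery. \textbf{(i) Deletion does not increase distance:} for arbitrary $\sigma,\pi$, $d_M(\sigma\!\setminus\! a,\ \pi\!\setminus\! a)\le d_M(\sigma,\pi)$. Take a shortest $M$-sequence $\sigma=\nu_0\to\cdots\to\nu_\ell=\pi$ and delete $a$ from every $\nu_j$; each $M$-operation either descends to an $M$-operation on the shortened permutations --- removing one entry from a contiguous block leaves it contiguous, and two adjacent blocks stay adjacent --- or, when the moved block degenerates to the single deleted entry, it becomes the identity; hence $\ell$ operations still suffice. \textbf{(ii) Expansion does not increase distance:} for $\mu,\lambda$ over the ground set of $\hat{\mathcal{I}}$, $d_M(\mu^{+},\lambda^{+})\le d_M(\mu,\lambda)$. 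Take a shortest $M$-sequence $\mu=\xi_0\to\cdots\to\xi_m=\lambda$, pass to the permutations $\xi_j^{+}$, and lift each operation by carrying the block that contains $\star$ with the pair $a\,b$ glued inside it; this again preserves block contiguity and adjacency, so $m$ operations transform $\mu^{+}$ into $\lambda^{+}$. Combining: if $\tau$ and $\rho$ both have $a$ immediately before $b$, then $\tau\!\setminus\! a=\hat\tau$ and $\rho\!\setminus\! a=\hat\rho$ up to the renaming $b\leftrightarrow\star$, so $d_M(\hat\tau,\hat\rho)\le d_M(\tau,\rho)$ by (i) and $d_M(\tau,\rho)=d_M(\widehat\tau^{\,+},\widehat\rho^{\,+})\le d_M(\hat\tau,\hat\rho)$ by (ii); hence $d_M(\tau,\rho)=d_M(\hat\tau,\hat\rho)$.

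It then follows that the optima coincide. Writing $\mathrm{OPT}(\mathcal{I}):=\min_\sigma\sum_i d_M(\sigma,\pi^i)$ and similarly $\mathrm{OPT}(\hat{\mathcal{I}})$: from (ii) and the displayed equality applied to an optimal median $\hat\sigma$ of $\hat{\mathcal{I}}$ (note $\hat\sigma^{+}$ and each $\pi^i$ have the adjacency), $\sum_i d_M(\hat\sigma^{+},\pi^i)=\sum_i d_M(\hat\sigma,\hat\pi^i)=\mathrm{OPT}(\hat{\mathcal{I}})$, so $\mathrm{OPT}(\mathcal{I})\le\mathrm{OPT}(\hat{\mathcal{I}})$; from (i) applied to an optimal median $\sigma$ of $\mathcal{I}$, together with $\pi^i\!\setminus\! a=\hat\pi^i$, the permutation $\sigma\!\setminus\! a$ witnesses $\sum_i d_M(\sigma\!\setminus\! a,\hat\pi^i)\le\mathrm{OPT}(\mathcal{I})$, so $\mathrm{OPT}(\hat{\mathcal{I}})\le\mathrm{OPT}(\mathcal{I})$. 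Thus $\mathrm{OPT}(\mathcal{I})=\mathrm{OPT}(\hat{\mathcal{I}})$, and the expansion $\hat\sigma^{+}$ of an optimal median of $\hat{\mathcal{I}}$ is an optimal median of $\mathcal{I}$ containing the adjacency $a\,b$. To obtain a single optimal median containing \emph{all} common adjacencies at once, I would glue every maximal common run to one super-symbol and run the identical argument, each run behaving as a single element throughout. The part requiring the most care is exactly the bookkeeping in (i)--(ii): checking that block-interchanges and, more delicately, transpositions --- whose two blocks must stay adjacent --- remain well-defined operations of the same type after an entry is deleted from or inserted into a block, including the degenerate cases where a moved block shrinks to a point; everything else is a direct comparison of sorting-sequence lengths, and the argument is uniform in $M$ since it never invokes a closed-form distance formula.
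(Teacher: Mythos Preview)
Your proof is correct and takes a genuinely different route from the paper's own argument. The paper proceeds by direct modification: it assumes a median $\sigma$ in which $a$ and $b$ are not adjacent, explicitly builds $\sigma'$ by sliding $b$ next to $a$, and then argues that any optimal sorting sequence from $\sigma$ to each $\pi^i$ can be \emph{simulated} from $\sigma'$ by locally redirecting the three affected cut points. Your argument instead passes through the reduced instance $\hat{\mathcal I}$ and proves two clean transfer inequalities (deletion and expansion of a symbol never increase the distance), from which $\mathrm{OPT}(\mathcal I)=\mathrm{OPT}(\hat{\mathcal I})$ and the existence of an adjacency-preserving median follow formally.

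The paper's simulation is shorter to state but somewhat informal at the boundaries (it does not spell out why the three local cut-redirections always yield a legal block-interchange or transposition, nor what happens when two of the redirected cuts coincide). Your two transfer lemmas make exactly that bookkeeping explicit and uniform in $M$, and they have the additional payoff of proving directly that the glued instance is equivalent --- which is precisely the statement the kernel in \autoref{thm:kernel} needs, rather than merely the existence of a good median. The trade-off is that you carry a bit more notation ($\hat{\cdot}$, $\cdot^{+}$, $\setminus a$) and must handle the degenerate case where a moved block collapses to the deleted entry, which you correctly flag.
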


The argument of \autoref{lm:reducedBI-t} does not hold when dealing with short-block-moves, 
because a simulation operation 
(i.e., an operation that must be applied in non-reduced permutations analogous to the reduced ones) 
may be affected when it exceeds the size of a block. 
Nevertheless, an analogous result is proved in \autoref{lm:sbmReduction}. 
We define the {\sc $d$-M Median} problem as the {\sc Median} problem for a metric $M$ parameterized by the sum $d$ of the distances between the solution and all the input instances, i.e., the median solution.

\begin{lemma}\label{lm:sbmReduction}
For {\sc $d$-Short-Block-Move Median}, 
let $I$ be an interval with $6d+1$ consecutive columns where in each column of $I$ all elements are equal, and let the middle column be with element $c$, i.e., the $(3d+1)$th column of $I$ has only element $c$. 
Then there is a median solution $s^{\star}$ that satisfies following properties: 
\begin{enumerate}
    \item Element $c$ occurs in $s^{\star}$ in the same position as in the input permutations, i.e., $c$ occurs in the $(3d+1)$th column of $s^{\star}$, which is the same column of~$I$.
    \item For any element $e$ that occurs before $I$ in the input, $e$ does not occur after the $(3d+1)$th column of $I$ in~$s^{\star}$.
    \item All elements of the input that occur in $I$ and take place before (resp. after) the $(3d+1)$th column of $I$ also occur in~$s^{\star}$  
    before (resp. after) the $(3d+1)$th column. 
\end{enumerate}
\end{lemma}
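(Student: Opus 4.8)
The plan is to establish the three properties by an exchange argument, starting from an arbitrary median solution $s$ and transforming it into a solution $s^{\star}$ of no larger total cost that satisfies all three conditions. The guiding idea is that, because the total cost is at most $d$, no single input permutation can be reached from the median by more than $d$ short-block-moves, and each such move can displace an element by at most two positions; hence elements cannot migrate very far, and the interval $I$ of $6d+1$ identical columns is wide enough to ``pin'' the element $c$ and to act as a barrier. Concretely, I would first argue Property~1: suppose $c$ occurs in $s$ in some column outside the $(3d+1)$th. Then in every input permutation $c$ sits in the central column of $I$, which is at least $3d+1$ columns away from both ends of $I$. Consider an optimal sorting sequence of short-block-moves from $s$ to a fixed input $\pi^i$; along this sequence $c$ must travel to the central column, but also nothing forces $c$ to leave a central position once it is there. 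I would use this to show that the column of $c$ in $s$ can be assumed to be within distance (something like) $2d$ of the central column, and then a direct surgery—cyclically shifting the block between $c$'s position in $s$ and the central column, mimicking the corresponding moves—produces $s'$ with $c$ in the central column and cost not increased. The replacement table (\autoref{tab:replacing-beta-is}) and the fact that optimal sequences may be taken to consist only of correcting moves (Heath and Vergara~\cite{Vergara}) are the tools I would lean on to re-simulate the affected moves.

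For Properties~2 and~3, I would exploit the ``barrier'' structure of $I$ more heavily. For Property~2: let $e$ be an element occurring before $I$ in all inputs, i.e., to the left of column $1$ of $I$ (strictly left of all $6d+1$ identical columns) in every $\pi^i$. If in $s$ the element $e$ occurs to the right of the central column of $I$, then in each input $e$ must be carried from its position (left of $I$) in $\pi^i$ leftward-or-stationary relative to $I$, while in $s$ it is on the right side; a short-block-move shifts an element by at most $2$, so to move $e$ across the $3d+1$ columns of the right half of $I$ would require more than $d$ moves in a single comparison—unless other elements of $I$ are also displaced, but then those displacements are charged and again the budget $d$ is exceeded. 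I would make this precise by a potential/counting argument: define, for each input $\pi^i$, the number of columns of $I$ that $e$ must cross, and show the sum of crossing-costs forced by a configuration violating Property~2 exceeds $d$. Having derived a contradiction, $e$ lies on the correct side in $s$; and if not, I swap $e$ back into a left-of-central position and re-simulate, showing cost does not increase because $I$ being monochromatic means the simulated moves inside $I$ are ``free'' (they act on equal elements and hence can be reordered without cost to any input). Property~3 is the same argument applied to elements that live inside $I$ on a given side: moving such an element from the left half to the right half (or vice versa) past the central column would cost more than $d$ across the $3d+1$ intervening identical columns in at least one input, so a median solution can be normalized to keep each such element on its original side.

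The main obstacle I anticipate is the careful bookkeeping in the re-simulation step: when I cyclically shift a block of $s$ to move $c$ (or $e$) into place, I must verify that every short-block-move in the optimal sequence from $s$ to each $\pi^i$ can be faithfully mimicked from $s^{\star}$ to $\pi^i$ \emph{without} spending extra moves, and in particular that a move which straddles the boundary of the shifted block does not get ``stuck'' by the $3$-element length restriction. This is exactly the issue flagged in the paragraph preceding \autoref{lm:sbmReduction}: unlike the block-interchange and transposition cases of \autoref{lm:reducedBI-t}, here a simulated operation ``may be affected when it exceeds the size of a block.'' The resolution is that the interval $I$ is deliberately taken of width $6d+1$ rather than a constant: the generous slack means any short-block-move of the original sequence touches at most a bounded window, and there is always room inside the monochromatic region to absorb the shift using only correcting moves. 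So the quantitative heart of the proof is choosing the constants ($6d+1$, central column $3d+1$) so that ``displacement $\le 2$ per move'' times ``$\le d$ moves per input'' strictly under-runs the half-width $3d$ of $I$, forcing the normalization; the rest is the (tedious but routine) simulation bookkeeping, which I would present case-by-case mirroring \autoref{tab:replacing-beta-is}. \qed
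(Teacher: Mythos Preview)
Your high-level strategy (exchange argument plus a displacement/budget count using that a short-block-move shifts an element by at most two positions) matches the paper's. Two points, however, deserve correction.

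\medskip

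\textbf{Property~1 is simpler than you plan.} The paper does not re-simulate a sorting sequence move-by-move, nor does it invoke \autoref{tab:replacing-beta-is}. It argues directly with distance lower bounds: if in a median $s$ the central column $\ell$ holds some element $a\neq c$ (with $a$ sitting in column $p$ of the input and $c$ sitting in column $j$ of $s$), then for every input $\pi^i$ one already pays at least $|p-\ell|/2$ to put $a$ back and at least $|\ell-j|/2$ to put $c$ back. Swapping $a$ and $c$ in $s$ yields $s^\star$ whose cost to each $\pi^i$ is governed by $|j-p|/2 \le |p-\ell|/2 + |\ell-j|/2$, so the total does not increase. No ``cyclic shift'' and no case analysis on move types is needed; your planned bookkeeping would work in principle but is substantially heavier than what the lemma requires.

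\medskip

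\textbf{Property~3 has a genuine gap.} You propose to reuse the barrier argument of Property~2, claiming that moving an element of the left half $A$ into the right half $B$ ``past the central column would cost more than $d$ across the $3d+1$ intervening identical columns.'' This is false as stated: an element sitting immediately left of $c$ can reach $B$ with a single hop, so there is no $3d+1$-column barrier for elements near the center. The paper's argument is different and essentially combinatorial: if some $x\in A$ lands in $B$ in $s^\star$, then by pigeonhole some element of $B$ is displaced. Either (Case~1) a $B$-element ends up \emph{after} $B$, and since $B$ has $3d$ columns this forces more than $d$ moves in total (contradiction); or (Case~2) displaced $B$-elements land in $A$, in which case the displacements form a cycle $\mathcal{C}$ among positions, and one can undo $\mathcal{C}$ (put every element back on its original side) without increasing cost. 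Your plan needs this pigeonhole/cycle step; the direct ``too far to cross'' count does not suffice for Property~3.
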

\begin{proof}
For the first statement, assume that $s$ is a median solution in which the column $\ell$ that contains element $c$ in the input contains $a$ in such a position and $c$ is in a position $j$. 
Since $B$ contains $6d+1$ columns, with respect to the input permutations $a$ must be also in $I$, otherwise at least $3d/2>d$ moves would be necessary and so this is a {\sf no}-instance. 
Assume $a$ occurs in a column $p$.
Hence, for each input permutation $\pi^i$, the short-block-move distance between $s$ and $\pi^i$ must use at least $|p - \ell|/2$ operations to move $a$ to column $\ell$, plus $|\ell - j|/2$ operations to move $c$ to column~$j$. 
Based on that, we can transform $s$ into a permutation $s^{\star}$ by keeping $c$ in position $\ell$ and $a$ in position~$j$ applying the same number of short-block-moves as before, once it is necessary to apply at least $|j-p|/2$ operations. 

For the second statement, note that if there is permutation where element $e$ appears after $I$, then it is necessary to apply at least $(6d+1)/2 > d$ operations, which contradicts a median solution for {\sc $d$-Short-Block-Move Median}. 

For the third statement, assume $A$ is the interval inside $I$ before the column of $c$'s, i.e., before the $(3d+1)$th column, and $B$ is the interval inside $I$ after the column of $c$'s. 
As proved in the previous statement, 
a median solution does not have an element $e$ that occurs in the input before $I$ and in the solution it is after $I$. 
Let us consider that in $s^{\star}$ an element of $x\in A$ occurs in $B$. 
By the pigeonhole principle, one of the elements of $B$ must occur outside of $B$. Let us consider the following cases:

Case 1) An element of $B$ occurs after $B$ in $s^{\star}$. 
In this case, let us consider that $x$ moves from $A$ to $B$ and it takes place where $y\in B$ was. 
Hence, $y$ moves to another place and, also by the pigeonhole principle, some element of $B$ takes place after $B$ in $s^{\star}$. 
Since there are $3d$ elements in $B$, more than $d$ operations were necessary in total, similar to the previous second statement, which contradicts a median solution for {\sc $d$-Short-Block-Move Median} (see \autoref{fig:sbmblocks}(a)); 

Case 2) An element of $B$ occurs inside $A$ in $s^{\star}$. 
In this case, 
with respect to an element of $A$~(resp. of~$B$) of the input that occurs in $B$~(resp. in~$A$) in $s^{\star}$, by the pigeonhole principle, there must be a cycle $\mathcal{C}$ according to the moves necessary to be applied between the positions that the elements change their positions among to $s^{\star}$ and the input permutations. 
Thus, we can transform $s^{\star}$ into $s'$ by keeping all elements of $A$ (resp. of $B$) in $A$ (resp. in $B$), since in $s^{\star}$ there must be applied moves following $\mathcal{C}$ to correct the elements according to the the input columns (see \autoref{fig:sbmblocks}(b)). 
\begin{figure}[!h]
    \centering
     \vspace{-.3cm}
    \includegraphics[width=8.9cm]{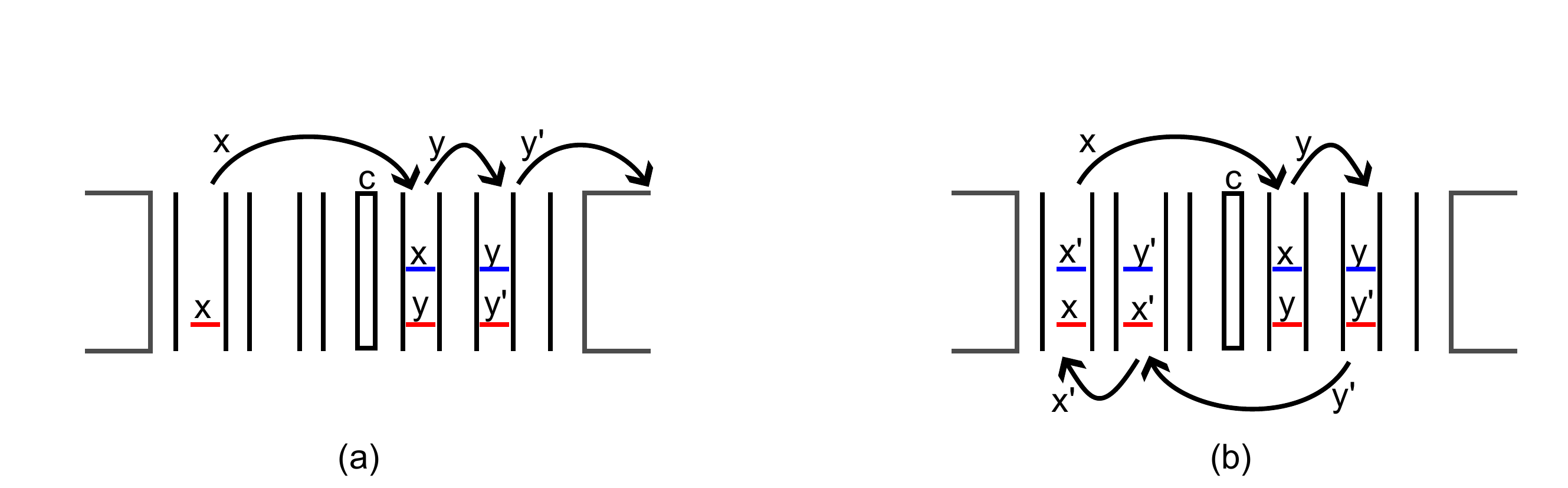}
    \caption{Cases of the third statement of \autoref{lm:sbmReduction}. In red are elements before their moves to other positions and in blue elements in $s^{\star}$ (after the moves). (a) Case 1, where element $x$ of $A$ takes place in $B$ and element $y'$ of $B$ takes place after $B$. (b) It represents whenever an element of $A$ takes place in $B$. Arrows follow a cycle $\mathcal{C}$ representing new positions of the elements. Element $x$ moves to the position where there was $y$, element $y$ moves to the position where there was $y'$, and so on. \label{fig:sbmblocks}}
    \vspace{-.4cm}
\end{figure}
~\qed\end{proof}

The complexity of the {\sc Swap Median} problem is still open even if the number of input strings is three. 
Bryant~\cite{bryant1998complexity} proved that some variations of the {\sc Breakpoint Median} problem are \NP-hard  having three input permutations, by dealing with the cases of linear, circular, signed, or unsigned permutations. 
One condition of a breakpoint median solution for given three input permutations is that if there are adjacencies common to the three input permutations, then these adjacencies can be assumed to be in a median genome~\cite{bryant1998complexity}. 
This result can be directly generalized, analogous to \autoref{lm:reducedBI-t}, as follows. 

\begin{lemma}[$\star$]\label{lm:breakpointAdjsolution}
If an adjacency occurs in all of the input permutations, then it occurs also in a solution of the {\sc Breakpoint Median} problem.
\end{lemma}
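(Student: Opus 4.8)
The plan is to run an exchange argument directly on adjacency sets, in the spirit of \autoref{lm:reducedBI-t} but with no sorting sequence to simulate (none is needed for the breakpoint distance). Write $\adj(\tau)$ for the set of unordered pairs that are consecutive in a permutation $\tau$, and let $\mu(e)$ be the number of input permutations having the pair $e$ as an adjacency. Since all permutations involved have the same length, $|\adj(\tau)|$ is a fixed constant and $d_{\sf BP}(\sigma,\pi^i)=|\adj(\sigma)|-|\adj(\sigma)\cap\adj(\pi^i)|$, so a solution of {\sc Breakpoint Median} is exactly a permutation $\sigma$ maximising $\sum_{e\in\adj(\sigma)}\mu(e)$. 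The only structural fact I would use is a degree bound: for every element $v$ one has $\sum_{e\ni v}\mu(e)\le 2k$, because $v$ has at most two neighbours in each of the $k$ inputs. Hence, when the common adjacency is $\{a,b\}$ (so $\mu(\{a,b\})=k$), any two distinct pairs incident to $a$ and different from $\{a,b\}$ have $\mu$-values summing to at most $k$, and likewise at $b$; this is where the hypothesis ``$\{a,b\}$ occurs in \emph{all} inputs'' is really used.

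Now take a breakpoint median $\sigma$ and suppose $a$ and $b$ are not adjacent in it (otherwise there is nothing to prove); up to renaming write $\sigma=[\,\dots,x,a,p_1,\dots,p_\ell,b,w,\dots\,]$ with $\ell\ge 1$, where $x$ (resp.\ $w$) is omitted when $a$ (resp.\ $b$) is an extremity of $\sigma$. Consider the two permutations obtained from $\sigma$ by a single block-reversal: $\sigma_C$ reverses the factor $(p_1,\dots,p_\ell,b)$, i.e.\ $\sigma_C=[\,\dots,x,a,b,p_\ell,\dots,p_1,w,\dots\,]$, and $\sigma_D$ reverses the factor $(a,p_1,\dots,p_\ell)$, i.e.\ $\sigma_D=[\,\dots,x,p_\ell,\dots,p_1,a,b,w,\dots\,]$. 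As the permutations are unsigned, a block-reversal changes the adjacency set only at the two borders of the reversed factor: passing from $\sigma$ to $\sigma_C$ trades $\{a,p_1\}$ and $\{b,w\}$ for $\{a,b\}$ and $\{p_1,w\}$, while passing to $\sigma_D$ trades $\{x,a\}$ and $\{p_\ell,b\}$ for $\{x,p_\ell\}$ and $\{a,b\}$ (drop the border terms involving an absent $x$ or $w$). All the pairs appearing here are well defined and pairwise distinct, using $\ell\ge 1$ and that $a,b$ are non-adjacent.

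Summing over $i$ the resulting change of $|\adj(\cdot)\cap\adj(\pi^i)|$ and using $\mu(\{a,b\})=k$, the improvement of $\sigma_C$ over $\sigma$ is $\Delta_C=k+\mu(\{p_1,w\})-\mu(\{a,p_1\})-\mu(\{b,w\})$ and that of $\sigma_D$ is $\Delta_D=k+\mu(\{x,p_\ell\})-\mu(\{x,a\})-\mu(\{p_\ell,b\})$. If both $x$ and $w$ are present, the degree bound yields $\mu(\{a,p_1\})+\mu(\{x,a\})\le k$ and $\mu(\{b,w\})+\mu(\{p_\ell,b\})\le k$, whence $\Delta_C+\Delta_D\ge 2k+\mu(\{p_1,w\})+\mu(\{x,p_\ell\})-k-k\ge 0$; if $x$ is absent then $\Delta_D=k-\mu(\{p_\ell,b\})\ge 0$ directly, and symmetrically if $w$ is absent. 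In every case $\max(\Delta_C,\Delta_D)\ge 0$, so one of $\sigma_C,\sigma_D$ is again a breakpoint median, and it contains the adjacency $\{a,b\}$, which completes the argument.

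The delicate point is precisely the choice of the two candidates. A single reconnection of $a$ and $b$ can strictly increase the distance to some individual input — for instance when that input locally reads $p_1\,a\,b\,p_\ell$ while in $\sigma$ the pairs $\{a,p_1\}$ and $\{p_\ell,b\}$ are present but $a$ and $b$ lie far apart — so there is no per-input monotone move; one must argue globally, averaging the two symmetric block-reversals and invoking the $2k$ degree bound that $\mu(\{a,b\})=k$ provides. Everything else — the border-adjacency bookkeeping for reversals and the cases where $a$ or $b$ sits at an extremity — is routine.
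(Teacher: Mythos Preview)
Your proof is correct and, in one respect, cleaner than the paper's. Both arguments are exchange arguments built on the same degree bound $\sum_{e\ni v}\mu(e)\le 2k$ combined with $\mu(\{a,b\})=k$, but they differ in how the competing candidate is produced. The paper forms $Y=\adj(X)\cup\{\{a,b\}\}$ and then deletes, at each of $a$ and $b$, whichever incident edge of $\adj(X)$ has the smaller weight, asserting a strict improvement. You instead manufacture two explicit permutations $\sigma_C,\sigma_D$ by single block-reversals and average their gains.

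What your approach buys is that $\sigma_C$ and $\sigma_D$ are genuine permutations by construction, so $\max(\Delta_C,\Delta_D)\ge 0$ immediately delivers a median containing $\{a,b\}$. In the paper's argument, after adding $\{a,b\}$ and deleting the min-weight edge at $a$ and at $b$, the resulting edge set $Y$ need not be a Hamiltonian path: of the four possible deletion patterns, each either disconnects the graph or leaves a cycle, so the contradiction does not go through without additional justification. Your two reversal candidates correspond exactly to the two deletion patterns that \emph{do} yield a single path, and the averaging over $\Delta_C$ and $\Delta_D$ compensates for giving up the ``take the lighter edge'' freedom. The paper also claims a strict inequality, whereas you---correctly, and in line with the lemma as stated---obtain only a weak one; strictness can fail when the neighbouring weights at $a$ and $b$ are balanced.
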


Next, we consider the parameterized complexity of some median problems parameterized by the distance $d$.
The previous lemmas allow us to develop reduction rules in order to obtain \autoref{thm:kernel}. 

\begin{theorem}[$\star$]\label{thm:kernel}
The following problems admit a polynomial kernel parameterized by the value $d$ of the desired  median solution:
{\sc $d$-Swap Median}; 
{\sc $d$-Breakpoint Median}; 
{\sc $d$-Block-interchange Median}; 
{\sc $d$-Transposition Median}; 
{\sc $d$-Short-Block-Move Median}.

\end{theorem}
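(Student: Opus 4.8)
The plan is to design two independent groups of polynomial-time reduction rules — one that bounds the number~$k$ of input permutations and one that bounds their common length~$n$, each by a polynomial in~$d$ — so that the fully reduced instance (a $k\times n$ array of integers together with~$d$) has size polynomial in~$d$. Fix one of the five metrics~$M$. In a {\sf yes}-instance there is a permutation~$s^{\star}$, a \emph{median}, with $\sum_{s\in S}d_{M}(s^{\star},s)\le d$; since every input permutation different from~$s^{\star}$ contributes at least~$1$ to this sum, at most~$d$ of the inputs differ from~$s^{\star}$, and therefore the input contains at most $d+1$ pairwise distinct permutations. This gives the first rule (if more than $d+1$ distinct permutations occur, output a trivial {\sf no}-instance) and, together with a multiplicity cap, bounds~$k$: extra copies of a permutation~$\tau$ contribute~$0$ to the value of~$\tau$ itself as a candidate, so capping the multiplicity of every input permutation at $d+1$ cannot change which instances are {\sf yes} — in the capped instance, if some permutation~$\tau$ still occurs more than~$d$ times then any median must equal~$\tau$, and both the original and the capped value of~$\tau$ equal $\sum_{s\ne\tau}d_{M}(\tau,s)$. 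After these two rules, $k\le(d+1)^{2}$.

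To bound~$n$, the idea is, for each metric, to single out a ``trivial'' local structure that must be respected by some median, remove it by a rule that keeps~$d$ unchanged, and then argue that an instance with no such structure left has small~$n$ or is a {\sf no}-instance. For {\sc $d$-Swap Median} the structure is a \emph{clean column} (a position on which all inputs agree): a swap alters exactly two positions, so $\sum_{s\in S}\lvert\{j:s^{\star}(j)\ne s(j)\}\rvert\le 2d$ and at most~$2d$ columns are not clean; moreover an exchange argument shows every median carries the common value on every clean column, because moving that value into its column — i.e.\ composing~$s^{\star}$ with one transposition — splits an algebraic cycle of~$s^{\star}(s)^{-1}$ for every input~$s$ and so strictly decreases $\sum_{s}d_{\sf swap}(s^{\star},s)$. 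Since such a column is a fixed point of $s^{\star}(s)^{-1}$ for every~$s$, deleting all clean columns and relabelling preserves every distance, after which $n\le 2d$ or we output a {\sf no}-instance. For {\sc $d$-Breakpoint/Block-interchange/Transposition Median} the structure is a common adjacency: by \autoref{lm:breakpointAdjsolution} and \autoref{lm:reducedBI-t} some median contains any adjacency shared by all inputs, and contracting it in all inputs and in that median gives an equivalent instance with one fewer element and the same~$d$ (for block-interchange and transposition this uses that reducing a permutation preserves both distances; see \autoref{BIequalityreduced} and Appendix~\ref{sec:Operacoes}). Applying this rule exhaustively — one adjacency at a time, re-invoking the lemma on the current instance — leaves an instance with no common adjacency; since a breakpoint, a transposition, and a block-interchange destroy at most $1$, $3$, and $4$ adjacencies of~$s^{\star}$ respectively, a median of the reduced instance satisfies $\lvert\adj(s^{\star})\rvert=\bigl\lvert\adj(s^{\star})\setminus\bigcap_{s\in S}\adj(s)\bigr\rvert\le c\sum_{s\in S}d_{M}(s^{\star},s)\le c\,d$ for the appropriate constant~$c$, hence $n=O(d)$ or we output a {\sf no}-instance. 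For {\sc $d$-Short-Block-Move Median} I would use \autoref{lm:sbmReduction}: a short-block-move moves at most three elements, so at most~$3d$ columns are not clean, and these split the clean columns into at most $3d+1$ maximal runs; by \autoref{lm:sbmReduction} a run of $6d+1$ equal columns has a median in which its middle column acts as an impassable barrier, so any longer run can be truncated to exactly $6d+1$ equal columns without changing the answer, leaving $n\le 3d+(3d+1)(6d+1)=O(d^{2})$.

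Combining the two groups gives $k\le(d+1)^{2}$ and $n=O(d^{2})$ for every reduced {\sf yes}-instance, hence total size polynomial in~$d$, and every rule is plainly polynomial-time; this is the claimed kernel. The step I expect to be the main obstacle is the short-block-move case: turning \autoref{lm:sbmReduction} into a rigorous ``truncate a long constant run to length $6d+1$'' rule requires carefully showing that the deleted interior columns are genuinely irrelevant — they carry the same element in every input and, by the barrier property, are never touched by an optimal correcting sorting sequence — and that this argument goes through even though the exact value of~$d_{\sf sbm}$ is not known to be polynomially computable, so that none of the $n$-bounding rules ever needs to compute a distance. A secondary subtlety, for block-interchange and transposition, is that \autoref{lm:reducedBI-t} only yields \emph{some} median containing a given common adjacency, which is why the contraction must be performed one adjacency at a time on the current instance rather than against a single fixed median for all common adjacencies simultaneously.
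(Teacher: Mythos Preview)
Your proposal is correct and reaches the same kernel sizes as the paper, but the part that bounds the number~$k$ of input permutations takes a genuinely different and simpler route. The paper bounds~$k$ via five metric-specific reduction rules (columns or successor-sets with too many distinct values, elements that are ``heavy'' in two places, rows with many copies, etc.) and then argues $k=O(d^{2})$ separately for each metric. You instead observe, uniformly for all five metrics, that a {\sf yes}-instance has at most $d+1$ pairwise distinct inputs and that capping every multiplicity at $d+1$ is safe, yielding $k\le(d+1)^{2}$ in one stroke. This is cleaner, metric-agnostic, and makes transparent that no distance ever needs to be evaluated --- a point that matters because $d_{\sf T}$ is \NP-hard to compute.

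The part that bounds~$n$ is essentially the paper's argument: delete clean columns for swaps (your exchange argument that any median carries the common value on a clean column is exactly the splitting-a-cycle step the paper's Rule~6 relies on), contract common adjacencies via \autoref{lm:reducedBI-t} and \autoref{lm:breakpointAdjsolution} for breakpoint/block-interchange/transposition, and shorten long homogeneous runs via \autoref{lm:sbmReduction} for short-block-moves. Your resulting bounds $n=O(d)$ (first four metrics) and $n=O(d^{2})$ (short-block-moves) match the paper's. The one place where both your write-up and the paper stay informal is the safety of the short-block-move truncation: \autoref{lm:sbmReduction} supplies the barrier property, but neither text spells out in full detail why removing a homogeneous column flanked on each side by~$3d$ further homogeneous columns leaves every $d_{\sf sbm}(s^{\star},s)$ unchanged; you correctly flag this as the delicate step. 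One small precision to add: for block-interchange and transposition the ``common adjacency'' used in \autoref{lm:reducedBI-t} is an ordered pair~$ab$, so the contraction rule should be stated for ordered common adjacencies.
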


\vspace{-.15cm}
\section{Results for the {\sc Closest} problems}\label{sec:closest}
\vspace{-.1cm}

First, we present a framework transformation from median to closest problem. 
Since $\bimp$ is \NP-hard even for three input permutations, we show that {\sc Block-Interchange Closest} ($\bicp$)
is \NP-hard even for three input permutations. 
This is a stronger result compared to the \NP-hardness presented by Cunha et al.~\cite{cunha2020computational} for the case where there is an arbitrary number of input permutations.

\subsection{Reducing median to closest} 

The  polynomial reduction presented in \autoref{thm:bimpNPC} allows us to show that not only the {\sc Block-Interchange Closest} problem is \NP-hard for three input permutations, but also a closest problem where the corresponding median with a constant number of input permutations is \NP-hard. 
Next we show that it is the case for the block-interchange rearrangement.

\begin{definition}
Given $\pi^1$ with $p$ elements and $\pi^2$ with $q$ elements, the \emph{union} of $\pi^1$ and $\pi^2$ is a permutation $\pi^1 \uplus \pi^2$ with $p+q+1$ elements such that $\pi^1 \uplus \pi^2 = [\pi^1_1, \pi^1_2, \cdots, \pi^1_p, (p+1), (\pi^2_1 + p+1), (\pi^2_2 + p+1), \cdots, (\pi^2_q + p+1)]$. For simplicity, $\pi^1 \uplus \pi^2$ is denoted by $\pi^{1,2}$. Permutations $\pi^1$ and $\pi^2$ are called \emph{parts of the union}.
\end{definition}

\begin{lemma}[$\star$]\label{lm:sumequal}
Given permutations $\pi^1$ and $\pi^2$, we have that $d_{\sf BI}(\pi^{1,2}) = d_{\sf BI}(\pi^1) + d_{\sf BI}(\pi^2)$.
\end{lemma}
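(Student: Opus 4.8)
The plan is to reduce the statement to an additivity property for the number of cycles of the reality-and-desire diagram, and then apply Christie's formula $d_{\sf BI}(\pi)=\frac{(\ell+1)-c(G(\pi))}{2}$ for a permutation $\pi$ of length $\ell$~\cite{christie1998genome}. Write $n=p+q+1$ for the length of $\pi^{1,2}$. It is enough to prove that
\[
c\bigl(G(\pi^{1,2})\bigr) = c\bigl(G(\pi^1)\bigr) + c\bigl(G(\pi^2)\bigr),
\]
since then, applying Christie's formula three times,
\begin{align*}
d_{\sf BI}(\pi^{1,2}) &= \frac{(p+q+2)-c(G(\pi^1))-c(G(\pi^2))}{2}\\
 &= \frac{(p+1)-c(G(\pi^1))}{2} + \frac{(q+1)-c(G(\pi^2))}{2} = d_{\sf BI}(\pi^1)+d_{\sf BI}(\pi^2).
\end{align*}

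The structural observation that drives the proof is that, by definition of the union, $\pi^{1,2}$ has $p+1$ in position $p+1$, maps the first $p$ positions onto $\{1,\dots,p\}$ through $\pi^1$, and maps the last $q$ positions onto $\{p+2,\dots,p+q+1\}$ through a copy of $\pi^2$ shifted by $p+1$. Thus $\pi^{1,2}$ is essentially a ``direct sum'' of $\pi^1$ and $\pi^2$ glued at a fixed point, and one expects its reality-and-desire diagram to split along this fixed point.

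To make this precise, I would partition the vertex set of $G(\pi^{1,2})$ as $A\cup B$ with $A=\{0,-1,+1,\dots,-p,+p,-(p+1)\}$ and $B=\{+(p+1),-(p+2),+(p+2),\dots,-(p+q+1),+(p+q+1),-(p+q+2)\}$, and then verify: (i) no reality edge and no desire edge of $G(\pi^{1,2})$ has one endpoint in $A$ and one in $B$ --- using $\pi^{1,2}(p)=\pi^1_p$, $\pi^{1,2}(p+1)=p+1$, and $\pi^{1,2}(p+2)=\pi^2_1+p+1$, one checks that the only ``seam'' edges, namely those incident to $-(p+1),+p$ on one side and $+(p+1),-(p+2)$ on the other, all stay on their own side; (ii) the sub-multigraph of $G(\pi^{1,2})$ induced on $A$ is, with the very same labels, the diagram $G(\pi^1)$; and (iii) the sub-multigraph induced on $B$ is isomorphic to $G(\pi^2)$ under the relabeling $+(p+1)\mapsto 0$ (so the $B$-vertex $+(p+1)$ plays the role of the source vertex $0$ of $G(\pi^2)$), $\pm(p+j)\mapsto\pm(j-1)$ for $2\le j\le q+1$, and $-(p+q+2)\mapsto-(q+1)$. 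Once (i)--(iii) hold, $G(\pi^{1,2})$ is the disjoint union, as multigraphs, of a copy of $G(\pi^1)$ and a copy of $G(\pi^2)$; since each of these $2$-regular multigraphs decomposes uniquely into its connected components and each component is a single alternating cycle, the number of cycles is additive, which is exactly the displayed identity.

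The only mildly delicate point is the bookkeeping in (i)--(iii): one has to check, index by index, that the reality edge set $\{(0,-\pi^{1,2}(1))\}\cup\{(+\pi^{1,2}(i),-\pi^{1,2}(i+1)):1\le i\le n-1\}\cup\{(+\pi^{1,2}(n),-(n+1))\}$ and the desire edge set $\{(+i,-(i+1)):0\le i\le n\}$ each split exactly along $A\cup B$, and that the shift $x\mapsto x-(p+1)$ sends the $B$-part of the reality (resp. desire) edges precisely onto the reality (resp. desire) edges of $G(\pi^2)$. There is no real obstacle here, only the need to handle the two positions adjacent to the fixed point $p+1$ with care; everything else is routine.
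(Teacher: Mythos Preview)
Your proof is correct and follows essentially the same approach as the paper: both arguments observe that, because the element $p+1$ separates the values of $\pi^1$ from those of the shifted $\pi^2$, the reality-and-desire diagram $G(\pi^{1,2})$ is the disjoint union of $G(\pi^1)$ and (a relabeled copy of) $G(\pi^2)$, hence $c(G(\pi^{1,2}))=c(G(\pi^1))+c(G(\pi^2))$, and then apply Christie's formula. The paper states this in two sentences, whereas you spell out the vertex partition $A\cup B$ and the edge-by-edge verification explicitly; your version is more detailed but the underlying idea is identical.
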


\begin{theorem}[$\star$]\label{thm:closestmedian}
For any three permutations $\pi^1, \pi^2, \pi^3$, we have that $\sigma$ is a solution of $\bimp$ if and only if $\biguplus\limits_{i=1}^{6} \sigma$ is a solution of $\bicp$ for the permutations 
$\pi^{1,2,3,1,2,3}, \pi^{2,1,1,3,3,2}$, and $\pi^{3,3,2,2,1,1}$.
\end{theorem}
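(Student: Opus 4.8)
The plan is to reduce $\bicp$ on the six carefully-chosen union permutations to $\bimp$ on the original three permutations, by showing that the two optimization problems have matching optimal values, and that an optimal solution of one gives an optimal solution of the other. The engine behind this is \autoref{lm:sumequal}: the block-interchange distance is additive over unions, so $d_{\sf BI}\bigl(\biguplus_{i=1}^{6}\sigma \;,\; \pi^{a_1,\dots,a_6}\bigr) = \sum_{i=1}^{6} d_{\sf BI}(\sigma, \pi^{a_i})$ whenever the parts line up positionally. Thus, if we write the three target permutations as $\tau^1 = \pi^{1,2,3,1,2,3}$, $\tau^2 = \pi^{2,1,1,3,3,2}$, $\tau^3 = \pi^{3,3,2,2,1,1}$, then for the candidate solution $\Sigma_\sigma := \biguplus_{i=1}^{6}\sigma$ we get, by counting how many times each $\pi^j$ appears across the six coordinates of each $\tau^\ell$,
\[
d_{\sf BI}(\Sigma_\sigma, \tau^\ell) \;=\; \sum_{j=1}^{3} m_{\ell,j}\, d_{\sf BI}(\sigma,\pi^j),
\]
where $m_{\ell,j}$ is the multiplicity. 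The matrix $(m_{\ell,j})$ is designed so that every row sum is $6$ and, crucially, so that each column sum — and in fact each of the three "distances" $d_{\sf BI}(\Sigma_\sigma,\tau^\ell)$ — equals exactly $2\sum_{j=1}^{3} d_{\sf BI}(\sigma,\pi^j)$, i.e., twice the median objective. First I would verify this balance directly from the definition of the six words (the multiplicities are $(2,2,2)$ for $\tau^1$, $(2,2,2)$ for $\tau^2$, $(2,2,2)$ for $\tau^3$ after accounting for which $\pi^j$ sits in each slot), so that the maximum of the three distances is automatically attained simultaneously by all three and equals $2\cdot\mathrm{OPT}_{\bimp}$.

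Given this, the two directions are straightforward. If $\sigma$ is an optimal median, then $\Sigma_\sigma$ achieves closest-value $2\sum_j d_{\sf BI}(\sigma,\pi^j) = 2\cdot\mathrm{OPT}_{\bimp}$; to see this is optimal for $\bicp$, take any permutation $\rho$ on the same ground set of size $6|\pi| + 5$ (or whatever the union length is) and observe that
\[
3\max_\ell d_{\sf BI}(\rho, \tau^\ell) \;\ge\; \sum_{\ell=1}^{3} d_{\sf BI}(\rho,\tau^\ell) \;=\; \sum_{\ell=1}^{3}\sum_{j=1}^{3} m_{\ell,j}\, (\text{contribution}),
\]
but here one must be careful: $\rho$ need not respect the block structure, so $d_{\sf BI}(\rho,\tau^\ell)$ is not literally a sum over parts. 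The clean way around this is to observe that restricting/projecting $\rho$ to each of the six blocks gives six permutations $\rho^1,\dots,\rho^6$ with $d_{\sf BI}(\rho,\tau^\ell) \ge \sum_{i=1}^{6} d_{\sf BI}(\rho^i, \pi^{a^\ell_i})$ (block-interchanges only ever help the in-block distances when we project, by a standard argument that a block-interchange restricted to a block is either a block-interchange or the identity on that block), and then average the three inequalities to get $\sum_\ell d_{\sf BI}(\rho,\tau^\ell) \ge 2\sum_{i,j}\bigl(\text{over all }18\text{ block-target pairs}\bigr) \ge 2\sum_{j}\sum_{i:\,a^\ell_i=j}\dots$ — more precisely, the averaging must reorganize the $18$ pairs so that for each block position $i$, the three targets contribute exactly the three distinct permutations $\pi^1,\pi^2,\pi^3$ once each. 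That last property is exactly what the specific choice of the six words guarantees: reading column $i$ of the $3\times 6$ array of superscripts gives a permutation of $\{1,2,3\}$ for each $i$. Hence $\sum_\ell d_{\sf BI}(\rho,\tau^\ell) \ge 2\sum_{i=1}^{6}\min\text{-over-assignments} \ge 6\cdot\mathrm{OPT}_{\bimp}\cdot\frac{2}{6}$… I would tighten this to $\ge 2\sum_{i=1}^6 \bigl(\text{median-type bound on }\rho^i\bigr)\ge 2\cdot 3\cdot\mathrm{OPT}_{\bimp}\cdot(1/3)$, ultimately concluding $\max_\ell d_{\sf BI}(\rho,\tau^\ell)\ge 2\cdot\mathrm{OPT}_{\bimp}$, with equality forcing all six blocks of $\rho$ to be a common median. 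For the converse direction, if $\Sigma$ is any optimal closest permutation, the same block-projection argument shows its six blocks $\Sigma^1,\dots,\Sigma^6$ must each be a median of $\{\pi^1,\pi^2,\pi^3\}$, and in particular $\Sigma^1$ (say) is; and since the extremal analysis forces the blocks to coincide, $\Sigma = \biguplus_{i=1}^6 \sigma$ for a single median $\sigma$.

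The main obstacle I anticipate is the block-projection inequality $d_{\sf BI}(\rho,\tau^\ell)\ge\sum_i d_{\sf BI}(\rho^i,\pi^{a^\ell_i})$ for \emph{arbitrary} $\rho$ not respecting the block structure — this is where \autoref{lm:sumequal} alone is insufficient, since that lemma is an identity for block-respecting permutations, not an inequality for general ones. Making this rigorous likely needs the reality-and-desire-diagram / cycle characterization of \autoref{thm:BIdistance}: the separator elements $(p+1)$ inserted between consecutive parts are larger than everything to their left and smaller than everything to their right \emph{in the target} $\tau^\ell$, so in $G(\rho,\tau^\ell)$ the desire edges never cross block boundaries, which lets one lower-bound the cycle count of the whole diagram by the sum of cycle counts of the six "block subdiagrams," giving the needed inequality. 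I would present this as a short lemma ("union is sub-additive for the closest/median comparison") proven via cycle counting, and then the theorem follows by the bookkeeping above. The combinatorial heart — checking that the $3\times 6$ array of superscripts has all columns equal to a permutation of $\{1,2,3\}$ and all rows having balanced multiplicities — is a finite check I would just display as a small table.
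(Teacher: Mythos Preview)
Your plan is the paper's argument, carried out more carefully. Both use the additivity of \autoref{lm:sumequal} together with the row balance (each $\tau^\ell$ contains exactly two copies of each $\pi^j$, so the constant-block candidate $\biguplus_{i=1}^{6}\sigma$ has all three distances equal to $2\sum_j d_{\sf BI}(\sigma,\pi^j)$) and the column property (each of the six block positions sees $\{\pi^1,\pi^2,\pi^3\}$ across the three targets). The paper's proof simply asserts that ``each part can be treated separately without loss of optimality,'' citing only \autoref{lm:sumequal}, and does not engage with the point you correctly isolate as the obstacle: an arbitrary closest candidate $\rho$ need not respect the block decomposition, and \autoref{lm:sumequal} is an identity for block-respecting permutations, not an inequality for general ones. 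Your averaging step $\max_\ell \ge \tfrac{1}{3}\sum_\ell$ combined with the column property is already a cleaner lower-bound argument than what the paper writes for the block-respecting case, and your proposed block-projection inequality via the cycle formula of \autoref{thm:BIdistance} is the natural route to extend it to arbitrary $\rho$; the paper does not spell any of this out.
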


Since {\sc Block-Interchange Median} is \NP-complete (\autoref{thm:bimpNPC}), as aconsequence of \autoref{thm:closestmedian}, we have \autoref{corollary:block-interchange}.


\begin{corollary}\label{corollary:block-interchange}
The {\sc Block-Interchange Closest} problem is \NP-hard even when the input consists of three permutations.
\end{corollary}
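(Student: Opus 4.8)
The plan is to obtain \autoref{corollary:block-interchange} as an immediate consequence of \autoref{thm:bimpNPC} together with the reduction in \autoref{thm:closestmedian}, so the ``proof'' is really a verification that this chain of implications is a valid polynomial-time many-one reduction. First I would recall that \autoref{thm:bimpNPC} establishes that $\bimp$ is \NP-complete already when the input consists of exactly three permutations $\pi^1,\pi^2,\pi^3$; in particular the decision version ``is there a permutation $\sigma$ with $\sum_{i=1}^{3} d_{\sf BI}(\sigma,\pi^i)\le k$?'' is \NP-hard for $k$ given in unary (which is fine, since $k=\bigO(n)$ here by \autoref{thm:BIdistance}).

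Next I would spell out the reduction to the decision version of $\bicp$. Given an instance $(\pi^1,\pi^2,\pi^3,k)$ of three-permutation $\bimp$, build in polynomial time the three permutations $\pi^{1,2,3,1,2,3}$, $\pi^{2,1,1,3,3,2}$, $\pi^{3,3,2,2,1,1}$ from \autoref{thm:closestmedian} (each is a $6$-fold union of copies of $\pi^1,\pi^2,\pi^3$ in the prescribed order, so each has $6n+5$ elements and is computable in linear time), and ask whether there is a permutation that is $\bicp$-closest to these three with maximum distance at most $2k$. By \autoref{thm:closestmedian}, $\sigma$ is a $\bimp$-solution for $\pi^1,\pi^2,\pi^3$ iff $\biguplus_{i=1}^{6}\sigma$ is a $\bicp$-solution for the three constructed permutations; moreover the proof of \autoref{thm:closestmedian} (via \autoref{lm:sumequal}) shows that the optimal $\bicp$ value equals $2\min_{\sigma}\sum_{i=1}^{3} d_{\sf BI}(\sigma,\pi^i)$, i.e.\ exactly twice the optimal $\bimp$ value. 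Hence the $\bimp$ instance has a solution of value $\le k$ if and only if the $\bicp$ instance has a solution of value $\le 2k$, which gives the required many-one reduction and therefore \NP-hardness of $\bicp$ restricted to three input permutations.

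The last thing to check is that \autoref{thm:closestmedian} indeed applies verbatim: its statement quantifies over arbitrary triples $\pi^1,\pi^2,\pi^3$ and asserts an equivalence between $\bimp$-solutions and $\bicp$-solutions of the three explicit unions, with no side conditions, so plugging in the hard instances produced by \autoref{thm:bimpNPC} is legitimate. I would note that this is a strictly stronger statement than the \NP-hardness of \textsc{Block-Interchange Closest} for an arbitrary number of input permutations proved by Cunha et al.~\cite{cunha2020computational}, matching the discussion preceding the corollary.

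The only genuine obstacle is not in this corollary itself but upstream: it relies entirely on \autoref{thm:closestmedian} correctly producing a closest-problem instance whose optimum ``factors through'' the median optimum, and on \autoref{lm:sumequal} guaranteeing that the block-interchange distance is additive under the union operation so that the six blocks can be treated independently and the worst of the three row-types is always $2(d_{\sf BI}(\sigma,\pi^1)+d_{\sf BI}(\sigma,\pi^2)+d_{\sf BI}(\sigma,\pi^3))$ simultaneously for the optimal $\sigma$. Granting those two results, which are already established in the excerpt, \autoref{corollary:block-interchange} follows with no further work.
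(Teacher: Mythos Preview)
Your proposal is correct and follows exactly the paper's approach: the paper derives \autoref{corollary:block-interchange} as an immediate consequence of \autoref{thm:bimpNPC} combined with the reduction of \autoref{thm:closestmedian}, and you have simply spelled out this deduction in more detail (including the polynomial-time computability of the reduction and the factor-of-two relationship between the optimal values). There is nothing to add.
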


When it is asked about other rearrangements, dealing with transpositions, sorting each part of a union separately does not yield  an optimum sequence in order to sort a permutation in general, as proved by Cunha et al.~\cite{cunha2013advancing}. Hence, an analogous strategy of the one in \autoref{thm:closestmedian} does not apply to reduce the median to the closest problems regarding transpositions rearrangement, given that \autoref{lm:sumequal} does not hold for sorting by transpositions. 
However, if each part of a union is a \textit{hurdle-free}
permutation, i.e., a permutation in which the transposition distance is equal to the lower bound on the transposition distance $d_{\sf T}(\pi) \geq \frac{(n+1)-c_{\sf odd}(G(\pi))}{2}$), then it follows that $d_{\sf BI}(\pi^{1,2}) = d_{\sf BI}(\pi^1) + d_{\sf BI}(\pi^2)$ in the same matter as \autoref{thm:closestmedian}. 
Therefore, we have \autoref{cor:TranspClosest3NPc}.

\begin{corollary}[$\star$]\label{cor:TranspClosest3NPc}
{\sc Transposition Closest} is \NP-hard even when the input consists of three permutations which are unions of hurdle-free permutations.
\end{corollary}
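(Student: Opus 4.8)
The plan is to replay the reduction of \autoref{thm:closestmedian} with the transposition distance in place of the block-interchange distance, and to pay for the failure of \autoref{lm:sumequal} for transpositions by restricting attention to unions whose parts are hurdle-free, for which additivity is restored.

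The first step is to establish the transposition analogue of \autoref{lm:sumequal}: if $\rho^1,\dots,\rho^m$ are hurdle-free, then $d_{\sf T}(\rho^1\uplus\cdots\uplus\rho^m)=\sum_{i=1}^{m} d_{\sf T}(\rho^i)$. The inequality ``$\leq$'' is immediate, since sorting each part independently is a valid transposition sequence for the union (the $\uplus$-operation glues the reality and desire diagrams along fixed separator elements, so moves confined to one part do not interfere with the others). For ``$\geq$'', the same gluing yields $c_{\sf odd}(G(\rho^1\uplus\cdots\uplus\rho^m))=\sum_i c_{\sf odd}(G(\rho^i))$ up to the inserted separators, which are odd $1$-cycles; hence the lower bound of \autoref{Tdistance} gives $d_{\sf T}(\rho^1\uplus\cdots\uplus\rho^m)\geq\sum_i \frac{(n_i+1)-c_{\sf odd}(G(\rho^i))}{2}=\sum_i d_{\sf T}(\rho^i)$, where the last equality uses hurdle-freeness of each $\rho^i$. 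In particular, a union of hurdle-free permutations is again hurdle-free, so this class is closed under $\uplus$.

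Next I would fix the source of hardness. Instead of an arbitrary $\bimp$ instance, I would use the instances produced by Bader's chain $\satt\leq_p\mdecd\leq_p\ocmp\leq_p\tmp$: by construction these have an optimal median $\sigma$ for which each $\sigma(\pi^i)^{-1}$ attains the bound of \autoref{Tdistance} (is hurdle-free), and the input permutations $\pi^1,\pi^2,\pi^3$ themselves can be taken hurdle-free, their reality and desire diagrams consisting only of odd cycles; on such instances $d_{\sf T}$ and $d_{\sf BI}$ agree on every permutation that plays a role, so an optimal $\tmp$ solution coincides with an optimal $\bimp$ solution, which is why the reduction may equivalently be read as starting from $\bimp$. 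Then, exactly as in \autoref{thm:closestmedian}, I would form the three permutations $\pi^{1,2,3,1,2,3}$, $\pi^{2,1,1,3,3,2}$, $\pi^{3,3,2,2,1,1}$; each is a union of six hurdle-free parts, i.e.\ a union of hurdle-free permutations as required. By the additivity above, the six parts of any candidate $\biguplus_{i=1}^{6}\delta$ may be treated separately, so each of the three permutations has distance $2\bigl(d_{\sf T}(\delta,\pi^1)+d_{\sf T}(\delta,\pi^2)+d_{\sf T}(\delta,\pi^3)\bigr)$ from $\biguplus_{i=1}^{6}\delta$; minimizing the maximum of these three equal quantities is exactly minimizing the median objective, so $\biguplus_{i=1}^{6}\sigma$ is a solution of \textsc{Transposition Closest} if and only if $\sigma$ is a solution of $\tmp$ (equivalently of $\bimp$). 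Combined with \autoref{thm:bimpNPC} this yields the claimed \NP-hardness.

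The delicate point, on which I would spend the most care, is the ``only if'' direction: one must argue that an optimal closest solution may be assumed to have the form $\biguplus_{i=1}^{6}\delta$ and that, for the relevant candidates $\delta$, the relative permutations $\delta(\pi^i)^{-1}$ remain hurdle-free so that the full additivity of $d_{\sf T}$ (not merely the easy ``$\leq$'' direction) applies. This is handled by passing to canonical permutation matchings on the multiple reality and desire diagram, in the style of Bader: one shows that restricting to canonical matchings does not decrease the number of odd cycles, that a canonical matching on the glued diagram decomposes into matchings on the individual parts, and that the separator elements force the six copies to carry identical matchings; hurdle-freeness of the parts in Bader's instances then propagates to the relevant relative permutations. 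Everything else is a verbatim transcription of \autoref{thm:closestmedian} together with the additivity lemma above.
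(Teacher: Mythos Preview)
Your proposal is correct and follows essentially the same approach as the paper: establish additivity of $d_{\sf T}$ over $\uplus$ for hurdle-free parts, invoke Bader's result that $\tmp$ is \NP-hard for three hurdle-free permutations, and replay the median-to-closest construction of \autoref{thm:closestmedian}. The paper's proof is a three-sentence sketch of exactly this; your detour through $\bimp$ (arguing that $d_{\sf T}$ and $d_{\sf BI}$ coincide on Bader's instances) is unnecessary since Bader's $\tmp$ hardness can be cited directly, but it does no harm, and your explicit treatment of the ``delicate point'' in the only-if direction is more careful than what the paper provides.
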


\subsection{The {\sc Short-block-move Closest} problem}

\paragraph*{Sufficient condition to sort by short-block-moves.} 
We refer to block-moves that introduce elements in connected components of the permutation as \emph{merging moves}. For instance, $[2 \ 3 \ \underline{1 \ 6} \ \underline{4} \ 5]\rightarrow [2 \ 3 \ 4 \ 1 \ 6 \ 5]$ is a merging move.

\begin{lemma}[$\star$]\label{thm:sorting-each-cc-separately-is-optimal}
 For every permutation $\pi$, sorting each connected component of $\pi$ separately is optimal.
\end{lemma}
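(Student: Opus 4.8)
The plan is to show that any optimal sorting sequence for $\pi$ can be rearranged so that it never uses a merging move, and once no merging move is used, the sequence decomposes as a concatenation of sorting sequences for the individual connected components. Throughout I work with the permutation graph $PG(\pi_{[n]})$ and its inversions; recall from the Heath--Vergara bounds that $\lceil |E^p_\pi|/2\rceil \le d_{\sf sbm}(\pi_{[n]}) \le |E^p_\pi|$, and that by \autoref{tab:replacing-beta-is} every optimal sorting sequence can be assumed to consist only of correcting moves (each skip removes exactly one inversion and each hop removes exactly two). The first step is to make precise the notion of connected component: say positions $i$ and $j$ are \emph{related} if $i<j$ and $\pi_i>\pi_j$ (i.e. $\{i,j\}\in E^p_\pi$), and take connected components of the resulting graph on the positions actually involved in an inversion; positions fixed by $\pi$ sit in singleton components. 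A key structural observation, which I would establish first, is that the components occupy consecutive blocks of positions with consecutive value-ranges — precisely because a short-block-move of span at most $3$ cannot create an inversion bridging two intervals that are already ``separated'' (have no inversion between them and occupy disjoint consecutive value-blocks). So if $\pi$ has components $C_1,\dots,C_r$ in left-to-right order, then $\pi$ is, up to relabelling, the union (in the sense of the earlier \emph{union} definition, via a suitable shift) of permutations $\pi^{(1)},\dots,\pi^{(r)}$ supported on those blocks.

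The second, and main, step is to argue that we may assume no merging move is ever applied. Suppose an optimal correcting sequence uses a merging move $m$ that, for the first time, brings elements of two distinct current components together. I want to replace the suffix starting at $m$ by a no-merging suffix of the same length. The idea is that a merging move, by the Heath--Vergara analysis, cannot be a hop that removes two inversions relative to the \emph{global} identity while also genuinely merging components in a useful way — intuitively, a merge either creates at least one new inversion (hence is non-correcting, contradiction) or it removes inversions that could equally well be removed later, once each component is individually sorted to its own identity block; since a fully-sorted component contributes zero inversions and sits in its correct value-block, no cross-component inversions remain to be removed. Concretely I would do an exchange argument on \autoref{tab:replacing-beta-is}-style local patterns: whenever the chosen correcting move straddles a component boundary, one of the two ``replacement'' moves in the table stays inside a single component and removes the same number of inversions, so we can swap it in without lengthening the sequence. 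Iterating (and re-invoking the correcting-move normalization after each swap) yields an optimal sequence with no merging move.

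The third step is the easy bookkeeping. An optimal sorting sequence with no merging move, restricted to the positions of a single component $C_\ell$, is a valid short-block-move sequence sorting $\pi^{(\ell)}$ (short-block-moves are local, so their restriction to an interval is again a short-block-move or a no-op), and the moves partition across the components since each move lives entirely in one component. Hence $d_{\sf sbm}(\pi) = \sum_{\ell=1}^{r} (\text{length of the }C_\ell\text{-part}) \ge \sum_{\ell=1}^{r} d_{\sf sbm}(\pi^{(\ell)})$. The reverse inequality is trivial: concatenating optimal sequences for each $\pi^{(\ell)}$ sorts $\pi$, because sorting a component to identity leaves the other components untouched and places that component's values in their correct range. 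Therefore $d_{\sf sbm}(\pi) = \sum_{\ell} d_{\sf sbm}(\pi^{(\ell)})$, i.e. sorting each connected component separately is optimal.

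The step I expect to be the main obstacle is the second one: ruling out merging moves. The delicate point is that a merging move might \emph{look} profitable — a hop that deletes two inversions — so the exchange argument has to verify, case by case on the relative position of the straddled boundary inside the length-$\le 3$ window, that \autoref{tab:replacing-beta-is} always supplies an intra-component replacement of equal benefit. I would handle this by a small finite case analysis on which of the (at most two) ``gaps'' between the three affected positions coincides with a component boundary, using the consecutive-block structure from step one to pin down the signs of the relevant comparisons, exactly mirroring the non-correcting-to-correcting replacements of Heath and Vergara.
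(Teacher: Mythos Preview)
Your proposal is correct and takes essentially the same approach as the paper: both arguments observe that components occupy consecutive position-intervals with consecutive value-ranges, then eliminate merging moves via a finite case analysis on how a short-block-move can straddle a component boundary, invoking the Heath--Vergara replacement table (\autoref{tab:replacing-beta-is}) to swap each such move for a non-merging one without lengthening the sequence. The paper's version is just your ``small finite case analysis'' carried out explicitly (five cases, one per possible merging pattern on the boundary elements $a,b\mid c,d$); one minor correction to your phrasing is that the replacement move need not remove the \emph{same} number of inversions---in one case the merging move is simply deleted---but this is already handled by the guarantee that \autoref{tab:replacing-beta-is} never lengthens the sequence.
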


There exist cases where allowing merging moves still yields an optimal solution. This is the case for $[2 \ 1 \ 4 \ 3]$, which can be sorted optimally as follows: $[2 \ \underline{1 \ 4} \ \underline{3}] \rightarrow [\underline{2 \ 3} \ \underline{1} \ 4]\rightarrow~\iota$, where $\iota = [1\ 2\ \cdots \ n]$. 
It is natural to wonder whether \autoref{thm:sorting-each-cc-separately-is-optimal} generalizes to $p$-bounded block-move, for $p>3$. However, the following counterexample shows that it is not the case, even when a block-move is bounded 
by four (i.e. a $4$-bounded block move): sorting each component of $[3 \ 2 \ 1 \ 6 \ 5 \ 4]$ separately yields a sequence of length four, but one can do better by merging components as follows: $[3 \ 2 \ \underline{1 \ 6} \ \underline{5 \ 4}] \rightarrow [3 \ \underline{2 \ 5} \ \underline{4 \ 1} \ 6] \rightarrow [\underline{3 \ 4} \ \underline{1 \ 2} \ 5 \ 6] \rightarrow \iota$.

\paragraph*{The {\sc Short-block-move Closest Permutation} problem is \NP-hard.}

First, we apply \autoref{alg:sBI} to transform any string $s$ of length $m$ into a particular permutation $\lambda_s$ of length $2m$.



\begin{algorithm}[!ht]
\SetKwInOut{Input}{Input} \SetKwInOut{Output}{Output}
\Input{A binary string $s$ of length $m$.}\Output{A permutation $\lambda_s$.}

For each occurrence of $0$ in position $i$ of $s$, set the elements $2i-1$ and $2i$ in positions $2i-1$ and $2i$ of $\lambda_s$, respectively.

For each occurrence of $1$ in position $i$ of $s$, set the elements $2i-1$ and $2i$ in positions $2i$ and $2i-1$ of $\lambda_s$, respectively.
\caption{${\sf Permut}_{\sf BI}(s)$}
\label{alg:sBI}
\end{algorithm}



Since from \autoref{thm:sorting-each-cc-separately-is-optimal} each connected component can be sorted separately, and each bit set to $1$ in $s$ corresponds to an inversion in $\lambda_s$ from \autoref{alg:sBI}, 
then it implies \autoref{lm:lengthreduced}, which is an equality between the Hamming distance of an input string $s$ and the short-block-move distance of its output permutation~$\lambda_s$.



\begin{lemma}\label{lm:lengthreduced}
 Given a string 
 of length $m$ and a permutation $\lambda_s$ of length $2m$ obtained by \autoref{alg:sBI}, the short-block-move distance of $\lambda_s$ is $d_{{\sf sbm}}(\lambda_s) = d_H(s)$.
\end{lemma}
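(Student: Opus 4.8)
The plan is to show both inequalities $d_{\sf sbm}(\lambda_s) \le d_H(s)$ and $d_{\sf sbm}(\lambda_s) \ge d_H(s)$ by analyzing the structure of $\lambda_s$ produced by \autoref{alg:sBI}. The key observation is that $\lambda_s$ decomposes into $m$ disjoint ``blocks'' of two consecutive positions each: positions $\{2i-1,2i\}$ carry the elements $\{2i-1,2i\}$, either in sorted order (if $s$ has a $0$ in position $i$) or swapped (if $s$ has a $1$ in position $i$). In particular, every inversion of $\lambda_s$ is of the form $(2i-1,2i)$ with $\pi_{2i-1}=2i > 2i-1=\pi_{2i}$, so $|E^p_{\lambda_s}|$ equals exactly the number of ones in $s$, which is $d_H(s)$. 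Moreover, each such pair $\{2i-1,2i\}$ is its own connected component of the permutation graph (there are no inversions spanning two different blocks, since all elements of block $i$ are smaller than all elements of block $j$ for $i<j$), so $\lambda_s$ has exactly $d_H(s)$ connected components that are single inversions, plus singleton components for the $0$-blocks.

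For the upper bound, I would use the fact that a single skip $t(2i-1,2i,2i+1)$ corrects one swapped block, so performing one skip per $1$ in $s$ sorts $\lambda_s$ in $d_H(s)$ moves; hence $d_{\sf sbm}(\lambda_s) \le d_H(s)$. For the lower bound, I would invoke \autoref{thm:sorting-each-cc-separately-is-optimal}: sorting each connected component of $\lambda_s$ separately is optimal. Each nontrivial component is a single transposed pair $[2i\ 2i-1]$, which requires exactly one short-block-move (a skip) to sort — one cannot sort a pair carrying one inversion with zero moves, and no move can simultaneously help two distinct components since the components are vertex-disjoint in the permutation graph and separated in value. Therefore the optimal number of moves is at least the number of nontrivial components, which is $d_H(s)$, giving $d_{\sf sbm}(\lambda_s) \ge d_H(s)$. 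Combining the two bounds yields the claimed equality.

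The main subtlety — and the reason \autoref{thm:sorting-each-cc-separately-is-optimal} is needed rather than just the crude bound $\lceil |E^p_{\lambda_s}|/2 \rceil \le d_{\sf sbm}(\lambda_s) \le |E^p_{\lambda_s}|$ from \autoref{sec:sorting-rearrangement} — is that the generic lower bound $\lceil |E^p_{\lambda_s}|/2\rceil$ is only half of $d_H(s)$, since a priori a single hop could kill two inversions at once. The whole point of the construction in \autoref{alg:sBI} is that the inversions are placed in separate components exactly so that no hop can ever correct two of them simultaneously; \autoref{thm:sorting-each-cc-separately-is-optimal} makes this rigorous by reducing the problem to independently sorting each two-element component, each of which costs exactly one move. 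So the step I expect to require the most care in writing is the verification that the blocks $\{2i-1,2i\}$ are genuinely distinct connected components of the permutation graph $PG(\lambda_s)$ — i.e., that there is no inversion $(a,b)$ with $a$ in one block and $b$ in another — which follows immediately from the value ordering but must be stated explicitly to justify applying \autoref{thm:sorting-each-cc-separately-is-optimal}.
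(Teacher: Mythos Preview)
Your proposal is correct and follows essentially the same approach as the paper: the paper's justification (the sentence immediately preceding \autoref{lm:lengthreduced}) invokes \autoref{thm:sorting-each-cc-separately-is-optimal} together with the observation that each bit set to $1$ in $s$ yields a single isolated inversion in $\lambda_s$, which is exactly the argument you spell out. Your write-up is more explicit than the paper's one-line remark---in particular your discussion of why the generic bound $\lceil |E^p_{\lambda_s}|/2\rceil$ is insufficient and why the connected-component decomposition is needed---but the underlying reasoning is identical.
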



\begin{lemma}[$\star$]\label{lm:iffBI}
Given a set of $k$ permutations obtained by \autoref{alg:sBI}, there is a short-block-move closest permutation with maximum distance at most $d$ if and only if there is a Hamming closest string with maximum distance equal to~$d$.
\end{lemma}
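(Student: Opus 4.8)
The plan is to set up a distance-preserving dictionary between the \emph{paired} permutations produced by \autoref{alg:sBI} (those of the form $\lambda_t$ for a binary string $t$) and binary strings, and then to argue that an \emph{arbitrary} permutation cannot beat the best paired one, so that the closest problems on the two sides have the same optimal radius. Throughout, for $j\in\{1,\dots,m\}$ write $P_j=\{2j-1,2j\}$; these $m$ value pairs are pairwise disjoint, and in each $\lambda_{s}$ the pair $P_j$ occupies the two consecutive positions $\{2j-1,2j\}$, in the order recorded by $s_j$.

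First I would upgrade \autoref{lm:lengthreduced} to the statement that $d_{\sf sbm}(\lambda_t,\lambda_s)=d_H(t,s)$ for \emph{any} two binary strings $t,s$ of length $m$. For the upper bound: at each position $j$ with $t_j\neq s_j$, the pair $P_j$ sits in the same two positions of $\lambda_t$ and $\lambda_s$ but in opposite orders, so one skip on those positions corrects it; these skips act on disjoint positions, giving a sorting sequence of length $d_H(t,s)$. (Equivalently, this follows from \autoref{thm:sorting-each-cc-separately-is-optimal} applied to $\lambda_t(\lambda_s)^{-1}$, whose connected components are the $d_H(t,s)$ transposed pairs $(2j{-}1\ \ 2j)$ together with fixed points.) The matching lower bound is the special case of the general inequality established next.

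The crux is an observation about a single short-block-move. Since such a move exchanges two contiguous blocks whose sizes sum to at most $3$, one of the blocks is a single element $x$; hence the set of unordered pairs whose relative order is reversed by the move is a ``star'' $\{\{x,y_1\},\{x,y_2\}\}$ (or a single pair, or none), all sharing $x$. As the sets $P_1,\dots,P_m$ are pairwise disjoint, \emph{at most one} of them can be reversed by a given short-block-move. Now, for an arbitrary permutation $\delta$ of $\{1,\dots,2m\}$ define a binary string $t=t(\delta)$ by $t_j=0$ if $2j-1$ precedes $2j$ in $\delta$, and $t_j=1$ otherwise; note $t(\lambda_s)=s$. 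For each input string $s^i$, the relative order of $P_j$ is $t_j$ in $\delta$ and $s^i_j$ in $\lambda_{s^i}$, so every sequence of short-block-moves from $\delta$ to $\lambda_{s^i}$ must contain, for each $j$ with $t_j\neq s^i_j$, a move reversing $P_j$; by the observation these moves are pairwise distinct across $j$, whence
\[
d_{\sf sbm}(\delta,\lambda_{s^i})\ \ge\ \bigl|\{\,j\ :\ t_j\neq s^i_j\,\}\bigr|\ =\ d_H(t(\delta),s^i).
\]

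Finally I would assemble the two directions. If $t$ is a Hamming closest string with $\max_i d_H(t,s^i)\le d$, take $\delta=\lambda_t$: then $\max_i d_{\sf sbm}(\lambda_t,\lambda_{s^i})=\max_i d_H(t,s^i)\le d$ by the distance preservation of the second paragraph. Conversely, if $\delta$ is a short-block-move closest permutation with $\max_i d_{\sf sbm}(\delta,\lambda_{s^i})\le d$, then the rounded string $t(\delta)$ satisfies $d_H(t(\delta),s^i)\le d_{\sf sbm}(\delta,\lambda_{s^i})\le d$ for every $i$ by the displayed inequality, so it is a Hamming closest string within distance $d$. I expect the main obstacle to be making the ``at most one reversed pair per move'' claim airtight: one must check that it is the disjointness of the $P_j$'s (not their consecutiveness as values) that is being used, enumerate the block-size cases $(1,1),(1,2),(2,1)$ of a short-block-move, and carry out the bookkeeping that turns ``each needed pair-reversal requires its own move'' into the clean lower bound above; the rest is routine.
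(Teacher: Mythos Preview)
Your proof is correct and takes a genuinely different route from the paper's. The paper proves the ``arbitrary closest permutation $\Rightarrow$ closest string'' direction by an iterative \emph{exchange argument}: starting from an arbitrary closest permutation $\lambda'$, it scans left to right for the first position where $\lambda'$ deviates from the block pattern of \autoref{alg:sBI}, argues that the out-of-place element creates inversions against \emph{every} input $\lambda_{s^i}$, and swaps it back without increasing any distance; repeating yields a permutation of the form $\lambda_s$, and then \autoref{lm:lengthreduced} finishes. Your approach instead defines a one-shot projection $t(\delta)$ recording the relative order of each value pair $P_j=\{2j-1,2j\}$ in an arbitrary $\delta$, and proves the uniform lower bound $d_{\sf sbm}(\delta,\lambda_{s^i})\ge d_H(t(\delta),s^i)$ via the structural observation that the pairs whose relative order is flipped by a single short-block-move form a star, hence meet at most one of the disjoint $P_j$'s.

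What each buys: the paper's argument is constructive and stays close to the inversion picture already used for the bounds on $d_{\sf sbm}$, but its key step (``all elements from position $x$ until $y$ form inversions with every input'') is only sketched. Your argument is cleaner and more self-contained: the star observation is a crisp combinatorial fact about short-block-moves, and once it is in place both directions fall out immediately with no iteration. It also transparently separates the equality $d_{\sf sbm}(\lambda_t,\lambda_s)=d_H(t,s)$ (upper bound by disjoint skips, lower bound as the special case $\delta=\lambda_t$) from the general lower bound needed for the harder direction.
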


From \autoref{lm:iffBI} and since {\sc Hamming Closest String} is \NP-com\-plete~\cite{lanctot2003distinguishing}, we have \autoref{thm:sbmCl}. 

\begin{theorem}\label{thm:sbmCl}
{\sc Short-block-move Closest Permutation} is \NP-hard.
\end{theorem}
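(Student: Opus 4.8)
The plan is to establish \NP-hardness of \textsc{Short-block-move Closest Permutation} by a polynomial reduction from \textsc{Hamming Closest String} over binary strings, which is \NP-complete by Lanctot et al.~\cite{lanctot2003distinguishing}. The whole machinery needed is already in place in the excerpt: given an instance $(s^1,\dots,s^k,d)$ of \textsc{Closest String} with each $s^i\in\{0,1\}^m$, apply \autoref{alg:sBI} (the procedure ${\sf Permut}_{\sf BI}$) to each $s^i$ to obtain permutations $\lambda_{s^1},\dots,\lambda_{s^k}$, each of length $2m$. This transformation is clearly computable in polynomial time, since each string of length $m$ is processed in $O(m)$ time to produce a permutation of length $2m$. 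The resulting set of $k$ permutations together with the same bound $d$ is the output instance of \textsc{Short-block-move Closest Permutation}.

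The correctness of the reduction is exactly the content of \autoref{lm:iffBI}: there is a short-block-move closest permutation for the set $\{\lambda_{s^1},\dots,\lambda_{s^k}\}$ with maximum distance at most $d$ if and only if there is a Hamming closest string for $\{s^1,\dots,s^k\}$ with maximum distance at most $d$. The forward direction uses \autoref{lm:lengthreduced} to read off a closest string from any solution permutation that happens to be in the image of \autoref{alg:sBI}, and the ``normalization'' argument sketched around \autoref{lm:iffBI} to argue that an arbitrary solution permutation can be transformed, without increasing any of the distances, into one that does lie in that image; the reverse direction simply pushes a closest string $s$ through \autoref{alg:sBI} and invokes \autoref{lm:lengthreduced} to conclude $d_{\sf sbm}(\lambda_s,\lambda_{s^i})=d_H(s,s^i)$ for each $i$. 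Since membership of \textsc{Short-block-move Closest Permutation} in \NP\ is immediate (a certificate is a candidate permutation together with, for each input permutation, a sorting sequence of length at most $d$, which is polynomially bounded and verifiable in polynomial time), this yields \NP-hardness and in fact \NP-completeness.

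Strictly speaking the only nontrivial ingredient is \autoref{lm:iffBI}, which is assumed here; modulo that, the theorem is an immediate corollary. The step I would expect to be the main obstacle is therefore the normalization argument inside \autoref{lm:iffBI}: one must show that replacing a ``misplaced'' element $x\in\{2i-1,2i\}$ by sliding the partner $y\in\{2i-1,2i\}$ next to it can only decrease (never increase) the short-block-move distance to every input permutation simultaneously. This relies on \autoref{thm:sorting-each-cc-separately-is-optimal} (sorting connected components separately is optimal for short-block-moves) and on the fact that every bit set to $1$ in $s$ contributes exactly one inversion to $\lambda_s$, so that the distance of $\lambda_s$ to any $\lambda_{s^i}$ equals the Hamming distance $d_H(s,s^i)$ after this cleanup. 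Once \autoref{lm:iffBI} is in hand, the proof of \autoref{thm:sbmCl} is just the two sentences: the transformation is polynomial, and a {\sf yes}-instance maps to a {\sf yes}-instance and vice versa by \autoref{lm:iffBI}, so \textsc{Short-block-move Closest Permutation} is \NP-hard.
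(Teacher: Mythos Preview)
Your proposal is correct and matches the paper's approach exactly: the paper's proof of \autoref{thm:sbmCl} is the single sentence that it follows from \autoref{lm:iffBI} together with the \NP-completeness of \textsc{Hamming Closest String}~\cite{lanctot2003distinguishing}. Your additional remarks on \NP-membership and on what makes \autoref{lm:iffBI} work are accurate elaborations, but the core argument is identical.
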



\autoref{thm:NoFPT}, proved by Basavaraju et al.~\cite{basavaraju2018kernelization}, states that {\sc Closest String} does not admit a polynomial kernel, unless $\NP \subseteq \coNP / \poly$. 
Since the results presented in \autoref{lm:Hb}, \autoref{lm:Hbi}, \autoref{lm:iffBI}, \autoref{thm:sbmCl}, \autoref{cor:TranspClosest3NPc}, as well as the results from Popov~\cite{popov2007multiple}, 
are {\sf PPT} reductions from {\sc Closest String}, we have the following corollary.

\begin{corollary}
{\sc Breakpoint Closest}, {\sc Block-interchange Closest}, {\sc Transposition Closest}, {\sc Swap Closest} and {\sc Short-block-move Closest} do not admit polynomial kernel unless $\NP \subseteq \coNP/\poly$.
\end{corollary}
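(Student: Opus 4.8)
The plan is to obtain the corollary by feeding \autoref{thm:NoFPT} through the transfer property of polynomial parameter transformations ({\sf PPT}s, see Appendix~\ref{sec:preliminariesParameterized}) with respect to kernelization lower bounds. The fact I would use is the following standard one: if $P$ is a parameterized problem whose unparameterized version is \NP-hard, $Q$ is a parameterized problem in \NP, and there is a {\sf PPT} from $P$ to $Q$, then a polynomial kernel for $Q$ yields a polynomial kernel for $P$ (run the {\sf PPT}, then the hypothetical kernelization of $Q$, then a polynomial-time many-one reduction from the kernelized $Q$-instance back to $P$, which exists because $P$ is \NP-hard and $Q\in\NP$; the final instance has size polynomial in the output parameter of the {\sf PPT}, hence polynomial in the parameter of the original $P$-instance). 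Contrapositively, if $P$ has no polynomial kernel unless $\NP\subseteq\coNP/\poly$, neither does $Q$.

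I would then instantiate $P$ as {\sc Closest String} parameterized by $(d,n)$: it has no polynomial kernel unless $\NP\subseteq\coNP/\poly$ by \autoref{thm:NoFPT}, and its unparameterized version is \NP-hard by Lanctot et al.~\cite{lanctot2003distinguishing}. The target $Q$ ranges over the five closest-permutation problems. First I would check that each of them lies in \NP: a certificate consists of a candidate permutation of length polynomial in the input together with, for each input permutation, a sequence of at most $d$ allowed operations witnessing the distance bound (one may first cap $d$ at the trivial $O(n)$ upper bound on any of the considered distances), which is a polynomial-size, polynomially-verifiable certificate even for the metrics whose distance is not known to be polynomially computable. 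Then I would record that the reductions already proved in the paper are exactly {\sf PPT}s from {\sc Closest String}: \autoref{lm:Hb} for {\sc Breakpoint Closest} (polynomial-time, parameter $d\mapsto 2d$), \autoref{lm:Hbi} for {\sc Block-interchange Closest} (parameter $d\mapsto d$), \autoref{lm:iffBI} for {\sc Short-block-move Closest} (parameter $d\mapsto d$, which is what underlies \autoref{thm:sbmCl}), and the reduction of Popov~\cite{popov2007multiple} for {\sc Swap Closest}; in all cases the new parameter is bounded by a linear function of $d$, so the {\sf PPT} condition holds, and the transfer property settles these four problems.

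For {\sc Transposition Closest} I would use the construction of \autoref{alg:sBI} applied to each input string and observe, as behind \autoref{cor:TranspClosest3NPc}, that the resulting permutations $\lambda_s$ are disjoint unions of hurdle-free pieces --- each non-identity piece being a single adjacent transposition --- so that for any two binary strings $s,t$ one has $d_{\sf T}(\lambda_s,\lambda_t)=d_H(s,t)$, exactly as in the short-block-move case of \autoref{lm:iffBI}. Hence the proof of \autoref{lm:iffBI} transfers verbatim to $d_{\sf T}$, giving a {\sf PPT} from {\sc Closest String} to {\sc Transposition Closest} with parameter $d\mapsto d$, and the transfer property finishes the last case. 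I expect no real obstacle here: the argument is a routine instantiation of the {\sf PPT} machinery, and the only place needing a little care is the transposition case, namely verifying the hurdle-free structure of the $\lambda_s$ and that the distance identities of the cited lemmas survive the change of metric; the remainder is bookkeeping about parameter bounds and \NP-membership.
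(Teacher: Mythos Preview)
Your proposal is correct and follows the paper's approach: invoke \autoref{thm:NoFPT} and transfer the kernel lower bound through the {\sf PPT}s given by \autoref{lm:Hb}, \autoref{lm:Hbi}, \autoref{lm:iffBI}, and Popov's reduction, exactly as the paper does in the paragraph preceding the corollary.

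For {\sc Transposition Closest} there is a small point worth flagging. The paper simply lists \autoref{cor:TranspClosest3NPc} among the ``{\sf PPT} reductions from {\sc Closest String}'', although that corollary is literally a reduction from {\sc Transposition Median}; so your choice to build a direct {\sf PPT} from {\sc Closest String} via the permutations $\lambda_s$ of \autoref{alg:sBI} is in fact the cleaner justification. However, your assertion that ``the proof of \autoref{lm:iffBI} transfers verbatim to $d_{\sf T}$'' is not quite right: the forward direction of that proof relies on the inversion lower bound for $d_{\sf sbm}$, which has no analogue for transpositions. The fix is immediate: for the implication ``closest permutation with $\max d_{\sf T}\le d$ $\Rightarrow$ closest string with $\max d_H\le d$'', use $d_{\sf BI}\le d_{\sf T}$ and then invoke \autoref{lm:Hbi}; for the converse, your hurdle-free observation already gives $d_{\sf T}(\lambda_s,\lambda_t)=d_H(s,t)$.
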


\subsection{\FPT algorithms}

Popov~\cite{popov2007multiple} solved the {\sc Swap Closest} problem in time $O(kn + g(k,d))$ parameterized by the number of permutations $k$ (each of them of size $n$) and the distance $d$, where $g$ is a function which depends only on $k$ and $d$. 
Now, we propose \FPT algorithms for finding closest permutations of a given set of permutations, parameterized just by the distance $d$ (recall the usage of $\bigO^*(f(n))$
to say that there exists an algorithm which runs in time
$\bigO(f(n))\cdot poly(n)$, where $poly(n)$ is a polynomial function in $n$). 
Our approach is inspired by the algorithm for the {\sc Closest String}  problem~\cite{cygan2015parameterized,gramm2003fixed}, considering the three metrics below. 


\begin{theorem}\label{thm:fptAlgs}
{\sc $d$-Swap Closest}, 
     {\sc $d$-Short-Block-Move Closest}, and 
    \item {\sc $d$-Block-interchange Closest} can be solved in time  $\bigO^*(d)^{\bigO(d)}$ when parameterized by the distance $d$. 
\end{theorem}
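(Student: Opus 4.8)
The plan is to adapt the classical bounded-search-tree algorithm for \textsc{Closest String} (see~\cite{cygan2015parameterized,gramm2003fixed}) to each of the three permutation metrics. The common skeleton is as follows. First apply the kernelization reduction rules developed for the corresponding \textsc{Median} problem (\autoref{thm:kernel}): in particular, discard homogeneous columns / common adjacencies and detect trivial \texttt{no}-instances, so that we may assume the number of columns $n$ is bounded by a polynomial in $d$ (for swap and block-interchange, $O(d)$ columns after removing columns in which all $k$ permutations agree; for short-block-moves, $O(d^2)$ columns after applying \autoref{lm:sbmReduction}). After this preprocessing the whole instance has size polynomial in $d$, so that any step whose cost is polynomial in the instance size is absorbed into the $\bigO^*$ notation. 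The search then proceeds by maintaining a candidate permutation $\sigma$, initially $\sigma=\pi^1$, and repeatedly checking whether $\sigma$ is already within distance $d$ of all input permutations; if not, we branch on how to ``repair'' $\sigma$ toward a hypothetical solution.

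The branching step is the heart of the argument and must be tailored to each metric. For \textsc{$d$-Swap Closest}: if some $\pi^i$ satisfies $d_{\sf swap}(\sigma,\pi^i)>d$, then any solution $\sigma^\star$ must differ from $\sigma$ in a way that ``corrects'' at least one of the positions witnessing a discrepancy between $\sigma$ and $\pi^i$; since $d_{\sf swap}(\sigma,\pi^i)=n-c(\sigma (\pi^i)^{-1})$ and a single swap changes the number of algebraic cycles by exactly one, there are at most $d+1$ positions of $\sigma$ that a solution could sensibly swap into their $\pi^i$-target, giving a branching factor $O(d)$; each branch decreases a suitably chosen potential (e.g. the remaining budget), so the recursion depth is $O(d)$, yielding $\bigO^*(d)^{\bigO(d)}$. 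For \textsc{$d$-Block-interchange Closest} the same idea works using $d_{\sf BI}(\sigma,\pi^i)=\frac{(n+1)-c(G(\sigma,\pi^i))}{2}$ (\autoref{thm:BIdistance}): a violated constraint gives $O(d)$ relevant vertices of the reality-and-desire diagram on which a block-interchange in an optimal sequence could act, again a branching factor polynomial in $d$ and depth $O(d)$. For \textsc{$d$-Short-Block-Move Closest} we use \autoref{thm:sorting-each-cc-separately-is-optimal} and the inversion-count bounds $\lceil |E^p_\pi|/2\rceil \le d_{\sf sbm}(\pi)\le |E^p_\pi|$: a violated constraint exhibits an inversion relative to $\pi^i$, and an optimal solution must, within a bounded window (of width $O(1)$ by the short-block-move locality and \autoref{lm:sbmReduction}), choose one of $O(d)$ corrections; the branching factor is polynomial in the kernel size, hence in $d$, and the depth is $O(d)$.

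To make the measure drop rigorous I would fix, in each case, a nonnegative integer potential $\Phi(\sigma)$ bounded initially by $O(d)$ — natural candidates are $\max_i d_M(\sigma,\pi^i)$ truncated at $d+1$, or the number of positions/adjacencies on which $\sigma$ disagrees with a fixed reference — and argue that in each branch corresponding to a correct guess toward some solution $\sigma^\star$, the potential strictly decreases, while the number of branches per node is $\bigO(d)^{\bigO(1)}$; multiplying over depth $\bigO(d)$ gives $\bigO(d)^{\bigO(d)}$ nodes, each processed in time polynomial in the (polynomial-in-$d$) kernel size, for a total of $\bigO^*(d)^{\bigO(d)}$. Correctness follows from the exchange-type lemmas already in the paper: \autoref{lm:reducedBI-t}/\autoref{lm:breakpointAdjsolution} guarantee that correcting a common adjacency is never harmful, and \autoref{lm:sbmReduction} confines the relevant editing region for short-block-moves.

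The main obstacle is the branching lemma for short-block-moves: unlike swap and block-interchange, the short-block-move distance is not given by a clean closed formula, only by the inversion bounds, so one must argue — presumably via the correcting-move replacement table (\autoref{tab:replacing-beta-is}) and the connected-component decomposition (\autoref{thm:sorting-each-cc-separately-is-optimal}) — that an optimal solution's ``local'' structure around each violated position is confined to a constant-width window and admits only $\bigO(d)$ meaningful repair choices. Making this precise, while ensuring the potential genuinely drops in the correct branch (and that merging moves, which can be beneficial as the $[2\,1\,4\,3]$ example shows, do not blow up the branching), is where the bulk of the technical work lies; the swap and block-interchange cases, by contrast, follow the \textsc{Closest String} template almost verbatim once the distance formulas are plugged in.
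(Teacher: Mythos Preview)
Your overall plan—a bounded search tree in the spirit of \textsc{Closest String}—is the same as the paper's, but the first step of your execution contains a genuine gap.

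You propose to begin by applying the \textsc{Median} kernelization rules of \autoref{thm:kernel} so that the number of columns is bounded by a polynomial in $d$. This does not work. Those rules are tailored to the \emph{sum}-of-distances objective (for instance, ``a row with $d$ copies forces the answer'' uses that $d$ unit contributions already exhaust the total budget), and most of them are simply false for the \emph{max}-of-distances objective. More to the point, the paper proves immediately after \autoref{thm:sbmCl} that none of these \textsc{Closest} problems admits a polynomial kernel unless $\NP\subseteq\coNP/\poly$, so your claim that ``after this preprocessing the whole instance has size polynomial in $d$'' cannot hold. A concrete counterexample for swap: let $\pi^j$ be the identity except that positions $2j-1$ and $2j$ are swapped, for $j=1,\dots,k$. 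The identity is a closest permutation with $d=1$, yet every one of the $2k$ columns is heterogeneous, so removing homogeneous columns achieves nothing.

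The ingredient you are missing—and which the paper uses instead of any kernel—is the \emph{triangle inequality applied locally to the violating input}. One maintains a candidate $z$ (initially $\pi^1$) and a budget $\ell\leq d$. If some $\pi^i$ has $d_M(z,\pi^i)>d$, then since any putative solution $s^\star$ satisfies $d_M(z,s^\star)\leq\ell$ and $d_M(s^\star,\pi^i)\leq d$, one gets $d_M(z,\pi^i)\leq 2d$. For swap this bounds the positions where $z$ and $\pi^i$ differ by $4d$; for block-interchange it bounds the breakpoints of $z$ relative to $\pi^i$ by $8d$ (each optimal move is a $2$-move touching at most four breakpoints and never an adjacency); for short-block-moves it bounds the non-isolated vertices of the permutation graph of $z$ versus $\pi^i$ by $8d$. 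One branches over these $O(d)$ local choices, decrements $\ell$, and recurses—depth $d$, fan-out $O(d)$, hence $\bigO^*(d)^{\bigO(d)}$. The full set of $k$ input permutations is carried along unchanged; scanning all of them at each node is polynomial and absorbed in $\bigO^*$. Your ``$d+1$ positions'' remark points in the right direction, but the justification you give (cycle-count change under one swap) does not yield it; the $2d$ bound via the triangle inequality is what actually makes the branching factor $O(d)$.
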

\begin{proof}
First, we consider the {\sc Swap Closest} problem. The other problems follow in a similar way, as we discuss below. Let $\pi^1, \cdots, \pi^k$ be the input permutations. 
Recursively, we solve these problems using a bounded search tree technique as follows: 
First, set $z = \pi^1$ as a candidate permutation solution and $\ell = d$.
If $d_{\sf swap}(y, z) \leq \ell$ for each permutation $y$ of the input, then return {\tt yes}. 
Otherwise, if $\ell = 0$ then return {\tt no}. 
In the remaining case, $\ell > 0$ and there exists a permutation $\pi^i$ with $d_{\sf swap}(\pi^i, z) > d$. 
From the triangular inequality, $d_{\sf swap}(\pi^i, z) \leq 2d$ for each input permutation $z$; otherwise, the answer is {\tt no}.
Since each swap operation corrects at most two positions,  there are at most $4d$ positions on which $\pi^i$ and $z$ differ.
Let $P$ be a set of $4d$ positions on which $\pi^i$ and $z$ differ.
Hence, we branch into $|P| = 4d$ subcases: for every $p \in P$, we define $z^p$ to be equal to $z$ except for the swap putting the element $\pi_p^i$ in the position $p$ of $z^p$, and we recursively solve the problem for the pair $(z^p, \ell -1)$. 

We build a search tree of depth at most $d$, and every node has at most $4d$ children. Thus, the size of the search tree does not exceed $O((4d)^d)$. 

For the {\sc Short-Block-Move Closest} problem, it is known that each operation involves at most two edges on the associated permutation graph. Since the current solution must be at distance at most $2d$ from any input permutation, it holds that the associated permutation graph between a current solution $z$ and any $\pi^i$ has at most $4d$ edges and at most $8d$ vertices incident to some edge of the associated permutation graph; otherwise the answer is {\tt no}. Therefore, either we are already dealing with an instance with universes of small size, or there are many isolated vertices in the associated permutation graph. By the definition of permutation graphs of strings, these isolated vertices represent positions that coincide in both permutations, and we may assume that they are not involved in any move to obtain one from the other. Since we do not need to consider moves involving these isolated vertices of the associated permutation graph, we can consider only moves involving $O(d)$ many vertices. Thus, we can perform a similar bounded search tree algorithm as previously described. 


For the {\sc Block-Interchange Closest} problem, it is known that each operation changes the number of cycles in the reality and desire diagram 
by $-2$,$ 0$, or $+2$ (see~\cite{christie1998genome}). 
Moreover, from \autoref{thm:BIdistance}, there exists an optimum sequence of block-interchanges that only applies $2$-moves, i.e., each operation increases the number of cycles by two. 
This implies that there is no optimum sequence that uses $-2$ or $0$ moves. 
Recall that we obtain a sorted permutation when we achieve only cycles of size one ($n+1$ cycles in total); so, sorting is equivalent to maximizing the number of cycles in the reality and desire diagram. 
Thus, our focus is only analyzing possible $2$-moves to approximate one permutation to another one in our bounded search tree algorithm. 
It is known that there is no $2$-move that affects a $1$-cycle (cycle of length one in the diagram), 
because a $2$-move can only be performed into a unique cycle (cf.~\cite{christie1998genome,cunha2013advancing}). Thus, there is no $2$-move that affects an adjacency (a pair is an adjacency if and only if it yields a $1$-cycle in the reality and desire diagram~\cite{Bafna1998Sorting,christie1998genome}). 

At this point, we have that we can safely reduce the permutation, since all optimum block-interchange sequences do not affect adjacencies (this is a stronger result than \autoref{BIequalityreduced}). 
Hence, as each block-interchange affects at most four breakpoints, the permutation must have at most $8d$ breakpoints (i.e, $8d+1$ elements in the reduced permutation). 
Therefore, we can consider only moves involving $O(d)$ many breakpoints. Thus, we can perform a similar bounded search tree algorithm as previously described. 
~\qed\end{proof}

Note that {\sc Breakpoint Closest} does not admit a bounded search tree analogous to the ones used in \autoref{thm:fptAlgs}, since this metric does not have a sequence of operations to transform a permutation into another one; so, it is unclear how to branch. Also, for the {\sc Transposition Closest} problem, it is known that there may exist optimum sequences of transpositions that apply $0$-moves and $2$-moves, and it is a long time open question if there are optimum sequences using $-2$-moves~\cite{Bafna1998Sorting,cunha2013advancing}; so, it seems that is not safe to use the reduced permutation in that case, because there may exist an optimum sequence of transpositions that uses moves not preserved in the reduced instance, and those moves could be good for our branch step. Therefore, we leave both cases as open questions.






\bibliography{latin}

\newpage

\appendix

\section{Preliminaries on parameterized complexity}
\label{sec:preliminariesParameterized}

A \emph{parameterized} problem is a decision problem whose instances are pairs $(x,k) \in \Sigma^* \times \mathbb{N}$, where $k$ is called the \emph{parameter}.
A parameterized problem is \emph{fixed-parameter tractable} (\FPT) if there exists an algorithm $\mathcal A$, a computable function $f$, and a constant $c$ such that given an instance $I=(x,k)$, $\mathcal A$ (called an {\FPT} \emph{algorithm}) correctly decides whether $I \in L$ in time bounded by $f(k) \cdot |I|^c$. 

A parameterized problem is \emph{slice-wise polynomial} ({\XP}) if there exists an algorithm $\mathcal A$ and two computable functions $f,g$ such that given an instance $I=(x,k)$, $\mathcal A$ (called an {\XP} \emph{algorithm}) correctly decides whether $I \in L$ in time bounded by $f(k) \cdot |I|^{g(k)}$. Within parameterized problems, the class {W}[1] may be seen as the parameterized equivalent to the class \NP of classical optimization problems. Without entering into details (see~\cite{downey2012parameterized,cygan2015parameterized} for the formal definitions), a parameterized problem being {W}[1]-\emph{hard} can be seen as a strong evidence that this problem is {not} \FPT. The canonical example of {W}[1]-hard problem is \textsc{Clique} parameterized by the size of the solution.

To transfer {W}[1]-hardness from one problem to another, one uses a \emph{pa\-ra\-me\-te\-ri\-zed reduction}, which given an input $I=(x,k)$ of the source problem, computes in time $f(k) \cdot |I|^c$, for some computable function $f$ and a constant~$c$, an equivalent instance $I'=(x',k')$ of the target problem, such that $k'$ is bounded by a function depending only on $k$. Equivalently, a  problem is W[1]-hard if there is a parameterized 
reduction from \textsc{Clique} parameterized by the size of the solution.


In addition to the {W}[1] class, some classes of pa\-ra\-me\-te\-ri\-zed problems are defined according to their parameterized intractability level. These classes are organized in the so-called W-hierarchy, \FPT $\subseteq$ W[1] $\subseteq$ W[2] $\subseteq$ $\ldots$ $\subseteq$ W[P] $\subseteq$ XP, and it is conjectured that each of the containments is proper~\cite{downey2012parameterized}. If P = \NP, then the hierarchy collapses~\cite{downey2012parameterized}. A parameterized problem is \textit{para-\NP-hard} if it is \NP-hard for some fixed value of the parameter, such as the $k$-{\sc Coloring} problem parameterized by the number of colors for every fixed $k \geq 3$.


\begin{definition}[Bodlaender et al.~\cite{bodlaender2011kernel}]
Let $P, Q \subseteq \Sigma^* \times \mathbb{N}$ be parameterized problems. We say that a polynomial computable function $f: \Sigma^* \times \mathbb{N} \rightarrow \Sigma^* \times \mathbb{N}$ is a \emph{polynomial parameter transformation} (PPT) from $P$ to $Q$ if for all $(x,k) \in \Sigma^* \times \mathbb{N}$ the following holds: $(x,k) \in P$ if and only if $(x’, k’)  = f(x,k) \in Q$ and $k’ \leq k^{O(1)}$.
\end{definition}

\begin{definition}[Bodlaender et al.~\cite{bodlaender2011kernel}]
A \emph{kernelization algorithm}, or in short, a \emph{kernel} for a parameterized problem $L \subseteq \Sigma^* \times \mathbb{N}$ is an algorithm that given $(x,k) \in \Sigma^* \times \mathbb{N}$, outputs in $p(|x| + k)$ time a pair $(x’,k’) \in \Sigma^* \times \mathbb{N}$ such that
\begin{itemize}
\item[$\bullet$] $(x, k) \in L \Leftrightarrow (x’, k’) \in L$, and
\item[$\bullet$] $|x’|, k’ \leq f(k)$,
\end{itemize}
where $f$ is some computable function and $p$ is a polynomial. Any function $f$ as above is referred to as the \emph{size} of the kernel.
\end{definition}

If we have a kernel for $L$, then for any $(x, k) \in \Sigma \times \mathbb{N}$, we can obtain in polynomial time an equivalent instance with respect to $L$ whose size is bounded by a function of the parameter. 
Of particular interest are \emph{polynomial kernels}, which are kernels whose size is bounded by a polynomial function.

\begin{theorem}[Bodlaender et al.~\cite{bodlaender2011kernel}]
Let $P$ and $Q$ be parameterized problems and $P’$ and $Q’$ be, respectively, the unparameterized versions of $P$ and $Q$. Suppose that $P’$ is \NP-hard and $Q’$ is in \NP. Assume that there is a polynomial parameter transformation from $P$ to $Q$. Then if $Q$ admits a polynomial kernel, so does $P$. Equivalently, if $P$ admits no polynomial kernel under some assumption then neither does $Q$.
\end{theorem}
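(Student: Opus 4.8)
The plan is to chain together three reductions: the assumed polynomial parameter transformation from $P$ to $Q$, the assumed polynomial kernelization of $Q$, and a classical (unparameterized) Karp reduction from $Q'$ to $P'$, which exists because $Q' \in \NP$ and $P'$ is $\NP$-hard. Composing them should turn any instance of $P$ into an equivalent instance of $P$ whose size is polynomial in the original parameter, i.e., a polynomial kernelization of $P$, and the second sentence of the statement is then just the contrapositive.

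In detail, starting from an instance $(x,k)$ of $P$ I would first apply the PPT $f$ to obtain $(x_1,k_1) = f(x,k)$ with $(x,k) \in P \Leftrightarrow (x_1,k_1) \in Q$ and $k_1 \le k^{\bigO(1)}$. I would then run the polynomial kernel of $Q$ on $(x_1,k_1)$ to get an instance $(x_2,k_2)$ of $Q$ with $(x_1,k_1) \in Q \Leftrightarrow (x_2,k_2) \in Q$ and $|x_2| + k_2 \le k_1^{\bigO(1)} \le k^{\bigO(1)}$; in particular the word $w := x_2 \# 1^{k_2}$ encoding $(x_2,k_2)$ as an instance of the unparameterized problem $Q'$ has length polynomial in $k$. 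Next I would fix, once and for all, a polynomial-time many-one reduction $g$ from $Q'$ to $P'$ (possible since $Q'\in\NP$ and $P'$ is $\NP$-hard), composed with a trivial clean-up step so that $g$ always outputs a syntactically well-formed instance $v \# 1^{j}$ of $P'$; then $w \in Q' \Leftrightarrow g(w) \in P'$, that is, $(x,k) \in P \Leftrightarrow (v,j) \in P$. Finally, since $|v| + j \le |g(w)| \le |w|^{\bigO(1)} \le k^{\bigO(1)}$, the pair $(v,j)$ is an instance of $P$ of size polynomial in $k$ equivalent to $(x,k)$, so the whole procedure is a polynomial kernelization of $P$.

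The routine parts are the size bookkeeping — each step composes polynomially, so the final size is a fixed polynomial in $k$ — and checking that the procedure runs in polynomial time, which it does because $f$, the kernelization of $Q$, and $g$ all do. The point that needs care, and which I expect to be the main obstacle, is the ``re-parameterization'' step: a Karp reduction only produces a string, so one must argue that $g$ can be arranged to land inside the canonical image $\{\,v \# 1^{j} : (v,j) \in P\,\}$ of $P$ within $P'$, so that a parameter $j$ can be read off, and that this $j$ is itself bounded by $|g(w)|$ and hence polynomial in $k$. Once this is settled, the equivalence $(x,k)\in P$ iff $(v,j)\in P$ follows by transitivity along the chain $(x,k) \in P \Leftrightarrow (x_1,k_1) \in Q \Leftrightarrow (x_2,k_2) \in Q \Leftrightarrow w \in Q' \Leftrightarrow g(w) \in P' \Leftrightarrow (v,j) \in P$.
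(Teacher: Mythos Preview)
The paper does not give its own proof of this theorem; it is stated as a cited result from Bodlaender et al.\ and used as a black box. There is therefore nothing in the paper to compare against.

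That said, your argument is exactly the standard proof of this lemma and is correct. The only delicate step is the one you yourself flag: extracting a parameter from the output of the Karp reduction $g$. This is handled just as you indicate. With the unparameterized version defined as $P' = \{\,v\#1^{j} : (v,j)\in P\,\}$, any output of $g$ that is not of the form $v\#1^{j}$ automatically lies outside $P'$; in that case $w\notin Q'$ and hence $(x,k)\notin P$, so one may return a fixed no-instance of $P$ (which exists since $P'$ is \NP-hard and thus nontrivial). Otherwise one reads off $(v,j)$ directly, and $j \le |g(w)|$ gives the required polynomial bound. With this detail made explicit, the chain of equivalences and the size bookkeeping go through exactly as you wrote.
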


\section{Sorting by rearrangement operations}\label{sec:Operacoes}

\paragraph*{The breakpoint distance.}
An \emph{adjacency in a permutation} 
$\pi$ with respect to permutation $\sigma$ is a pair $(a,e)$ of consecutive elements in $\pi$ such that $(a,e)$ 
is also consecutive in $\sigma$. 
If a pair of consecutive elements is not an adjacency, then $(a,\ e)$ is called a \emph{breakpoint}, and we denote by $d_{\sf BP}(\pi, \sigma)$ the number of breakpoints of~$\pi$ with respect to $\sigma$. 
The set $\adj(\pi)$ of adjacencies of $\pi$ is thus given by $\adj(\pi) = \{\{\pi_i, \pi_{i+1}\} \mid i=1, \cdots, n-1 \}$. 
Thus, in other words, the breakpoint distance between $\pi$ and $\sigma$ is $d_{\sf BP}(\pi, \sigma) = |\adj(\pi) - \adj(\sigma)|$. 

\paragraph*{The block-interchange and the transposition distances.}
Bafna and Pevzner~\cite{Bafna1998Sorting} proposed a useful graph, called the reality and desire diagram, which allowed non-trivial bounds on the transposition distance~\cite{Bafna1998Sorting}, and also provided, as established by Christie~\cite{christie1998genome}, the exact block-interchange distance. 
Nevertheless, when considering the transposition distance, the reality and desire diagram is a tool to only deal with lower and upper bounds for a permutation, as discussed below.

Given a permutation $\pi$ of length $n$, the \emph{reality and desire diagram} $G(\pi, \iota)$ 
(or just $G(\pi)$ when convenient) from $\pi$ to $\iota$, is a multigraph $G(\pi)=(V,R\cup D)$, where $V = \{0, -1, +1, -2, +2, \cdots ,\allowbreak -n, +n, -(n+1)\}$, each element of $\pi$ corresponds to two vertices and we also include the vertices labeled by $0$ and $-(n+1)$, and the edges are partitioned into two sets: the reality edges $R$ and the desire edges $D$. The \emph{reality edges} represent the adjacency between the elements on $\pi$, that is $R = \{(+\pi(i),\ -\pi(i+1))\mid  i=1,\cdots, n-1\} \cup \{(0,\ -\pi(1)),\ (+\pi(n),\ -(n+1))\}$; and the \emph{desire edges} represent the adjacency between the elements on $\iota$, that is $D = \{(+i,\ -(i+1)) \mid i=0,\cdots, n\}$. \autoref{fig:bg1} illustrates the reality and desire diagram of a permutation. A general definition considers $G(\pi,\sigma)$ where the reality (resp. desire) edges represent the adjacency between elements of $\pi$ (resp. $\sigma$), and then $D = \{(+\sigma(i),\ -\sigma(i+1)) \mid i=0,\cdots, n\}$. 

\begin{figure}[h!]
	\centering
			\includegraphics[scale=.3]{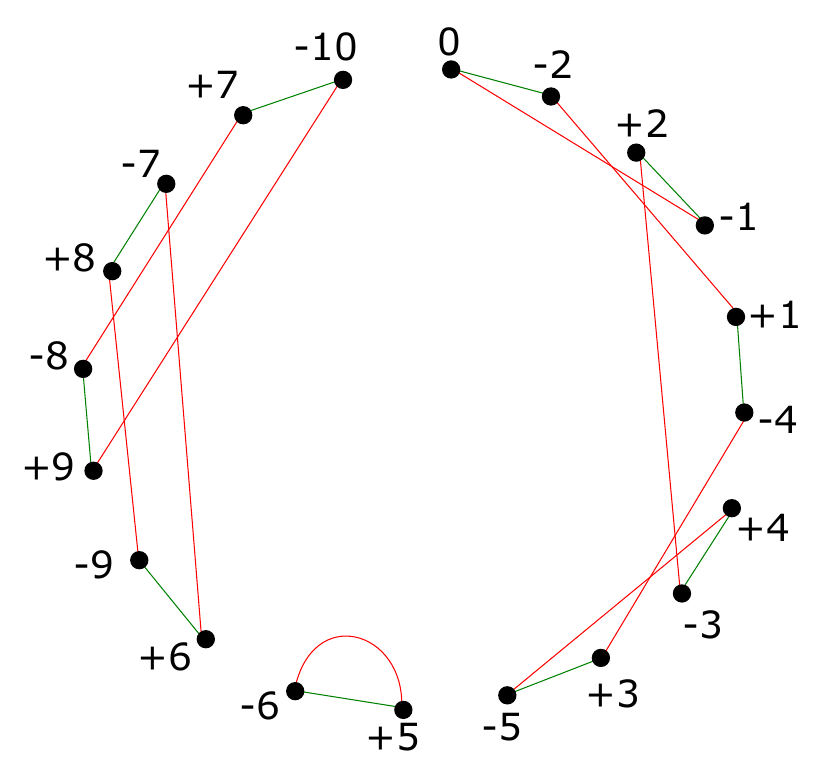}
	\caption{The reality and desire diagram between permutation $[\bm{0}\, 2\, 1\, 4\, 3\, 5\, 6\, 9\, 8\, 7\, \bm{10}]$ and $\iota$, where bold edges are reality edges, and the other ones are the desire edges.}
	\label{fig:bg1}
\end{figure}

As a direct consequence of the construction of this graph, the sets of reality edges and desire edges define two perfect matchings (that is, a set of edges that contains all vertices of the graph and each of them appears exactly once), denoted by $M(\pi)$ and $M(\iota)$, respectively. 
Each of these perfect matchings is called a \emph{permutation matching}. 

Since every vertex in $G(\pi)$ has degree two, $G(\pi)$ can be partitioned into disjoint cycles. We say that a cycle in $\pi$ has length $k$, or that it is a $k$-cycle, if it has exactly $k$ reality edges (or, equivalently, $k$ desire edges). Hence, the identity permutation of length $n$ has $n+1$ cycles of length one. We denote by $c(G(\pi, \iota))$ (or just $c(G(\pi))$ for convenience) the number of cycles in $G(\pi)$. 
One may note that when it is needed to compute properly the distance between two permutations $\pi$ and $\sigma$, we  explicitly denote the corresponding parameters by $G(\pi,\sigma)$ and $c(G(\pi, \sigma))$.
After applying a block-interchange $b\ell$ in a permutation $\pi$, the number of cycles $c(G(\pi))$ changes in such a way that $c(G(\pi b\ell)) = c(G(\pi)) + x$, for some $x \in \{-2, 0, 2\}$ (see~\cite{christie1998genome}). 
The block-interchange $b\ell$ is thus classified as an \emph{$x$-move} for~$\pi$. 
Analogously, after applying a transposition $t$ in a permutation~$\pi$, the number of odd cycles $c_{\sf odd}(G(\pi))$ changes by an \emph{$x$-move}, $x \in \{-2, 0, 2\}$, for~$\pi$~(see~\cite{Bafna1998Sorting}). 
Christie~\cite{christie1998genome} proved, for the block-interchange operation, the existence of a $2$-move for any permutation, which says that the number of cycles yields the exact block-interchange distance:

\begin{theorem}[Christie~\cite{christie1998genome}]\label{thm:BIdistance}
\label{BIdistance}
The block-interchange distance of a permutation $\pi$ of length~$n$ is
$
d_{\sf BI}(\pi) = \frac{(n+1)-c(G(\pi))}{2}.
$
\end{theorem}

On the other hand, by allowing only the particular case of the transposition operation, a $2$-move is not always possible to be used. We say that a transposition \emph{affects} a cycle if the extremities of the two blocks of the transposition eliminate a reality edge of a cycle and creates another edge. This new edge may increase, decrease, or keep the number of cycles.

A transposition $t(i,j,k)$, where $1 \leq i < j < k \leq n+1$, is a permutation that exchanges the contiguous blocks $i \; i\!+\!1 \cdots j\!-\!1$ and $j \; j\!+\!1 \cdots k\!-\!1$; when composed with a permutation $\pi_{[n]}$, it yields the following permutation:
$$\pi_{[n]} \cdot t(i,j,k) = [\pi_1 \ \pi_2 \ \cdots \pi_{i-1} \ \underline{\pi_j \ \cdots \ \pi_{k-1}} \ \underline{\pi_i \ \ldots \pi_{j-1}} \ \pi_k \cdots \pi_n].$$
Bafna and Pevzner~\cite{Bafna1998Sorting} showed conditions of a cycle for a transposition to be an $x$-move. If a transposition $t$ is a $-2$-move, then $t$ affects three distinct cycles. However, if a transposition $t$ is a $0$-move or a $2$-move, then~$t$ affects at least two elements of the same cycle~\cite{Bafna1998Sorting}. 

\begin{theorem}[Bafna and Pevzner~\cite{Bafna1998Sorting}]
\label{Tdistance}
The transposition distance of a permutation $\pi$ of length~$n$ is
$
d_{\sf T}(\pi) \geq \frac{(n+1)-c_{\sf odd}(G(\pi))}{2}.
$
\end{theorem}

Permutations whose transposition distances are equal to the lower bound of \autoref{Tdistance} are called \emph{hurdle-free permutations}~\cite{christie1998genome,bader2011transposition}. 
Cunha et al.~\cite{cunha2014faster,cunha2015faster} showed upper bounds on the distance of any permutation by using permutation trees data structure, and based on that, an $1.375$-approximation algorithm was developed, improving the time complexity to $O(n\log n)$. 
An interesting transformation of permutations is the reduction, since the permutation obtained after its reduction preserves the block-interchange distance and the transposition distance as well. 
The \emph{reduced permutation} of $\pi$, denoted by $gl(\pi)$, is the permutation whose reality and desire diagram $G(gl(\pi))$ is equal to $G(\pi)$ without the cycles of length one, and has its vertices relabeled accordingly. For instance, the reduced permutation corresponding to the permutation in \autoref{fig:bg1} is $[\bm{0}\ 2\ 1\ 4\ 3\ 5\ 8\ 7\ 6\ \bm{9}]$. 

\begin{theorem}[Christie~\cite{christie1998genome}]
\label{BIequalityreduced}
The block-interchange distances of a permutation $\pi$ and its reduced permutation $gl(\pi)$ satisfy
$
d_{\sf BI}(\pi) = d_{\sf BI}(gl(\pi)).
$
\end{theorem}

\paragraph*{Swap distance.} Permutations can also be represented by each element followed by its image. For example, given a set $\{1, 2, 3\}$, the sequence $(1\ 2\ 3)$ maps $1$ into $2$, $2$ into $3$, and $3$ into $1$, corresponding to the permutation $[2 \ 3\ 1]$.
This representation is not unique; for instance, $(2\ 3\ 1)$ and $(3\ 1\ 2)$ are equivalent.
Permutations are composed of one or more algebraic cycles, where each algebraic cycle of a permutation $\pi$ is a representation of a domain $i$, followed by its image $\pi(i)$, followed by getting the image of $\pi(i)$ as the next element, i.e., $\pi(\pi(i))$, and so on. 
We continue this process until we reach a repeated element. 
This procedure uniquely defines the permutation. 
We denote by $c(\pi)$ the number of algebraic cycles of $\pi$.
For example,
given $\pi = [8 \ 5\ 1\ 3\ 2\ 7\ 6\ 4] = (1\ 8\ 4\ 3)(2\ 5)(6\ 7)$, we have $c(\pi) =3$. 
An exchange of elements involving elements $a$ and $b$ such that $a$ and $b$ are in the same cycle is an exchange that \emph{breaks} the cycle in two, whereas if $a$ and $b$ belong to different cycles, the exchange of these elements \emph{unites} the two cycles~\cite{fertin2009combinatorics}.
Thus, when considering the \emph{swap} metric, the \emph{swap} distance of a $\pi$ permutation is determined as follows: $d_{\sf swap}(\pi) = n - c(\pi)$, where $c(\pi)$ is the number of algebraic cycles of $\pi$.

\paragraph*{Short-block-move distance.} A \emph{$p$-bounded block-move} is a transposition $t(i,j,k)$ such that $k - i \leq p$, and a $3$-bounded block-move is called a \emph{short-block-move}. Hence, a short-block-move is either a transposition $t(i,i+1,i+2)$, called a \emph{skip}, a transposition $t(i,i+1,i+3)$, or a transposition $t(i,i+2,i+3)$, the two latter ones called \emph{hops}. If the transpositions are restricted only to $p$-bounded block-moves, then one obtains the \emph{$p$-bounded block-move distance} $d_{\sf pbbm}(\pi_{[n]})$. When $p=3$, this defines the \emph{short-block-move distance} $d_{\sf sbm}(\pi_{[n]})$.
Previous works investigated variants of block-move distances where bounds are imposed on the lengths of at least one of the blocks moved~\cite{Vergara2,Vergara}. The problem of sorting permutations using $2$-bounded block-moves, i.e., adjacent swaps, is easily solved by the \emph{Bubble-Sort} 
algorithm~\cite{book:Knuth}. 
In general, the complexity of the problem of sorting a permutation by $p$-bounded block-moves is unknown for fixed $p > 2$, whereas the analogous problem of limiting $k-i \leq f(n)$, is \NP-hard~\cite{Vergara2}, since {\sc Sorting by Transpositions} is \NP-hard.

To estimate the short-block-move distance, Heath and Vergara \cite{Vergara2,Vergara} used the \emph{permutation graph} $PG(\pi_{[n]}) = (V^p_\pi, E^p_\pi)$, where $V^p_\pi = \{1,  2,  \ldots, n\}$ and $E^p_\pi = \{ (i,j) \mid   \pi_i > \pi_j, \ i < j\}$; each edge of $PG$ is called an \emph{inversion} in $\pi$. Heath and Vergara proved that on a shortest sequence of operations for $\pi_{[n]}$, every short-block-move decreases the number of inversions by at least one unit, and by at most two units, therefore: 
$
 \left\lceil\frac{|E^p_\pi|}{2}\right\rceil \leq d_{\sf sbm}(\pi_{[n]})  \leq |E^p_\pi|.
 $
Given a permutation, our aim is to minimize the number of operations that decrease only one inversion in $PG$. Examples of permutations that are tight with respect to the above  lower and upper bounds are $[2\,\,4\,\,3\,\,5\,\,1]$ and $[2\,\, 1\,\, 4\,\, 3\,\, 6\,\,5]$, respectively. 

A short-block-move is a \emph{correcting} move if it is a skip that eliminates one inversion, or a hop that eliminates two inversions in $\pi$. Otherwise, the block-move is called \emph{non-correcting}. Heath and Vergara~\cite{Vergara} proved that each sorting sequence can be performed by using just correcting moves. \autoref{tab:replacing-beta-is} shows replacements from non-correcting moves to correcting moves in an optimal sorting sequence, which we will use later in \autoref{thm:sorting-each-cc-separately-is-optimal}.

\begin{table}[htbp]
\centering
\begin{footnotesize}
\begin{tabular}{|c||c|c||c|}
\hline
 case & $\pi$ & $\pi'=\pi\beta_i$ & $\pi''=\pi\beta'_i$\\
\hline
1 & $\cdots ef \cdots$  & $\cdots fe \cdots$  & $\cdots ef \cdots$  \\
2 & $\cdots exf \cdots$ & $\cdots xfe \cdots$ & $\cdots xef \cdots$ \\
3 & $\cdots exf \cdots$ & $\cdots fex \cdots$ & $\cdots efx \cdots$ \\
4 & $\cdots xef \cdots$ & $\cdots fxe \cdots$ & $\cdots exf \cdots$ \\
5 & $\cdots efx \cdots$ & $\cdots fxe \cdots$ & $\cdots exf \cdots$ \\
\hline
\end{tabular}
\end{footnotesize}
\medskip
\caption{How to replace a non-correcting move $\beta_i$ with a correcting move $\beta'_i$~\cite{Vergara}; in all cases, $e<f$, and $x$ is arbitrary.}
\label{tab:replacing-beta-is}
\end{table}

\section{Proofs deferred from \autoref{sec:median}}
\label{ap:median-problem}

In this appendix we provide the proofs deferred from \autoref{sec:median}. For the sake of readability, we restate the corresponding result before proving it. 

\bigskip

\noindent
\autoref{thm:bimpNPC}.
\emph{
The {\sc Block-interchange Median} problem is \NP-complete even when the input consists of three permutations.
}
\begin{proof}
Let $\pi^1, \pi^2, \pi^3$ be permutations of $\{1, \cdots ,n\}$, and let $k$ be an integer. 
Then, $(\pi^1, \pi^2, \pi^3, \allowbreak k) \in \cmp$ if and only if there is a permutation $\sigma$ satisfying $\sum_{i=1}^{3} c(G(\sigma,\pi^i)) \geq k$. 
Since solving an $\cmp$ instance is equivalent to finding a permutation matching $M(\sigma)$ such that  $\sum_{i=1}^{3} c(G(\sigma,\pi^i))$ is maximized and the block-interchange distance between any two permutations $d_{\sf BI}(\sigma,\pi^i) = \frac{n+1 - c(G(\sigma,\pi^i))}{2}$, then $\cmp \leq_p \bimp$. 
~\qed\end{proof}

\bigskip

\noindent
\autoref{lm:reducedBI-t}.
\emph{
If an adjacency occurs in all of the input permutations, then it occurs also in solutions of the {\sc Block-interchange Median} problem and the {\sc Transposition Median} problem. 
}
\begin{proof}
Assume ${\tt ab}$ is a motif pair that occurs in all $k$ input permutations. 
Let $\sigma$ be a solution median permutation satisfying $a$ and $b$ are not adjacent. 
Suppose, w.l.o.g. $\sigma = [\sigma_1, \cdots, \sigma_{i-1}, {\tt a}, \sigma_{i+1}, \cdots, \sigma_{j-1}, {\tt b}, \sigma_{j+1}, \cdots, \sigma_n]$. 
Thus, we obtain the permutation $\sigma’$ from $\sigma$ 
by setting adjacencies ${\tt ab}$, ${\tt b}\sigma_{i+1}$ and $\sigma_{j-1}\sigma_{j+1}$,  
removing ${\tt a} \sigma_{i+1}$, $\sigma_{j-1}{\tt b}$ and ${\tt b}\sigma_{j+1}$, 
and keeping all other adjacencies of $\sigma$, 
i.e., $\sigma’ = [\sigma_1, \cdots, \sigma_{i-1}, {\tt ab}, \sigma_{i+1}, \cdots, \sigma_{j-1}, \sigma_{j+1}, \cdots, \sigma_n]$. 
Now, considering any optimum sequence of block-interchanges (or transpositions) from $\sigma$ to $\pi^i$, we are present a simulation sequence from $\sigma'$ to $\pi^i$. 
Any operation applied on a sequence from $\sigma$ to $\pi^i$ that does not change adjacencies ${\tt a} \sigma_{i+1}$, $\sigma_{j-1}{\tt b}$ and ${\tt b}\sigma_{j+1}$ can be simulated properly from $\sigma’$ to $\pi^i$ without any loss, once any impact on the decreasing number of breakpoints is the same, and so the number of cycles on the reality and desire diagram. 
If an operation applied on a sequence from $\sigma$ to $\pi^i$ affects i) ${\tt a} \sigma_{i+1}$, ii) $\sigma_{j-1}{\tt b}$ or iii) ${\tt b}\sigma_{j+1}$, then we simulate it on a sequence from $\sigma$ to $\pi^i$ as follows: i) instead of cut a block just after ${\tt a}$, it is cut after the two elements ${\tt ab}$; ii) instead of cut a block just before ${\tt b}$, it is cut just before $\sigma_{j+1}$; iii) instead of cut a block just after ${\tt b}$ it is cut just before $\sigma_{j+1}$ as well. 
Since all input permutations have the adjacency ${\tt ab}$, no extra operation must be applied from $\sigma'$ to $\pi^i$. 
Thus, we conclude that $\sum\limits_{i=1}^{k} d_{\sf BI}(\sigma’, \pi^i) \leq \sum\limits_{i=1}^{k} d_{\sf BI}(\sigma, \pi^i)$ (or $\sum\limits_{i=1}^{k} d_{\sf T}(\sigma’, \pi^i) \leq \sum\limits_{i=1}^{k} d_{\sf T}(\sigma, \pi^i)$). 
~\qed\end{proof}

\bigskip

\noindent
\autoref{lm:breakpointAdjsolution}.
\emph{
If an adjacency occurs in all of the input permutations, then it occurs also in a solution of the {\sc Breakpoint Median} problem.
}
\begin{proof}
Suppose that $X = [x_1, x_2, \cdots, x_n]$ is a breakpoint median for the input $\pi^1, \pi^2, \cdots, \pi^k$, and $\{x_i, x_j\}$ is a pair in $(\adj(\pi^1) \cap \cdots \cap \adj(\pi^k)) \setminus \adj(X)$. 
We obtain a set $Y$ being $Y = \adj(X) \cup \{\{x_i, x_j\}\}$. 
Hence, we modify $Y$ in such a way to generate a set of adjacencies which forms a median solution. 
Given a pair of adjacency $\{x,y\} \in X$, let $w(x,y) = |\{X \in \{\pi^1, \cdots, \pi^k\} \ : \ \{x,y\} \in \adj(X)\}|$, i.e., $w(x,y)$ is the number of input permutations that have the adjacency $\{x,y\}$. 
If $w(x_{i-1}, x_i) \leq w(x_i, x_{i+1})$ we remove $\{x_{i-1}, x_i\}$ from $Y$, otherwise we remove $\{x_i, x_{i+1}\}$. 
In the same way, 
if $w(x_{j-1}, x_j) \leq w(x_j, x_{j+1})$ we remove $\{x_{j-1}, x_j\}$ from $Y$, otherwise we remove $\{x_j, x_{j+1}\}$. 
Note that in each one of the four possible cases, in the resulting set $Y$,  $\{x_i, x_j\}$ happens exactly once, the same as the others elements. 
Since $w(x_i,x_j)=k$, $\sum\limits_{i=1}^{k} d_{\sf BP}(\pi^i, Y) < \sum\limits_{i=1}^{k} d_{\sf BP}(\pi^i, X)$, which is a contradiction. 
~\qed\end{proof}

\bigskip

\noindent
\autoref{thm:kernel}.
\emph{
The following problems admit a polynomial kernel parameterized by the value $d$ of the desired  median solution:}
\emph{
\begin{enumerate}
    \item {\sc $d$-Swap Median}.
    \item {\sc $d$-Breakpoint Median}.
    \item {\sc $d$-Block-interchange Median}.
    \item {\sc $d$-Transposition Median}.
    \item {\sc $d$-Short-Block-Move Median}.
\end{enumerate}
}
\begin{proof}
First, we consider a polynomial kernelization for the {\sc $d$-Swap Median} problem based on the following reduction rules:

\begin{enumerate}
    \item If there is a column with more than $d+1$ elements, then return {\tt no}.
    \item If there is a column with at least two elements occurring at least $d+1$ times, then return {\tt no}.
    \item If an element occurs more than $d$ times in at least two columns, then return {\tt no}. 
    \item If a row has at least $d$ copies in the matrix, then either the solution is a copy of such a row, or the answer is {\tt no}.

    \smallskip
    
    \item[] We say that a column $i$ is \textit{heavy} for an element $x$ if $x$ occurs more than $d$ times in it; otherwise, $i$ is said \textit{light} for $x$.   

 \smallskip 
 
    \item For each element $x$, if the sum of occurrences of $x$ in its light columns is more than $2d$, then return {\tt no}.
    \item If the previous rules were not applied, remove the columns whose all elements are the same, and reduce the universe accordingly.
\end{enumerate}

Since the goal is to determine whether there is a permutation $s^{\star}$ whose sum of the distances by swaps between $s^{\star}$ and all permutations of $S$ is at most $d$, Rules 1-4 are clearly safe. 
Now, we discuss Rule 5. If $S$ is a {\tt yes}-instance then an element $x$ having a heavy column $i$ (by Rule 2, there is at most one heavy column) must have $x$ in the position $i$ of any optimum solution $s^{\star}$. Thus, the number of rows that contains $x$ in positions different than $i$ is at most $d$. Also, if $x$ has no heavy column, then such a $s^{\star}$ contains $x$ in some position $i$ whose column has at most $d$ occurrences of $x$, while the number of rows having $x$ in other positions is also at most $d$ (hence, $2d$ occurrences in total). Thus, Rule 5 is safe.

Regarding Rule 6, as each $s\in S$ is a permutation, it holds that if a column $i$ of $S$ contains only one element $x$, then all permutations of $S$ have $x$ in position $i$, implying that any optimal solution for the problem should contain $x$ in position $i$. Thus, it is safe to ignore that column $i$ and element $x$ from the input. (Recall that $s^{\star}$ having $x$ in position $i$ implies that for any $s\in S$, there is an algebraic cycle of length one between $s^{\star}$ and $s$, which is the best possible because the swap metric can be seen as the minimum number of swaps to get only algebraic cycles of size one.)  

At this point, we may suppose that Rules 1-5 were not applied and that $S'$ is the resulting instance after the application of Rule 6.

To complete the kernelization algorithm for {\sc $d$-Swap Median}, we need the following lemma. 


\begin{lemma}
If $S'$ is a {\tt yes}-instance of the {\sc $d$-Swap Median} problem, then $S'$ has at most $2d$ columns and $4d^2+d$ rows.
\end{lemma}
\begin{proof}
For a column $i$ having more than one element, the distance between the optimum solution $s^{\star}$ and some permutation of $S$ needs to count a swap move involving column $i$. In addition, each swap affects only two columns, implying that $d$ moves can affect at most $2d$ columns. However, by Rule 6, any column of $S'$ has more than one element. Therefore, if $S'$ is a {\tt yes}-instance of the {\sc $d$-Swap Median} problem, it must have at most $2d$ columns.    

Now, let us argue about the number of rows of $S'$. 
Recall that a row $s$ is light if it has an element $x$ in position $i$ such that column $i$ is light for $x$; otherwise it is heavy. By Rule 5, the number of rows that contain an element $x$ in a light column for it is at most $2d$. Since, $S'$ has at most $2d$ columns it also has at most $2d$ elements. Therefore, the number of light rows is at most $4d^2$. Finally, by definition, heavy rows have only elements in positions for which they are heavy. By Rule 3, each element is heavy for only one column, which implies that heavy columns are copies. By Rule 4, we conclude that we have at most $d$ heavy rows in $S'$. Hence, $S'$ has at most $4d^2+d$ rows. 
~\qed\end{proof}

Therefore, either the size of $S'$ certifies a {\tt no}-answer, or $S'$ is returned as a kernel for the {\sc $d$-Swap Median} problem.  

 \bigskip

Next, we discuss a kernelization algorithm for the {\sc $d$-Breakpoint Median} problem.
Recall that in the breakpoint metric, one does not care about occurrences of elements in columns but adjacencies of elements instead (regardless of their position in the rows). Thus, we should adapt the previous arguments accordingly. 

A kernel for the {\sc $d$-Breakpoint Median} problem can be found as follows:

\begin{enumerate}

    \item If there is an element having with more than $d+1$ distinct successor elements (distinct adjacencies) in the matrix, then return {\tt no}.

    \item If there is an element $x$ with at least two elements occurring at least $d+1$ times as successor of $x$ in the matrix, then return {\tt no}.

    \item If an element occurs more than $d$ times as successor of at least two other elements in the matrix, then return {\tt no}. 
    
    \item If a row has at least $d$ copies in the matrix, then either the solution is a copy of such a row, or the answer is {\tt no}.

    \smallskip
    
    \item[] We say that an element $y$ is a \textit{heavy successor} for an element $x$ if $xy$ occurs more than $d$ times in the matrix; a successor of $x$ that is not heavy is said to be a \textit{light successor} for $x$.   

    \smallskip

    \item For each element $x$, if the sum of occurrences of $x$ with light successors is more than $2d$, then return {\tt no}.
    
    \item Assuming that the previous rules were not applied, if there is an adjacency between $x$ and $y$ (i.e., $xy$) occurring in all of the input permutations, then consider $xy$ as a single element and reduce the universe  accordingly. Repeat this until there is no such adjacencies.
\end{enumerate}

The safety of Rules 1-4 is straightforward, for Rule~5 the argument is similar to the swap case replacing columns by successors, and for Rule~6 the safety proof follows from \autoref{lm:breakpointAdjsolution}. Again, we suppose that Rules 1-5 were not applied and $S'$ is the resulting instance after the application of Rule 6.

Similarly as above, to complete the kernelization algorithm for {\sc $d$-Breakpoint Median}, we need the following lemma.

\begin{lemma}~\label{lm:kernel_BK}
If $S'$ is a {\tt yes}-instance of the {\sc $d$-Breakpoint Median} problem, then $S'$ has at most $2d$ columns and $4d^2+d$ rows.
\end{lemma}
\begin{proof}
For an element $x$ having more than one element as successor or more than one element as predecessor in the matrix, the distance between the optimum solution $s^{\star}$ and some permutation of $S$ needs to count a breakpoint involving element $x$. In addition, each breakpoint involves only two elements, implying that $d$ breakpoints can involve at most $2d$ elements. However, by Rule 6, any element of $S'$ is involved in at least one breakpoint. Therefore, if $S'$ is a {\tt yes}-instance of the {\sc $d$-Breakpoint Median} problem, then, since we are dealing with permutations,  it must have at most $2d$ elements and at most $2d$ columns.

Now, let us argue about the number of rows of $S'$.  Recall that a row $s$ is light if it has an adjacency $xy$ such that $y$ is a light successor for $x$; otherwise it is heavy. By Rule 5, for each element $x$, the number of rows that contain an adjacency $xy$ where $y$ is a light successor of $x$ is at most $2d$. Since, $S'$ has at most $2d$ elements, the number of light rows is at most $4d^2$. Finally, by definition, heavy rows have only heavy successors. By Rule 3, each element is heavy for only one predecessor in the matrix, which implies that heavy rows are copies. By Rule 4, we conclude that we have at most $d$ heavy rows in $S'$. Hence, $S'$ has at most $4d^2+d$ rows.
~\qed\end{proof}


Therefore, as above, either the size of $S'$ certifies a {\tt no}-answer, or $S'$ is returned as a kernel for the {\sc $d$-Breakpoint Median} problem. 

\bigskip

Next, we discuss a kernelization for the {\sc $d$-Block-interchange Median} problem and the {\sc $d$-Transposition Median} problem. Recall that for both metrics, whenever there is a breakpoint there is a move to be ``played'' to obtain the identity. Thus, a large set of breakpoints being one per row is enough to certify a {\tt no}-answer for both problems as well. As Rules 1-5 of the previous kernelization deal only with these kind of sets of breakpoints, they also hold as reduction rules for these two problems. 

On the other hand, an analogous of Rule~6 may depend on the kind of move to be used. However, \autoref{lm:reducedBI-t} shows that a similar reduction rule can also be applied for the {\sc $d$-Block-interchange Median} problem and the {\sc $d$-Transposition Median} problem. Regarding an analogous of \autoref{lm:kernel_BK}, it is enough to observe that any block-interchange involves the adjacency of at most eight elements (at most four adjacencies involved), and then one can conclude that $S'$ has at most $8d$ elements/columns and $16d^2+d$ rows. Similarly, concerning transpositions, each move involves the adjacency of at most six elements (at most three adjacencies involved), and then one can conclude that $S'$ has at most $6d$ elements/columns and $12d^2+d$ rows. 

\bigskip

Finally, we discuss a kernelization for the {\sc $d$-Short-Block-Move Median} problem.
As previously discussed,  Rules 1-5 described for the breakpoint distance can be also applied to any metric where the existence of a breakpoint certifies the existence of a move to be ``played'' in order to obtain the identity. Thus, they work for the short-block-move distance as well.  However, unlike with the {\sc $d$-Block-interchange Median} problem and the {\sc $d$-Transposi\-tion Median} problem, an immediate analogue of Rule 6 does not apply to the short-block-move distance, because it may be necessary to traverse some positions to get an element from one point to another, temporarily breaking some ``good'' adjacencies.
To get around this problem, we introduce the notion of homogeneous columns. 

A column of an input matrix/set $S$ is \textit{homogeneous} if it contains only one element, and \textit{heterogeneous} otherwise. Note that the existence of a heterogeneous column implies the existence of a move involving such a column. Since we are looking for a permutation $s^{\star}$ whose sum of distances from the input permutations is at most $d$, it follows that $S$ contains at most $3d$ heterogeneous columns. So, either we have already a kernel or too many homogeneous columns where many of them are not involved in moves needed for the calculation of the distance between $s^{\star}$ and any $s\in S$. Then, an analogous of Rule 6 for this problem must identify these homogeneous columns, remove them, and reduce the universe properly. Due to \autoref{lm:sbmReduction}, we can safely apply the following reduction rule.

\begin{itemize}
    \item[$\star$] If there is an interval $I$ with $6d+2$ consecutive homogeneous columns, then remove the middle columns of $I$ and reduce the universe size accordingly. Repeat this until there is no such interval.
\end{itemize}

After applying the above rule, we claim that the number of columns of a {\tt yes}-instance is at most $18d^2+9d+1$, because it has at most $3d$ heterogeneous columns and a sequence of at most $6d+1$ homogeneous columns before/after a heterogeneous one. This remark together with the reduction rules applied implies that the number of rows is at most $36d^3+18d^2+3d$. This concludes the existence of a polynomial kernel for the {\sc $d$-Short-Block-Move Median} problem.
~\qed\end{proof}

\section{Proofs deferred from \autoref{sec:closest}}
\label{ap:closest-problems}

In this appendix we provide the proofs deferred from \autoref{sec:closest}. For the sake of readability, we restate the corresponding result before proving it. 

\bigskip

\noindent
\autoref{lm:sumequal}.
\emph{
Given permutations $\pi^1$ and $\pi^2$, we have that $d_{\sf BI}(\pi^{1,2}) = d_{\sf BI}(\pi^1) + d_{\sf BI}(\pi^2)$.
}
\begin{proof}
Assuming that $\pi^1$ has $p$ elements and $\pi^2$ has $q$ elements, 
since $p+1$ is greater than all elements of $\pi^1$ and smaller than all elements of $\pi^2$, the reality and desire diagram $G(\pi^1 \uplus \pi^2)$ is obtained by gluing $G(\pi^1)$ and $G(\pi^2)$, i.e., the reality and the desire edges do not change when the union operation is applied to permutations. 
As direct consequence of \autoref{thm:BIdistance}, we have $d_{\sf BI}(\pi^{1,2}) = \frac{p + q + 2 - c(G(\pi^1)) -c(G(\pi^2)}{2} = d_{\sf BI}(\pi^1) + d_{\sf BI}(\pi^2)$.
~\qed\end{proof}

\bigskip

\noindent
\autoref{thm:closestmedian}.
\emph{
For any three permutations $\pi^1, \pi^2, \pi^3$, we have that $\sigma$ is solution of $\bimp$ if and only if $\biguplus\limits_{i=1}^{6} \sigma$ is solution of $\bicp$ for the permutations 
$\pi^{1,2,3,1,2,3}, \pi^{2,1,1,3,3,2}$, and $\pi^{3,3,2,2,1,1}$.
}
\begin{proof}
Since permutations $\pi^{1,2,3,1,2,3}, \pi^{2,1,1,3,3,2}$, and $\pi^{3,3,2,2,1,1}$ are composed by six parts of unions and, considering $\bicp$, each part corresponds to columns yielding $\pi^1$, $\pi^2$, and~$\pi^3$. 
Moreover, by \autoref{lm:sumequal} each part can be treated separately without loss of optimality. 
Hence, there is a solution of $\bicp$ where all parts have the same solution $\delta$. 
Therefore, there is a permutation $x\in \{\pi^{1,2,3,1,2,3}, \pi^{2,1,1,3,3,2}, \allowbreak\pi^{3,3,2,2,1,1}\}$ such that $d_{\sf BI}(\biguplus\limits_{i=1}^{6}\delta, x) = 2(d_{\sf BI}(\delta, \pi^1)+ d_{\sf BI}(\delta, \pi^2)+ d_{\sf BI}(\delta, \pi^3))$. 
Since we want $\delta$ such that $d_{\sf BI}(\biguplus\limits_{i=1}^{6}\delta, x)$ is minimized, we want $\delta$ such that $d_{\sf BI}(\delta, \pi^1)+ d_{\sf BI}(\delta, \pi^2)+ d_{\sf BI}(\delta, \pi^3)$ is minimized. Hence, this happens if and only if $\delta = \sigma$, where $\sigma$ is the solution of $\bimp$.
~\qed\end{proof}

\bigskip

\noindent
\autoref{cor:TranspClosest3NPc}.
\emph{
{\sc Transposition Closest} is \NP-hard even when the input consists of three permutations which are unions of hurdle-free permutations.
}
\begin{proof}
Bader~\cite{bader2011transposition} proved that {\sc Transposition Median} is \NP-hard when $k=3$ even for hurdle-free permutations, i.e., permutations in which the transposition distances are equal to the lower bound of \autoref{Tdistance}. 
Since the distance of unions of hurdle-free permutations can be obtained by the sum of the distances of each part of the union, \autoref{thm:closestmedian} holds in the same~way.
~\qed\end{proof}

\bigskip

\noindent
\autoref{thm:sorting-each-cc-separately-is-optimal}.
\emph{
 For every permutation $\pi$, sorting each connected component of $\pi$ separately is optimal.
}
\begin{proof}
We allow ourselves to use merging moves, which can be replaced by correcting moves as in \autoref{tab:replacing-beta-is}. The modified sequence is not longer than the original, and we observe that these new moves never merge components.

A merging move must act on contiguous components of $\pi$. Let us assume that the leftmost component the move acts on ends with elements $a$ and $b$, and that the rightmost component starts with elements $c$ and $d$, as represented below:

\begin{center}
 \begin{tikzpicture}
\node[anchor = base] (a) at (-1.5, 0) {$a$};
\node[anchor = base] (b) at (-1, 0) {$b$};
\node[anchor = base] (c) at (0, 0) {$c$};
\node[anchor = base] (d) at (.5, 0) {$d$};

\draw (-2.5, .4) -- (-.6, .4) -- (-.6, -.1) -- (-2.5, -.1);
\draw (1.5, .4) -- (-.4, .4) -- (-.4, -.1) -- (-.4, -.1) -- (1.5,-.1);

\end{tikzpicture}
\end{center}

It implies that $a<c, a<d, b<c$, and $b<d$. 
We now replace any merging move involving those component's extremities with correcting moves. There are five cases to consider:

\begin{itemize}
\item[$\bullet$]  $\underline{a}\ \underline{b\ c}\ d \rightarrow b\ c\ a\ d$: this move satisfies the conditions of case 2 in 
\autoref{tab:replacing-beta-is}, 
so we replace it 
with 
$\underline{a}\ \underline{b}\ c\ d \rightarrow b\ a\ c\ d$.

\item[$\bullet$] $\underline{a\ b}\ \underline{c}\ d \rightarrow c\ a\ b\ d$: this move satisfies the conditions of case 4 in 
\autoref{tab:replacing-beta-is}, 
so we replace it
with 
$\underline{a}\ \underline{b}\ c\ d \rightarrow b\ a\ c\ d$.

\item[$\bullet$] $a\ \underline{b}\ \underline{c}\ d \rightarrow a\ c\ b\ d$: this move satisfies the conditions of case 1 in 
\autoref{tab:replacing-beta-is}, and in this case we just remove that block-move from the sorting sequence.

\item[$\bullet$]  $a\ \underline{b}\ \underline{c\ d} \rightarrow a\ c\ d\ b$: this move satisfies the conditions of case 5 in 
\autoref{tab:replacing-beta-is}, 
so we replace it 
with 
$a\ b\ \underline{c}\ \underline{d} \rightarrow a\ b\ d\ c$.

\item[$\bullet$] $a\ \underline{b\ c}\ \underline{d} \rightarrow a\ d\ b\ c$: this move satisfies the conditions of case 3 in 
\autoref{tab:replacing-beta-is}, 
so we replace it 
with 
$a\ b\ \underline{c}\ \underline{d} \rightarrow a\ b\ d\ c$.
\end{itemize}

\noindent None of the correcting moves that we use to replace the non-correcting moves in those five cases is a merging move, and no such replacement increases the length of our sorting sequence. Given any sorting sequence, we repeatedly apply the above transformation to the merging move with the smallest index until no such move remains; in particular, the transformation applies to optimal sequences as well, and the proof is complete.
~\qed\end{proof}

\bigskip

\noindent
\autoref{lm:iffBI}.
\emph{
Given a set of $k$ permutations obtained by \autoref{alg:sBI}, there is a short-block-move closest permutation with maximum distance at most $d$ if and only if there is a Hamming closest string with maximum distance equal to $d$.
}
\begin{proof}
If $\lambda'$ can be built by \autoref{alg:sBI} for some input string $s'$, then, by \autoref{lm:lengthreduced}, $s'$ is a closest string. 
Otherwise, we search from  left to right along the permutation to find the first position where the corresponding element is different from the one intended to be by the algorithm, which is a position $x \in \{2i-1, 2i\}$. In this case, all elements from position $x$ until the position where the first element $y \in \{2i-1, 2i\}$ appear form inversions with respect to each input permutation, implying the short-block-move distance between the solution $[A\, x\, B\, y\, C]$ and any input greater than the distance between the new permutation $[A\, y\, x\, B\, C]$ and any input permutation, such that $A, B$, and $C$ are blocks of elements. 
By repeating this process, a string agreeing with the algorithm output can be found and, by \autoref{lm:lengthreduced}, a string with maximum distance at most $d$ can be constructed. 
Given a solution string $s$, we obtain the associated permutation $\lambda_s$ given by \autoref{alg:sBI}. By \autoref{lm:lengthreduced} we have the solution $s$ regarding the closest string corresponding to the permutation $\lambda_s$ with the same value of maximum distance~$d$, concluding the proof of the lemma.
~\qed\end{proof}





\end{document}